  \providecommand\BibTeX{{%
    \normalfont B\kern-0.5em{\scshape i\kern-0.25em b}\kern-0.8em\TeX}}}
\definecolor{ColBlue}{RGB}{227, 252, 252}
\definecolor{ColPink}{RGB}{245, 208, 229}
\definecolor{ColYellow}{RGB}{247, 247, 205}
\newcolumntype{b}{>{\columncolor{ColBlue}}c}
\newcolumntype{f}{>{\columncolor{ColYellow}}c}
\newcolumntype{t}{>{\columncolor{ColPink}}c}
\newcommand{\cmark}{\ding{51}}%
\newcommand{\xmark}{\ding{55}}%
\tikzset{B/.style={draw, inner sep=0pt, circle, font=\footnotesize{}, minimum size=16pt}}
\tikzset{E/.style={draw,->}}
\tikzset{Tag Node/.style={draw, circle, inner sep=1pt}}
\tikzset{Tag Edge/.style={draw, thick}}
\tikzset{Tag Loop/.style={draw, thick, in=65, out=115, looseness=5}}
\newcommand{\TwoTagDepGraph}[3]{
    \begin{tikzpicture}[baseline=(1.base)]
        \node[Tag Node] (1) at (210:0.5) {#1};
        \node[Tag Node] (2) at (330:0.5) {#2};
        #3
    \end{tikzpicture}}
\newcommand{\TopConfigNode}[3]{
    \begin{tabular}{c | c }
        #1 & \{ #2 \} \\
        \hline
        \multicolumn{2}{c}{#3}
    \end{tabular}
}
\newcommand{\ConfigurationNode}[3]{
    node { \TopConfigNode{#1}{#2}{#3} } }
\newcommand{\TopDNode}[4]{
    \TopConfigNode{#1}{#2}{#4}
}
\newcommand{\DNode}[5]{
    node (#5) { \TopDNode{#1}{#2}{#3}{#4} }
}
\definecolor{PigBlue}{RGB}{30, 0, 200}
\definecolor{PigRed}{RGB}{150, 0, 0}
\definecolor{PigGreen}{RGB}{0, 128, 0}
\lstdefinelanguage{Flumina}
{
    keywords=[1]{
        if, else, for, in, return, output, new, match, with
    },
    keywordstyle=[1]\bfseries,
    keywords=[2]{
        init,
        depends,
        fork, join,
        update, update_i, update_s, update_o,
        out, out_i,
        pred_i, pred_0, pred_j
    },
    keywordstyle=[2]\color{PigRed},
    keywords=[3]{
        Map, Set, List,
        Event, State, Integer, Key, Out, Tag, Payload, Heartbeat,
        State_i, State_j, State_k, State_0, State_1, State_2,
        State1, State2,
        Pred,
        List
    },
    keywordstyle=[3]\color{PigBlue},
    sensitive=true,
    morestring=[b]',
    morecomment=[l][\color{black!70}]{//},
    basicstyle=\scriptsize\ttfamily
}
\newcommand{\fl}[1]{\text{\lstinline[
    language=Flumina,
    columns=fullflexible,
    basicstyle=\footnotesize\ttfamily
]!#1!}}
\newcommand{\heading}[1]{\medskip\noindent\textbf{#1}\enspace}
\newcommand{\tg}[1]{\texttt{#1}} %
\newcommand{\istate}{\texttt{init}}
\newcommand{\fk}{\texttt{f}}
\newcommand{\jn}{\texttt{j}}
\newcommand{\wire}[3]{\ensuremath{\langle #1, #2, #3 \rangle}}
\newcommand{\semantics}[6]{\wire{#1}{#2}{#3} \ensuremath{\underset{#6}{\overset{#4}{\longrightarrow}}} \wire{#1}{#2}{#5}}
\newcommand{\sem}[1]{[\![#1]\!]}
\newcommand{\wfield}[2]{{#1}.{\mathsf{#2}}}
\newcommand{\ttt}[1]{{\small\texttt{#1}}}
\newenvironment{takeaway}{
\begin{tcolorbox}[colback=blue!5!white,colframe=blue!5!white,arc=0mm,grow to left by=1.5mm,left=1mm,grow to right by=1.5mm,right=1mm,top=.5mm,bottom=.5mm]
}
{
\end{tcolorbox}
}
\newcommand{\sys}{Flumina}
\begin{document}

\title{Stream Processing with Dependency-Guided Synchronization (Extended Version)}

\author{Konstantinos Kallas}
\authornote{Equal contribution.}
\email{kallas@seas.upenn.edu}
\affiliation{%
  \institution{University of Pennsylvania}
  \city{Philadelphia}
  \state{PA}
  \country{USA}
}

\author{Filip Niksic}
\authornotemark[1]
\authornote{now at Google.}
\email{fniksic@gmail.com}
\affiliation{%
  \institution{University of Pennsylvania}
  \city{Philadelphia}
  \state{PA}
  \country{USA}
}

\author{Caleb Stanford}
\authornotemark[1]
\email{castan@cis.upenn.edu}
\affiliation{
  \institution{University of Pennsylvania}
  \city{Philadelphia}
  \state{PA}
  \country{USA}
}

\author{Rajeev Alur}
\email{alur@cis.upenn.edu}
\affiliation{%
  \institution{University of Pennsylvania}
  \city{Philadelphia}
  \state{PA}
  \country{USA}
}

\begin{abstract}
Real-time data processing applications with low latency requirements have led to the increasing popularity of stream processing systems. While such systems offer convenient APIs that can be used to achieve data parallelism automatically, they offer limited support for computations that require synchronization between parallel nodes. In this paper, we propose \emph{dependency-guided synchronization (DGS)}, an alternative programming model
for stateful streaming computations with complex synchronization requirements. In the proposed model, the input is viewed as partially ordered, and the program consists of a set of parallelization constructs which are applied to decompose the partial order and process events independently. Our programming model maps to an execution model called \emph{synchronization plans} which supports synchronization between parallel nodes. Our evaluation shows that APIs offered by two widely used systems---Flink and Timely Dataflow---cannot suitably expose parallelism in some representative applications. In contrast, DGS enables implementations with scalable performance, the resulting synchronization plans offer throughput improvements when implemented manually in existing systems, and the programming overhead is small compared to writing sequential code.
\end{abstract}

\begin{CCSXML}
<ccs2012>
   <concept>
       <concept_id>10011007.10011006.10011008.10011009.10010175</concept_id>
       <concept_desc>Software and its engineering~Parallel programming languages</concept_desc>
       <concept_significance>500</concept_significance>
       </concept>
   <concept>
       <concept_id>10011007.10011006.10011050.10011017</concept_id>
       <concept_desc>Software and its engineering~Domain specific languages</concept_desc>
       <concept_significance>500</concept_significance>
       </concept>
   <concept>
       <concept_id>10002951.10002952.10003190.10010842</concept_id>
       <concept_desc>Information systems~Stream management</concept_desc>
       <concept_significance>300</concept_significance>
       </concept>
 </ccs2012>
\end{CCSXML}

\ccsdesc[500]{Software and its engineering~Parallel programming languages}
\ccsdesc[300]{Software and its engineering~Domain specific languages}
\ccsdesc[300]{Information systems~Stream management}

\keywords{data parallelism, sharding, synchronization, distributed stream processing}

\maketitle

\section{Introduction}
\label{sec:intro}

A wide range of applications in domains such as healthcare,
transportation, and smart homes are increasingly relying on real-time
data analytics with low latency and high throughput requirements.
This has motivated a significant amount of research on \emph{stream
processing}, spanning different layers of abstraction
in the software stack.  At the lowest level, \textbf{\emph{stream
  processing systems}} (e.g.  Flink~\cite{carbone2015flink},
Samza~\cite{Samza2017}, Storm~\cite{Storm}, Spark
Streaming~\cite{DStreams2013}, Trill~\cite{chandramouli2014trill},
Heron~\cite{kulkarni2015twitter-heron},
Beam~\cite{Beam})
handle scheduling, optimizations, and
operational concerns. At the intermediate level, \textbf{\emph{stream processing APIs and programming models}}
(e.g. MapReduce online~\cite{condie2010mapreduce}, SPADE~\cite{gedik2008spade}, SP Calculus~\cite{soule2010universal}, Timely Dataflow~\cite{Timely,murray2013naiad}, StreamIt~\cite{thies2002streamit}, Flink's DataStream API~\cite{Flink}),
usually based on a form of
dataflow~\cite{gilles1974semantics,lee1987synchronous}, abstract the
computation in a way that hides implementation details, while exposing
parallelization information to the underlying system. At the top level, \textbf{\emph{high-level query languages}}
(e.g. Streaming SQL~\cite{jain2008towards,begoli2019one}, SamzaSQL~\cite{pathirage2016samzasql}, Structured Streaming~\cite{armbrust2018structured}, StreamQRE~\cite{mamouras2017streamqre}, CQL~\cite{CQL}, AFAs~\cite{chandramouli2010high})
provide convenient abstractions that are built on top of the streaming APIs.
Of these layers, streaming APIs play a central role in the
successful scaling of applications since their expressiveness restricts
the  available parallelism.  In this paper we focus on
rethinking the dataflow model at this intermediate layer to enable
parallel implementations for a broader range of programs.

The success of stream processing APIs based on the data\-flow model can be attributed to their ability to simplify the task of parallel programming. To accomplish this, most APIs expose a simple but effective model of data-parallelism called \emph{sharding}:
nodes in the dataflow graph are replicated into many parallel instances, each of which
will process a different partition of the input events.
However, while sharding is intuitive for programmers, it also implicitly limits the scope of parallel patterns that
can be expressed. Specifically, it prevents arbitrary
\emph{synchronization across parallel instances}
since it disallows communication between them.
This is limiting in modern applications such as video processing \cite{chienchun2018videoedge} and distributed machine learning \cite{otey2006fast}, since they require both synchronization between nodes and high throughput and could therefore benefit from parallelization.
Further evidence that sharding is limiting in practice
can be found in a collection of feature requests in state-of-the-art stream processing systems~\cite{FLIP8RescalableNonPartitionedStateApacheFlinkApacheSoftwareFoundation-2020-05-27,SEP27SideInputsforLocalStoresApacheSamzaApacheSoftwareFoundation-2020-05-27,KIP114KTablestatestoresandimprovedsemanticsApacheKafkaApacheSoftwareFoundation-2020-05-27}, asking either for state management that goes beyond replication or for some form of communication between shards.
To address these needs, system developers have introduced
extensions to the dataflow model to enable specific use cases such as  \emph{message broadcasting} and \emph{iterative dataflows}.
However, existing solutions do not generalize,
as we demonstrate experimentally in \Cref{ssec:eval-existing-implementations}.
For the remainder of applications, users are left with two unsatisfying solutions: either ignore parallelization potential, implementing their application
with limited parallelism; or circumvent the stream processing APIs using low-level external mechanisms to achieve synchronization between parallel instances.

For example, consider a fraud detection application where the input is a distributed set of streams of bank transaction events. Suppose we want to build an unsupervised online machine learning model over these events which classifies events as fraudulent based on a combination of \emph{local} (stream-specific) and \emph{global} (across-streams) statistical summaries.
The problem with the traditional approach is that when classifying a new event, we need access to both the local and the global summaries; but this cannot be achieved using sharding since by default shards do not have access to a global summary.
One extension to the dataflow model, implemented in some systems~\cite{Flink,Timely} is the \emph{broadcast} pattern, which allows the operator computing the global summary to broadcast to all other nodes.
However, broadcasting is restricted since it does not allow bidirectional communication;
the global summary needs to be both broadcast to all shards, but also updated by all shards.
Cyclic dataflows are another partial solution, but do not always solve the problem, as we show in \Cref{ssec:eval-existing-implementations}.
In practice, applications like this one with complex synchronization requirements opt to  manually
implement the required synchronization using
external mechanisms (e.g. querying a separate a key-value store with strong consistency guarantees). This is error prone and, more importantly, violates the requirements of many streaming APIs that operators need to be effect-free so that the underlying system can provide exactly-once execution guarantees in the presence of faults.

To address the need to combine parallelism with synchronization, we make two contributions. First, we propose \emph{synchronization plans}, a tree-based execution model which is a restricted form of communicating sequential processes~\cite{hoare1978communicating}.
Synchronization plans are hierarchical structures that represent concurrent computation in which parallel nodes are not completely independent, but communicate with their ancestors on special synchronizing events.
While this solves the problem of being able to express synchronizing parallel computations, we still need a streaming API which exposes such parallelism implicitly rather than explicitly.
For this purpose, we propose \emph{dependency-guided synchronization} (DGS), a parallel programming model which can be mapped automatically to synchronization plans.

A DGS program consists of three components. First, the user provides a \emph{sequential implementation} of the computation; this serves to define the semantics of what they want to compute assuming the input is processed as a sequence of events.
Second, the user indicates which input events can be processed in parallel and which require synchronization by providing a \emph{dependence relation} on input events.
This relation induces a partial order on the input stream.
For example, if events can be processed completely in parallel without any synchronization, then all input events can be specified to be independent.
Third, the user provides a mechanism for parallelizing state when the input stream contains independent events: \emph{parallelization primitives} called \emph{fork} and \emph{join}. This model is inspired by classical parallel programming, but has streaming-specific semantics which describes how a partially ordered input stream is decomposed for parallel processing.

Given a DGS program, the main technical challenge is to generate
a synchronization plan, which corresponds to a concrete implementation, that is both \emph{correct} and \emph{efficient}. More precisely, the challenge lies in ensuring that a derived implementation correctly enforces the specified input dependence relation. To achieve correctness, we formalize: (i) a set of conditions that ensure that a program is consistent, and (ii) a notion of $P$-\emph{valid} synchronization plans, i.e., plans that are well-typed with respect to a given program $P$.
To achieve efficiency, we design the framework so that correctness is independent of \emph{which} synchronization plan is chosen---as long as it is $P$-valid.
The idea of this separation is to enable future work on optimized query execution, in which an optimizing component searches for an efficient synchronization plan maximizing a desired cost metric without jeopardizing correctness.
We tie everything together by proving that the end-to-end system is correct,
that is, any concrete implementation that corresponds to an $P$-valid plan is equivalent to a program $P$ that satisfies the consistency conditions.

In order to evaluate DGS, we perform a set of experiments to investigate the data parallelism limitations of Flink~\cite{carbone2015flink}---a representative high-performance stream processing system---and
Timely Dataflow~\cite{murray2013naiad}---a representative system with iterative computation.
We show that these limits can be partly overcome by manually implementing synchronization. However, this comes at a cost: the code has to be specialized
to the number of parallel nodes and similar implementation details, forcing
the user to sacrifice the desirable benefit of \emph{platform independence}.
We then develop \sys{}, an end-to-end prototype that implements DGS in Erlang~\cite{armstrong1993erlang}, and show that it can automatically produce scalable implementations (through generating synchronization plans from the program) independent of parallelism.

This is the extended version of the paper: it includes additional supplementary material as appendices at the end.
In \Cref{appendix:case-studies},
we evaluate \emph{programmability} of DGS via two real-world case studies.
In particular,
we demonstrate that the effort required---as measured by lines of code---to achieve parallelism is minimal compared to the sequential implementation.
In \Cref{appendix:optimizer},
we describe our current optimizer (how we choose a good synchronization
plan), which is based on minimizing communication between nodes.
In \Cref{appendix:correctness},
we provide the parts of the correctness proof which are omitted from the paper.
In \Cref{appendix:flumina-impl},
we provide additional system details in \sys{}:
specifically, we evaluated its latency costs and implemented a basic
state checkpointing mechanism.
Finally, in \Cref{appendix:flumina-code}, \Cref{appendix:timely-code}, and \Cref{appendix:flink-code}, we provide code excerpts of the programs written in \sys{}, Timely, and Flink, respectively, for our evaluation.

In summary, we make the following contributions:
\begin{itemize}
\item
\emph{DGS:} a novel programming model for parallel streaming computations that require synchronization, which allows viewing input streams as partially ordered sets of events. (\Cref{sec:prog-model})
\item
\emph{Synchronization plans:} a tree-based execution model for parallel streaming computations that require synchronization, a framework for generating a synchronization plan given a DGS program, a prototype implementation, and an end-to-end proof of correctness.
(\Cref{sec:dist-impl})
\item
An evaluation that demonstrates: (i) the throughput limits of automatically scaling computations on examples which require synchronization in Flink and Timely; (ii) the throughput and scalability benefits achieved by synchronization plans over such automatically scaling computations; and (iii) the programmability benefits of DGS for synchronization-centered applications
(\Cref{sec:evaluation}).
\end{itemize}
\sys{}, our implementation of DGS, is open-source and available
at \href{https://github.com/angelhof/flumina}{github.com/angelhof/flumina}.

\section{Dependency-Guided Synchronization}
\label{sec:prog-model}

A DGS program consists of three components: a
\emph{sequential implementation}, a \emph{dependence relation} on input
events to enforce synchronization, and \emph{fork} and \emph{join}
parallelization primitives.
In \Cref{ssec:prog-model-correctness} we define program \emph{consistency}, which are requirements on the \emph{fork} and \emph{join} functions to ensure that any parallel implementation generated from the program is equivalent to the sequential one.

\subsection{DGS programs}
\label{ssec:prog-model-walkthrough}

For a simple but illustrative example, suppose that we want to
implement a stream processing application that simulates a map from
keys to counters, in which there are two types of input events: \emph{increment}
events, denoted $\tg{i}(k)$, and \emph{read-reset} events, denoted
$\tg{r}(k)$, where each event has an associated key $k$.  On each
increment event, the counter associated with that key should be
increased by one, and on each read-reset event, the current value of
the counter should be produced as output, and then the counter should be reset to
zero.

\heading{Sequential implementation.}
\label{p:seq-impl}
In our programming model, the user first provides a sequential
implementation of the desired computation.
A pseudocode
version of the sequential implementation for the
example above is shown in \Cref{fig:key-value-store} (left);
Erlang syntax has been edited for readability, and
we use \fl{s[k]} as shorthand for the value associated with the key $k$ in the map
\emph{or} the default value $0$ if it is not present.
It consists of
(i) the state type \fl{State}, i.e. the map from keys to
counters, (ii) the initial value of the state \fl{init},
i.e. an empty map with no keys, and (iii) a function \fl{update},
which contains the logic for processing input events.
Conceptually, the sequential implementation describes how to process
the data assuming it was all combined into a single sequential stream (e.g.,
sorted by system timestamp).  For example, if the input stream
consists of the events $\tg{i}(1), \tg{i}(2), \tg{r}(1), \tg{i}(2),
\tg{r}(1)$, then the output would be $1$ followed by $0$, produced by
the two $\tg{r}(1)$ (read-reset) events.

\begin{figure}[t]
\centering \footnotesize{}
\begin{minipage}{0.38\columnwidth}
\begin{FluminaCode}

// Types
Key = Integer
Event = i(Key) | r(Key)
State = Map(Key, Integer)
Pred = Event -> Bool

// Sequential Code
init: () -> State
init() =
    return emptyMap()
update: (State, Event)
        -> State
update(s, (i(k), ())) =
    s[k] = s[k] + 1;
    return s
update(s, (r(k), ())) =
    output s[k];
    s[k] = 0;
    return s
\end{FluminaCode}
\end{minipage}
\begin{minipage}{0.6\columnwidth}
\begin{FluminaCode}
// Dependence Relation
depends: (Event, Event) -> Bool
depends(r(k1), r(k2)) = k1 == k2
depends(r(k1), i(k2)) = k1 == k2
depends(i(k1), r(k2)) = k1 == k2
depends(i(k1), i(k2)) = false
\end{FluminaCode}

\scriptsize{}
\centering
\begin{tabular}{ | c c | }
    \hline
    \TwoTagDepGraph{\tg{r}(1)}{\tg{i}(1)}{\draw (1) edge[Tag Loop] (1); \draw (1) edge[Tag Edge] (2);}
  & \TwoTagDepGraph{\tg{r}(2)}{\tg{i}(2)}{\draw (1) edge[Tag Loop] (1); \draw (1) edge[Tag Edge] (2);} \\
    \multicolumn{2}{ | c | }{$\cdots$} \\
    \hline
\end{tabular}

\begin{FluminaCode}
// Fork and Join
fork: (State, Pred, Pred)
        -> (State, State)
fork(s, pred1, pred2) =
    // two forked states
    s1 = init(); s2 = init()
    for k in keys(s):
        if pred1(r(k)):
            s1[k] = s[k]
        else:
            // pred2(r(k)) OR
            //   r(k) in neither
            s2[k] = s[k]
    return (s1, s2)
join: (State, State) -> State
join(s1, s2) =
    for k in keys(s2):
        s1[k] = s1[k] + s2[k]
    return s1
\end{FluminaCode}
\end{minipage}

\caption{
  DGS program implementing a map from keys to counters.
  The \fl{depends} relation is visualized as a graph with two keys shown; edges indicate synchronization, while non-edges indicate opportunities for parallelism.
}
\Description{DGS program implementing a map from keys to counters.}
\label{fig:key-value-store}
\end{figure}

\heading{Dependence relation.}
To parallelize a sequential computation, the user needs to provide a
dependence relation which encodes which events are independent, and
thus can be processed in parallel, and which events are dependent, and
therefore require synchronization.  The dependence relation abstractly
captures all the dependency patterns that appear in an application,
inducing a partial order on input events. In this example, there are
two forms of independence we want to expose. To begin with,
\emph{parallelization by key} is possible: the counter map could be
partitioned so that events corresponding to different sets of keys are processed
independently. Moreover, each event is processed atomically in our model,
and therefore \emph{parallelizing increments} on the
counter of the same key is also possible. In particular, different
sets of increments for the same key can be processed independently; we
only need to aggregate the independent counts when a read-reset
operation arrives. On the other hand, read-reset events
are synchronizing for a particular key;
their output is affected by the processing of increments
as well as other read-reset events of that key.

We capture this combination of parallelization and synchronization
requirements by defining the dependence relation
\fl{depends} in \Cref{fig:key-value-store}
(also visualized as a graph)
(see \Cref{ssec:prog-model-formal} for a formal definition).
In the program, the set of events may be \emph{symbolic}
(infinite): here \fl{Event} is parameterized by an integer \fl{Key}.
To allow for this, the dependence relation
is formally a \emph{predicate} on pairs of events,
and is given programmatically as a function from pairs
of \fl{Event} to \fl{Bool}.
For example, \fl{depends(r(k1), r(k2))} (one of four cases)
is given symbolically as equality comparison of keys, \fl{k1 == k2}.
The dependence relation should also be \emph{symmetric},
i.e. \fl{e1} is in \fl{depends(e2)} iff \fl{e2} is in \fl{depends(e1)};
the intuition is that \fl{e1} can be processed in parallel with \fl{e2} iff \fl{e2} can be processed in parallel with \fl{e1}.

\heading{Parallelization primitives: \emph{fork} and \emph{join}.}
\label{p:fork-join}
While the dependence relation indicates the possibility of
parallelization, it does not provide a mechanism for parallelizing
state.  The parallelization is specified using a pair of functions to \fl{fork}
one state into two, and to \fl{join} two states into one.
The
fork function additionally takes as input two predicates of events,
such that the two predicates are \emph{independent} (but not necessarily disjoint):
every event satisfying \fl{pred1} is independent of every event satisfying \fl{pred2}.
The
contract is that after the state is forked into two independent
states, each state will then \emph{only be updated using events satisfying
the given predicate.}
A fork-join pair for our example
is shown in~\Cref{fig:key-value-store}.
The \fl{join} function simply adds up the counts for each key to form the combined
state. The \fl{fork} function has to decide, for each key, which forked state
to partition the count to.
Since read-reset operations \fl{r(k)} are synchronizing,
  i.e., depend on all events of the same key,
  and require knowing the total count,
it partitions by checking
which of the two forked states is responsible for processing read-reset
operations, if any.

The programming model \emph{exposes} parallelism,
but the implementation (\Cref{sec:dist-impl}) determines when to call forks and
joins.
To do this, the implementation instantiates a synchronization plan:
  a tree structure where each node is a stateful worker with a predicate indicating the set of events that it is responsible for.
Nodes that do not have an ancestor-descendant relationship process independent but not necessarily disjoint sets of events.
When a node with children needs to process an event,
  it first uses \fl{join} to merge the states of its children,
  and then it \fl{fork}s back its children states using the combined predicates of its descendants, \fl{pred1} for the left subtree, and \fl{pred2} for the right subtree.
The implementation can therefore instantiate synchronization plans with different shapes and predicates to enable different kinds of parallelism.
For example, to indicate
\emph{parallelization by key},
the left child with \fl{pred1} might contain all events of key $1$
and the right child with \fl{pred2} might contain all events of key $2$.
On the other hand, to indicate \emph{parallelization on increments},
\fl{pred1} and \fl{pred2} might both contain \fl{i(3)}, and in this case neither would contain \fl{r(3)} (to satisfy the independence requirement).
The latter example also emphasizes that \fl{pred1} and \fl{pred2} need not be disjoint, nor need they collectively cover all events.
For the events not covered, in this case \fl{r(3)}, a join would need to be called before an \fl{r(3)} event can be processed.
Parallelization can also be done repeatedly;
the fork function can be called again on a forked state to fork it into two sub-states, and each time the predicates \fl{pred1} and \fl{pred2} will be even further restricted.

\subsection{Formal definition}
\label{ssec:prog-model-formal}

A DGS program can be more general than we have discussed so far,
because we allow for multiple \emph{state types}, instead of just
one. The initial state must be of a certain type, but forks and joins
can convert from one state type to another: for example, forking a
pair into its two components.
Additionally,
each state type can come with a \emph{predicate} which restricts
the allowed events processed by a state of that type.
The complete programming model is summarized in the following
definition.

\begin{definition}[DGS program]
\label{def:prog-model}
Let \fl{Pred(T)} be a given type of \emph{predicates}
on a type \fl{T},
where predicates can be evaluated as functions \fl{T -> Bool}.
A program consists of the following components:
\begin{enumerate}
\item A type of input events \fl{Event}.
\item The dependence relation \fl{depends: Pred(Event, Event)},
which is symmetric: \fl{depends(e1, e2)}
iff \fl{depends(e2, e1)}.
\item A type for output events \fl{Out}.
\item Finitely many state types \fl{State_0}, \fl{State_1}, etc.
\item For each state type \fl{State_i},
a predicate which specifies which input values this type of state can process,
denoted \fl{pred_i: Pred(Event)}. We require \fl{pred_0} $=$ \fl{true}.
\item A sequential implementation, consisting of a single initial state \fl{init: State_0} and for each state type \fl{State_i},
a function \fl{update_i:} \fl{(State_i, Event) -> State_i}.
The update also produces zero or more outputs, given by a function
\fl{out_i: (State_i, Event) -> List(Out)}.
\item A set of parallelization primitives, where each is either a \emph{fork} or a \emph{join}. A fork has type
\[
\fl{(State_i, Pred(Event), Pred(Event)) -> (State_j, State_k)},
\]
and a join has type
\fl{(State_j, State_k) -> State_i}, for some $i, j,$ and $k$.
\end{enumerate}
\end{definition}

\begin{figure}
\centering \scriptsize{}
\scalebox{0.8}{
\begin{tikzpicture}[node distance=0.9cm and 0.9cm, on grid]
\node[B] (0) {$\istate$};
\node[B,right = of 0] (1) {${\tg{r}(1)}$};
\node[B,right = of 1] (2) {$\fk$};
\node[B,below right = of 2] (2b) {${\tg{i}(1)}$};
\node[B,right = of 2b] (3) {$\fk$};
\node[B,above = of 3] (2a) {${\tg{i}(1)}$};
\node[B,below right = of 3] (3b) {${\tg{i}(1)}$};
\node[B,above right = of 3b] (4) {$\jn$};
\node[B,above right = of 4] (5) {$\jn$};
\node[B,right = of 5] (6) {${\tg{r}(1)}$};
\coordinate[right=0.5cm of 6] (7);
\node[B,above = of 0] (t0) {$\istate$};
\node[B,right = of t0] (t1) {${\tg{r}(1)}$};
\node[B,above right = of 2] (t2) {${\tg{i}(1)}$};
\node[B,right = of t2] (t3) {${\tg{i}(1)}$};
\node[B,right = of t3] (t4) {${\tg{i}(1)}$};
\node[B,above = of 6] (t5) {${\tg{r}(1)}$};
\coordinate[right=0.5cm of t5] (t6);
\draw[E] (0) -- (1);
\draw[E] (1) -- (2);
\draw[E] (2) -- (2a);
\draw[E] (2) |- (2b);
\draw[E] (2b) -- (3);
\draw[E] (3) -- (4);
\draw[E] (3) |- (3b);
\draw[E] (3b) -| (4);
\draw[E] (4) -| (5);
\draw[E] (2a) -- (5);
\draw[E] (5) -- (6);
\draw[E] (6) -- (7);
\draw[E] (t0) -- (t1);
\draw[E] (t1) -- (t2);
\draw[E] (t2) -- (t3);
\draw[E] (t3) -- (t4);
\draw[E] (t4) -- (t5);
\draw[E] (t5) -- (t6);
\end{tikzpicture}
}
\caption{Example of a sequential (top) and parallel (bottom) execution of the program in \Cref{fig:key-value-store} on the input stream $\tg{r}(1), \tg{i}(1), \tg{i}(1), \tg{i}(1), \tg{r}(1)$
($\fk$ and $\jn$ denote forks and joins).
}
\Description{Example of a sequential (top) and parallel (bottom) execution of the DGS program implementing a map from keys to counters.}
\label{fig:example-wire-diagram}
\end{figure}

\heading{Semantics.}
The semantics of a program
can be visualized using wire
diagrams, as in \Cref{fig:example-wire-diagram}. Computation proceeds from left to right. Each wire is associated with
(i) a state (of type \fl{State_i} for some $i$) and
(ii) a predicate (of type \fl{Pred(Event)})
which restricts the input events that this wire can process.
Input events are processed as
updates to the state, which means they take one input wire and produce
one output wire, while forks take one input wire and produce two, and
joins take two input wires and produce one. Notice that the same updates are present in both sequential and parallel executions. It is guaranteed in the
parallel execution that \emph{fork and join come in pairs}, like
matched parentheses. Each predicate that is given as input to the \fl{fork} function indicates the set of input events that can be processed along one of the outgoing wires.
Additionally, we require that updates on parallel wires must be
on independent events.
In the example,
the wire is forked into two parts and then forked again, and all three
resulting wires process $\tg{i}(1)$ events. Note that $\tg{r}(1)$
events cannot be processed at that time because they are dependent on
$\tg{i}(1)$ events.
More specifically, we require that
the predicate at each wire of type \fl{State_i}
implies \fl{pred_i}, and that
after each \fl{fork} call,
the predicates at each resulting wire denote independent sets of events.
This semantics is formalized in the following definition.

\begin{definition}[DGS Semantics]
\label{def:prog-model-semantics}
A \emph{wire} is a triple written
using the notation
\wire{\fl{State_i}}{\fl{pred}}{\fl{s}},
where
\fl{State_i} is a state type,
\fl{s: State_i},
and \fl{pred: Pred(Event)}
is a predicate
such that \fl{pred} implies \fl{pred_i}.
We give the semantics of a program through an
inductively defined relation, which we denote
\semantics{\fl{State}}{\fl{pred}}{\fl{s}}{\fl{u}}{\fl{s'}}{\fl{v}},
where
\wire{\fl{State}}{\fl{pred}}{\fl{s}} and \wire{\fl{State}}{\fl{pred}}{\fl{s'}}
are the starting and ending wires
(with the same state type and predicate),
\fl{u: List(Event)} is an input stream,
and \fl{v: List(Out)} is an output stream.
Let \fl{l1 + l2} be list concatenation, and
define \fl{inter(l, l1, l2)} if \fl{l} is some interleaving of \fl{l1} and \fl{l2}.
For \fl{e1}, \fl{e2: Event},
let \fl{indep(e1, e2)} denote that
\fl{e1} and \fl{e2} are not dependent, i.e.
\fl{not(depends(}{\fl{e1}}\fl{,}{\fl{e2}}\fl{))}.
There are two base cases and two inductive cases.
(1)
For any \fl{State}, \fl{pred}, \fl{s},
\semantics{\fl{State}}{\fl{pred}}{\fl{s}}{\fl{[]}}{\fl{s}}{\fl{[]}}.
(2)
For any \fl{State}, \fl{pred}, \fl{s}, and any \fl{e: Event},
if \fl{e} satisfies \fl{pred} then
\semantics{\fl{State}}{\fl{pred}}{\fl{s}}{\fl{[e]}}{\fl{update(s, e)}}{\fl{out(s, e)}}.
(3)
For any \fl{State}, \fl{pred}, \fl{s}, \fl{s'}, \fl{s''},
\fl{u}, \fl{v}, \fl{u'}, and \fl{v'}, if
\semantics{\fl{State}}{\fl{pred}}{\fl{s}}{\fl{u}}{\fl{s'}}{\fl{v}}
and
\semantics{\fl{State}}{\fl{pred}}{\fl{s'}}{\fl{u'}}{\fl{s''}}{\fl{v'}},
then
\semantics{\fl{State}}{\fl{pred}}{\fl{s}}{\fl{u + u'}}{\fl{s''}}{\fl{v + v'}}.
(4)
Lastly, for any instances of
\fl{State}, \fl{State1}, \fl{State2},
\fl{pred}, \fl{pred1}, \fl{pred2},
\fl{s},
\fl{s1'}, \fl{s2'},
\fl{u}, \fl{u1}, \fl{u2}, \fl{v}, \fl{v1}, \fl{v2},
\fl{fork}, and \fl{join},
suppose that
(the conjunction) \fl{pred1(e1)} and \fl{pred2(e2)} implies
\fl{indep(e1, e2)},
\fl{pred1} implies \fl{pred}, and
\fl{pred2} implies \fl{pred}.
Let
\fl{fork(s, pred1, pred2) =} \fl{(s1, s2)}
and
\fl{join(s1', s2') = s'}.
If we have
\fl{inter(u, u1, u2)},
\fl{inter(v, v1, v2)},
\semantics{\fl{State1}}{\fl{pred1}}{\fl{s1}}{\fl{u1}}{\fl{s1'}}{\fl{v1}},
and
\semantics{\fl{State2}}{\fl{pred2}}{\fl{s2}}{\fl{u2}}{\fl{s2'}}{\fl{v2}},
then
\semantics{\fl{State}}{\fl{pred}}{\fl{s}}{\fl{u}}{\fl{s'}}{\fl{v}}.

Finally, the \emph{semantics} $\sem{P}$ of the program $P$
is the set of pairs $(\fl{u}, \fl{v})$
of an input stream \fl{u} and an output stream \fl{v} such that
\semantics{\fl{State_0}}{\fl{true}}{\fl{init}}{\fl{u}}{\fl{s'}}{\fl{v}}
for some \fl{s'}.
\end{definition}

\heading{Representing predicates.}
In the running example, a predicate on a type \fl{T} was represented
as a function \fl{T -> Bool}, but note that the programming model above allows other
representation of predicates, for example using logical formulas.
The tradeoff here is that a more general representation allows
more dependence relations to be expressible, but also complicates the
implementation of an appropriate \fl{fork} function as it must accept
as input more general input predicates.
In our implementation (see \Cref{sec:dist-impl}), we assume
that an event consists of a pair of a \fl{Tag} (relevant for parallelization) and a \fl{Payload} (used only for processing),
where predicates are given as sets of tags (or pairs of tags, for \fl{depends}).
This allows simpler logic in the fork function whose input predicates
are then \fl{Tag -> Bool} and don't depend on the irrelevant payload.
In our example, $\tg{i}(k)$ or $\tg{r}(k)$ would be tags (not payload)
as they are relevant for parallelization.

\subsection{Consistency conditions}
\label{ssec:prog-model-correctness}

Any parallel execution
is guaranteed to preserve the sequential semantics, i.e. processing
all input events \emph{in order} using the \fl{update} function,
as long as the following \emph{consistency conditions} are
satisfied.
The sufficiency of these conditions is shown in
\Cref{thm:consistency-implies-determinism}, which states
that \emph{consistency implies determinism up to output reordering}.
This is a key step in the end-to-end proof of correctness in \Cref{ssec:proof-of-correctness}.
Consistency can be thought of as analogous to the commutativity and
associativity requirements for a MapReduce program to have
deterministic output~\cite{dean2008map-reduce}: just as with MapReduce
programs, the implementation does \emph{not} assume the conditions are
satisfied, but if not the semantics will be dependent on how the
computation is parallelized.

\begin{definition}[Consistency]
\label{def:prog-model-consistency}
A program is \emph{consistent} if the following equations always hold:
\begin{align*}
\fl{join(update(s1,e),s2)}
    &= \fl{update(join(s1,s2),e)} \tag{C1} \\
\fl{join(fork(s,pred1,pred2))}
    &= \fl{s} \tag{C2} \\
\fl{update(update(s,e1),e2))}
    &= \fl{update(update(s,e2),e1))} \tag{C3}
\end{align*}
subject to the following additional qualifications.
First, equation (C1)
is over all
joins \fl{join: (State_j, State_k) ->} \fl{State_i},
events \fl{e: Event} such that \fl{pred_i(e)} and \fl{pred_j(e)},
and states \fl{s1: State_j}, \fl{s2: State_k},
where \fl{update} denotes the update function on the appropriate type.
Additionally the corresponding output on both sides must be the same:
$\fl{out(s1, e)} = \fl{out(join(s1, s2))}$.
Equation (C2)
is over all fork functions
\fl{fork:} \fl{(State_i,Pred(Event),}\fl{Pred(Event)) -> (State_j, State_k)},
all joins \fl{join: (State_j, State_k) -> State_i},
states \fl{s: State_i}, and predicates \fl{pred1} and \fl{pred2}.
Equation (C3)
is over all state types \fl{State_i},
states \fl{s: State_i},
and pairs of \emph{independent} events
\fl{indep(e1, e2)}
such that \fl{pred_i(e1)} and \fl{pred_i(e2)}.
As with (C1), we also require that the outputs
on both sides agree:
\begin{align*}
&\fl{out(s, e1) + out(update(s, e1), e2)} \\
&\; = \; \fl{out(update(s, e2), e1) + out(s, e2)}.
\end{align*}
\end{definition}

Let us illustrate the consistency conditions for our running example
(\Cref{fig:key-value-store}).
If \fl{e} is an increment event,
then condition (C1) captures the fact that counting can be done in parallel:
it reduces to $\fl{(s1[k] + s2[k]) + 1} = \fl{(s1[k] + 1) + s2[k]}$.
Condition (C2) captures the fact that we preserve total count
across states when forking: it reduces to $\fl{s[k] + 0} = \fl{s[k]}$.
Condition (C3) would not be valid for \emph{general} events
\fl{e1}, \fl{e2}, because a read-reset event does not commute with an increment
of the same key ($\fl{s[k] + 1} \ne \fl{s[k]}$),
hence the restriction that \fl{indep(e1, e2)}.
Finally, one might think that a variant of (C1) should hold for
\fl{fork} in addition to \fl{join},
but this turns out not to be the case:
for example, starting from $\fl{s[k] = 100}$,
an increment followed by a fork might yield the pair of counts
$(101, 0)$, while a fork followed by an increment might yield
$(100, 1)$.
It turns out however that commutativity only with joins, and not with
forks, is enough to imply \Cref{thm:consistency-implies-determinism}.

\begin{theorem}
\label{thm:consistency-implies-determinism}
If $P$ is consistent, then $P$ is deterministic up to output reordering.
That is,
for all $(\fl{u}, \fl{v}) \in \sem{P}$,
the multiset of events in stream $\fl{v}$
is equal to the multiset of events in
$\fl{spec(u)}$
where $\fl{spec}$ is the semantics of the sequential implementation.
\end{theorem}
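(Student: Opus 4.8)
The plan is to prove a stronger, \emph{wire-local} statement by structural induction on the derivation of the semantics relation of \Cref{def:prog-model-semantics}, from which the theorem follows by specializing to the initial wire $\wire{\fl{State_0}}{\fl{true}}{\fl{init}}$. For a state type \fl{State_i}, a state \fl{s: State_i} and an input stream \fl{u}, let $\fl{seq_i(s, u)}$ denote the pair whose first component is the state obtained by folding \fl{update_i} over \fl{u} starting from \fl{s}, and whose second component is the concatenation of the corresponding \fl{out_i} outputs; thus \fl{spec} is the output component of $\fl{seq_0(init, u)}$. The strengthened claim is: whenever $\semantics{\fl{State_i}}{\fl{pred}}{\fl{s}}{\fl{u}}{\fl{s'}}{\fl{v}}$ is derivable, the final state \fl{s'} equals the state component of $\fl{seq_i(s, u)}$, and the multiset of events of \fl{v} equals the multiset of events of the output component of $\fl{seq_i(s, u)}$.

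The two base cases (empty stream; a single \fl{pred}-satisfying event, which holds by definition of $\fl{seq_i}$) are immediate, and the sequential-composition case (rule~(3)) follows from the induction hypothesis together with the facts that a fold over a concatenation is the composition of folds and that taking multisets distributes over concatenation. All the difficulty is concentrated in the fork--join case (rule~(4)), where \fl{u} is some interleaving of substreams \fl{u1}, \fl{u2}, with $\fl{fork(s, pred1, pred2) = (s1, s2)}$, the two sub-derivations yielding $\fl{s1'}, \fl{v1}$ and $\fl{s2'}, \fl{v2}$, and $\fl{join(s1', s2') = s'}$.

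For that case I would proceed through three lemmas and then assemble. (i)~A \emph{commutation lemma}, proved from (C3) (using both its state equation and its output equation): folding \fl{update_i} over two streams related by a transposition of two adjacent, independent events each satisfying \fl{pred_i} produces the same final state and the same output list; since \fl{pred1} and \fl{pred2} each imply \fl{pred}, hence \fl{pred_i}, and since every event of \fl{u1} is independent of every event of \fl{u2} by the hypothesis on the fork predicates, repeatedly swapping a \fl{u1}-event past an adjacent \fl{u2}-event (which preserves the internal order of \fl{u1} and of \fl{u2}, and suffices to transform any interleaving into $\fl{u1 + u2}$) shows that $\fl{seq_i(s, u)}$ and $\fl{seq_i(s, u1 + u2)}$ agree on state and output multiset. (ii)~A \emph{join-pushing lemma}, proved by list induction and repeated use of (C1)---applied to the first argument of \fl{join} while consuming \fl{u1}, and to the second while consuming \fl{u2}, the predicate side-conditions being available because \fl{pred1} implies both \fl{pred_i} and \fl{pred_j} and \fl{pred2} implies both \fl{pred_i} and \fl{pred_k}---showing that $\fl{join}$ applied to the state components of $\fl{seq_j(s1, u1)}$ and $\fl{seq_k(s2, u2)}$ equals the state component of $\fl{seq_i(join(s1, s2), u1 + u2)}$, with the emitted outputs matching (this is where the output side-conditions of (C1) are essential). (iii)~Equation (C2) gives $\fl{join(s1, s2) = s}$. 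Finally, applying the induction hypothesis to the two sub-derivations rewrites $\fl{s1'}, \fl{v1}$ and $\fl{s2'}, \fl{v2}$ via $\fl{seq_j}$ and $\fl{seq_k}$; combined with (ii), then (iii), then (i), this identifies \fl{s'} with the state component of $\fl{seq_i(s, u)}$, and the multiset of \fl{v} (which is the union of the multisets of \fl{v1} and \fl{v2}, since \fl{v} is an interleaving of them) with the multiset of its output component.

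The main obstacle is this fork--join step: it demands carefully threading three independent algebraic facts---interleaving-invariance, compatibility of \fl{join} with \fl{update} on \emph{both} of its arguments, and the fork/join round-trip---while simultaneously tracking states and output multisets, and while checking at each wire that the predicate side-conditions (\fl{pred_i}-membership and independence) needed to invoke (C1) and (C3) actually hold. A secondary point worth making explicit in the write-up is that the conclusion is ``up to reordering'' \emph{solely} because rule~(4) lets \fl{v} be an arbitrary interleaving \fl{inter(v, v1, v2)}: the output side-conditions packaged with (C1) and (C3) are list-level equalities, so every sequential rearrangement above preserves the output list on the nose, and only the free interleaving of the two subtrees' outputs forces the final statement to speak of multisets. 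This statement is then the ``consistency implies determinism'' ingredient invoked in the end-to-end correctness argument of \Cref{ssec:proof-of-correctness}.
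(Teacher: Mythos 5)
Your proposal is correct and follows essentially the same route as the paper's proof: induction on the derivation of the semantics relation of \Cref{def:prog-model-semantics}, with the fork--join case discharged by combining (C1), (C2), and the independence-based commutativity (C3), and with the multiset relaxation traced to the free interleaving of outputs in rule (4). The paper's version is a terse sketch that replaces the parallel wires by sequential ones via the induction hypothesis, repeatedly applies (C1) to the last update to move it outside the join, and collapses with (C2); your three-lemma decomposition (interleaving normalization via (C3), join-pushing via (C1), round-trip via (C2)) is a more explicit rendering of that same argument rather than a genuinely different one.
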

\begin{proof}
We show by induction on the semantics in \Cref{def:prog-model-semantics}
that every wire diagram is equivalent (up to output reordering) to the sequential sequence
of updates.
The sequential inductive step (3) is direct by associativity of function composition
on the left and right sequence of updates (no commutativity of updates is required).
For the parallel inductive step (4), we replace the two parallel wires with sequential wires,
then apply (C1) repeatedly on the last output to move it outside of the parallel wires,
then finally apply (C2) to reduce the now trivial parallel wires to a single wire.
\end{proof}

\section{Synchronization Plans}
\label{sec:dist-impl}

In this section we describe \emph{synchronization plans},
which represent streaming program implementations, and our
framework for generating them from the given DGS program
in \Cref{sec:prog-model}.
Generation of an implementation can be
conceptually split in two parts, the first ensuring correctness and
the second affecting performance. First a program $P$ induces a set of
$P$-\emph{valid}, i.e. correct with respect to it, synchronization
plans. Choosing one of those plans is then an independent optimization
problem that does not affect correctness and can be delegated to a
separate optimization component (\Cref{ssec:optimization-problem}).
Finally, the workers in synchronization plans
need to process some incoming events in order while some can be
processed out of order (depending on the dependence relation).  We
propose a selective reordering technique
(\Cref{ssec:runtime}) that can be used in tandem with heartbeats to
address this ordering issue.
We tie everything together by providing an end-to-end proof that the
implementation is correct with respect to a consistent program
$P$ (and importantly, independent of the synchronization plan chosen
as long as it is $P$-valid) in \Cref{ssec:proof-of-correctness}.
Before describing the separate framework components, we first
articulate the necessary underlying assumptions about input streams in
\Cref{ssec:distributed-assumptions}.

\subsection{Preliminaries}
\label{ssec:distributed-assumptions}

In our model the input is partitioned in some number of input streams that could be distributed, i.e. produced at different locations.
We assume that the implementation has access to \emph{some} ordering
relation $\mathcal{O}$ on pairs of input events (also denoted
$<_{\mathcal{O}}$), and the order of events is increasing along each
input stream. This is necessary for cases where the
\emph{user-written} program requires that events arriving in
different streams are dependent, since it allows the implementation to
have progress and process these dependent events in order.
Concretely, in our setting $\mathcal{O}$ is implemented using \emph{event timestamps}.
Note that these timestamps do not need to correspond to real time, if this is not required by the application.
In cases where real-time timestamps are required, this can be achieved with well-synchronized clocks, as has been done in other systems, e.g. Google Spanner~\cite{corbett2013spanner}.

Each event in each input stream is given by a quadruple $\langle tg, id, ts, v \rangle$,
where $tg$ is a \emph{tag} used for parallelization, $id$ is a unique identifier of the input stream, $ts$ is a \emph{timestamp}, and $v$ is a \emph{payload}.
Of these, only the tag and payload are visible to the programming model
in \Cref{sec:prog-model}, and only the tag is used in predicates and
in the dependence relation.
Our implementation currently requires that the number of possible tags $tg$ is finite (e.g. up to some maximum key) as well as the number of identifiers $id$.

For the rest of this section,
we write events as $\langle \sigma, ts, v \rangle$
where the pair $\sigma = \langle tg, id \rangle$ is called the \emph{implementation tag}.
This is a useful distinction because at the implementation level,
these are the two components that are used for parallelization.
The
relation \fl{depends: (Tag, Tag) -> Bool} in the program straightforwardly lifts to predicates over tags and to implementation tags.

\subsection{Synchronization plans}
\label{ssec:distributed-configurations}

\begin{figure}
\scalebox{0.8}{
\begin{tikzpicture}[sibling distance=11em,
  every node/.style = {shape=rectangle,
    rounded corners,
    draw, align=center}]]
  \node { \TopConfigNode{$w_1$}{}{update -- $\langle$ fork, join $\rangle$} }
    child { \ConfigurationNode{$w_2$}{$\tg{r}(1), \tg{i}(1)$}{update} }
    child { \ConfigurationNode{$w_3$}{$\tg{r}(2)$}{update -- $\langle$ fork, join $\rangle$}
        child { \ConfigurationNode{$w_4$}{$\tg{i}(2)_a$}{update} }
        child { \ConfigurationNode{$w_5$}{$\tg{i}(2)_b$}{update} } };
\end{tikzpicture}
}
\caption{Example synchronization plan derived from the
  program in \Cref{fig:key-value-store} for two keys $k=2$ and
  five input streams $\tg{r}(1), \tg{i}(1), \tg{r}(2), \tg{i}(2)_a, \tg{i}(2)_b$.
  Implementation tags $\tg{i}(2)_a, \tg{i}(2)_b$ both correspond to $\tg{i}(2)$ events but are separate because they arrive in different input streams.
  }
\Description{Example synchronization plan.}
\label{fig:example-configuration}
\end{figure}

Synchronization plans are binary tree structures that encode (i)
parallelism: each node of the tree represents a sequential thread of
computation that processes input events; and (ii) synchronization:
parents have to synchronize with their children to process an event.
Synchronization plans are inspired by prior work on concurrency
models including fork-join concurrency~\cite{frigo1998implementation,lea2000java} and
CSP~\cite{hoare1978communicating}. An example synchronization plan for
the program in \Cref{fig:key-value-store} is shown in
\Cref{fig:example-configuration}.  Each node has an id $w_i$, contains
the set of implementation tags that it is responsible for, a state
type (which is omitted here since there is only one state type
\texttt{State}), and a triple of update, fork, join functions.
Note that a node is responsible to process events from its set of
implementation tags, but can potentially handle all the implementation
tags of its children. The leaves of the tree can process events
independently without blocking, while parent nodes can only process an
input event if their children states are joined. Nodes without a
ancestor-descendant relationship do not directly communicate, but
instead learn about each other when their common ancestor
joins and forks back the state.

\begin{definition}[Synchronization Plans]
  Given a program $P$,
  a synchronization plan is a pair $(\overline{w}, \mathrm{par})$,
  which includes a set of workers $\overline{w} = \{ w_1, \ldots, w_N \}$,
  together with a parent relation $\mathrm{par} \subseteq \overline{w} \times \overline{w}$,
    the transitive closure of which is an ancestor relation denoted as
    $\mathrm{anc} \subseteq \overline{w} \times \overline{w}$.
  Workers have three components:
    (i) a state type $\wfield{w}{state}$ which references one of the state types of $P$,
    (ii) a set of implementation tags $\wfield{w}{itags}$ that the worker is responsible for,
    and (iii) an update $\wfield{w}{update}$ and possibly a fork-join pair $\wfield{w}{fork}$ and $\wfield{w}{join}$ if it has children.
\end{definition}

We now define what it means for a synchronization plan to be $P$-\emph{valid}.
Intuitively, an $P$-valid plan is well-typed with respect to program $P$,
  and the workers that do not have an ancestor-descendant relationship should handle independent and disjoint implementation tags.
$P$-validity is checked syntactically by our framework
  and is a necessary requirement to ensure that the generated implementation is correct (see \Cref{ssec:proof-of-correctness}).

\begin{definition}[$P$-valid]
Formally, a $P$-valid plan has to satisfy the following syntactic properties:
(V1) The state $\fl{State_i} = \wfield{w}{state}$ of each worker $w$ should be
consistent with its update-fork-join triple and its implementation
tags.
The update must be defined on the node state type,
  i.e., $\wfield{w}{update} : \fl{(State_i, Event) -> State_i}$,
    $\fl{State_i}$ should be able to handle the tags corresponding to $\wfield{w}{itags}$,
    and the fork-join pair should be defined for the state types of the node and its children.
(V2) Each pair of nodes that do not have an ancestor-descendant relation, should handle pairwise independent and
disjoint implementation tag sets,
  i.e., $\forall w, w' \not \in \mathrm{anc}(w, w'),
  \wfield{w}{itags} \cap \wfield{w'}{itags} = \emptyset \wedge
  \mathrm{indep}(\wfield{w}{itags}, \wfield{w'}{itags})$.
\end{definition}

\noindent
As an example, the synchronization plan shown in \Cref{fig:example-configuration} satisfies both properties;
    there is only one state type that handles all tags and implementation tag sets are disjoint for ancestor-descendants.
The second property (V2) represents the main idea behind our execution model;
    independent events can be processed by different workers without communication.
Intuitively, in the example in \Cref{fig:example-configuration}, by assigning the
responsibility for handling tag $\tg{r}(2)$ to node $w_3$, its children can
independently process tags $\tg{i}(2)_a, \tg{i}(2)_b$ that are dependent on
$\tg{r}(2)$.

\subsection{Optimization problem}
\label{ssec:optimization-problem}

As described in the previous section, a set of $P$-valid synchronization
plans can be derived from a DGS program $P$. This decouples
the optimization problem of finding a well-performing implementation,
allowing it to be addressed by an independent optimizer,
which takes as input a
description of the available computer nodes and the input streams. This
design means that different optimizers could be implemented for
different performance metrics (e.g. throughput, latency, network load,
energy consumption).
The design space for optimizers is vast and thoroughly exploring it is outside of the scope of this work. For evaluation purposes, we have implemented a few simple optimizers, one of which tries to minimize communication between workers by placing them close to their inputs. Intuitively, it searches for divisions of the event tags into two sets such that those sets are ``maximally'' independent, using those sets of tags for the initial fork, and then recursing on each independent subset.
Its design is described in more detail in \Cref{appendix:optimizer}.

\subsection{Implementation}
\label{ssec:runtime}

Each node of the synchronization plan can be separated into two
components: an event-processing component
responsible for
executing \fl{update},
\fl{fork}, and \fl{join} calls; and a mailbox
component responsible for enforcing ordering requirements and
synchronization.

\heading{Event processing.}
The worker processes execute the functions (\fl{update},
\fl{fork}, and \fl{join}) associated
with the tree node.  Whenever a worker is handed a message
by its mailbox, it first checks if it has any active children, and if
so, it sends them a join request and waits until it receives their
responses. After receiving these responses, it executes the join
function to combine their states, executes the update function on the
received event, and then executes the fork function on the new state,
and sends the resulting states to its children. In contrast, a leaf
worker just executes the update function on the received
event.

\heading{Event reordering.}
The mailbox of each worker ensures that it processes incoming
dependent events in the correct order by implementing the following
selective reordering procedure. Each mailbox contains an event buffer
and a timer for each implementation tag. The buffer holds the events
of a specific tag in increasing order of timestamps and the timer
indicates the latest timestamp that has been encountered for each tag.
When a mailbox receives an event $\langle \sigma, ts, v \rangle$ (or a
join request), it follows the procedure described below. It first
inserts it in the corresponding buffer and updates the timer for
$\sigma$ to the new timestamp $ts$. It then initiates a
cascading process of releasing events with tags $\sigma'$ that depend
on $\sigma$. During that process all dependent tags $\sigma'$ are
added to a dependent tag workset, and the buffer of each tag in the
workset is checked for potential events to release. An event $e =
\langle \sigma, ts, v \rangle$ can be released to the worker process
if two conditions hold. The timers of its dependent tags are higher
than the timestamp $ts$ of the event (which means that the mailbox has
already seen all dependent events up to $ts$, making it safe to
release $e$), and the earliest event in each buffer that $\sigma$
depends on should have a timestamp $ts' > ts$ (so that events are
processed in order). Whenever an event with tag $\sigma$ is released,
all its dependent tags are added to the workset and this process
recurses until the tag workset is empty.

\heading{Heartbeats.}
\label{ssec:heartbeats}
As discussed in \Cref{ssec:distributed-assumptions}, a dependence
between two implementation tags $\sigma_1$ and $\sigma_2$ requires the
implementation to process any event $\langle \sigma_1, t_i, v_i
\rangle$ after processing all events $\langle \sigma_2, t_j, v_j
\rangle$ with $t_j \leq t_i$. However, with the current assumptions on
the input streams, a mailbox has to wait until it receives the
earliest event $\langle \sigma_2, t_j, v_j \rangle$ with $t_j > t_i$,
which could arbitrarily delay event processing. We address this issue
by periodically generating
\emph{heartbeat} events
at each producer,
which are system events that represent
the absence of events on a stream.
Heartbeats are interleaved together with standard events of input streams.
When a heartbeat event $\langle \sigma, t \rangle$ first enters the system,
  it is broadcast to all the worker processes that are descendants of the worker that is responsible for tag $\sigma$.
Each mailbox that receives the heartbeat updates its timers and clears its buffers
  as if it has received an event of $\langle \sigma, t, v \rangle$ without adding the heartbeat to the buffer to be released to the worker process.
Similar mechanisms are often used in other stream processing systems under various names,
  e.g. heartbeats~\cite{heartbeats2005},
  punctuation~\cite{punctuation2003},
  watermarks~\cite{carbone2015flink}, or
  pulses~\cite{schneider2015safeparallelism}.

In our experience, heartbeat rates are successful in improving the latency of the system unless they are configured to be very large or very low values. For a wide range of heartbeat values($\sim$10-1000 per synchronization event), the performance of the system is stable and exhibits minor variance---see more details in \Cref{appendix:flumina-impl}.

\subsection{Proof of correctness}
\label{ssec:proof-of-correctness}

We show that \emph{any} implementation produced by the end-to-end
framework is correct according to the semantics of the programming model (\Cref{theorem:correctness}).
First, \Cref{def:valid-input-instance} formalizes the
assumptions about the input streams outlined in
\Cref{ssec:distributed-assumptions}, and \Cref{def:distr-correctness}
defines what it means for an implementation to be correct with respect
to a sequential specification.
Our definition is inspired by the
classical definitions of distributed correctness based on
observational trace semantics (e.g.,~\cite{lynch1996distributed}).
However, we focus on how to interpret the independent input streams as
a sequential input, in order to model possibly synchronizing and
order-dependent stream processing computations.
The proof of \Cref{theorem:correctness} can be found in \Cref{appendix:correctness}.

\begin{definition}
\label{def:valid-input-instance}
A \emph{valid input instance} consists of
$k$ input streams (finite sequences) \fl{u_1}, \fl{u_2}, \ldots{}, \fl{u_k} of type \fl{List(Event |} \fl{Heartbeat)},
and an order relation $\mathcal{O}$ on input events and heartbeats, with the following properties.
(1) \emph{Monotonicity:} for all $i$, \fl{u_i} is in strictly increasing order according to $\mathrel{<_{\mathcal{O}}}$.
(2) \emph{Progress:} for all $i$, for each input event (non-heartbeat) \fl{x} in \fl{u_i},
for every other stream $j$ there exists an event or heartbeat \fl{y} in \fl{u_j} such that ${\fl{x}} \mathrel{<_{\mathcal{O}}} {\fl{y}}$.
\end{definition}

\noindent
Given a DGS program $P$,
the \emph{sequential specification} is a function \fl{spec:} \fl{List(Event) -> List(Out)}, derived from the sequential implementation
by applying only \fl{update} and no \fl{fork} and \fl{join} calls.
The output specified by \fl{spec} is produced incrementally (or \emph{monotonically}): if \fl{u} is a prefix of \fl{u'}, then \fl{spec(u)} is a subset of \fl{spec(u')}.
Define the \emph{sort} function
${\fl{sort}}_{\mathcal{O}}:$ \fl{List(List(Event |} \fl{Heartbeat)) ->} \fl{List(Event)}
which takes $k$ sorted input event streams and sorts them
into one sequential stream, according to the total order relation $\mathcal{O}$, and drops heartbeat events.

\begin{definition}
\label{def:distr-correctness}
A distributed implementation is \emph{correct} with respect to a given sequential specification \fl{spec:} \fl{List(Event) ->} \fl{List(Out)}, if for every valid input instance $\mathcal{O}$, \fl{u_1}, \ldots, \fl{u_k},
the set of outputs produced by the implementation is equal to
${\fl{set}(\fl{spec}}({\fl{sort}}_{\mathcal{O}}({\fl{u_1}}, \ldots, {\fl{u_k}})))$.
\end{definition}

\begin{theorem}[implementation correctness]
  \label{theorem:correctness} Any implementation produced by our framework is correct according to \Cref{def:distr-correctness}.
\end{theorem}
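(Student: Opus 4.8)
The plan is to connect the distributed implementation to the abstract semantics $\sem{P}$ of \Cref{def:prog-model-semantics}, and then invoke \Cref{thm:consistency-implies-determinism} to conclude correctness with respect to $\fl{spec}$. The argument proceeds in three stages. First, I would establish a \emph{simulation invariant} relating a global snapshot of the running implementation (the contents of all worker states, all mailbox buffers, and all timers) to a partially executed wire diagram: at any point in the execution, the events that have been released to worker processes form a prefix-closed set with respect to the order $\mathcal{O}$ restricted to dependent events, and the collection of worker states is exactly what is obtained by running the semantics of \Cref{def:prog-model-semantics} on the corresponding sub-DAG, where the tree structure of the $P$-valid plan dictates where $\fl{fork}$ and $\fl{join}$ nodes appear. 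The $P$-validity conditions (V1) and (V2) are what make this invariant type-correct and what guarantee that updates placed on sibling subtrees are always on independent events, so that the side conditions of inductive case (4) of \Cref{def:prog-model-semantics} are met.

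Second, I would argue that the selective reordering procedure plus heartbeats implement exactly the order $\fl{sort}_{\mathcal{O}}$ on dependent events: using the \emph{monotonicity} and \emph{progress} properties of \Cref{def:valid-input-instance}, every event is eventually released (no deadlock), and an event $\langle \sigma, ts, v\rangle$ is released only after all $\mathcal{O}$-earlier events on dependent tags have been released. Hence the totally-ordered sequence of updates that actually gets applied along any root-to-leaf path, once forks and joins are accounted for, is a valid wire diagram for the input stream $\fl{sort}_{\mathcal{O}}(\fl{u_1}, \ldots, \fl{u_k})$ — independent events may be applied in either relative order or on parallel wires, but that is precisely the freedom the semantics allows. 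Combined with the first stage, this shows that the multiset of outputs produced by the implementation equals the output component $\fl{v}$ of some $(\fl{u}, \fl{v}) \in \sem{P}$ with $\fl{u} = \fl{sort}_{\mathcal{O}}(\fl{u_1}, \ldots, \fl{u_k})$.

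Third, I would apply \Cref{thm:consistency-implies-determinism}: since $P$ is consistent, every such $\fl{v}$ has the same multiset of events as $\fl{spec}(\fl{sort}_{\mathcal{O}}(\fl{u_1}, \ldots, \fl{u_k}))$, which is the required equality (after passing to sets, as in \Cref{def:distr-correctness}); monotonic/incremental production of output is inherited from the incremental structure of the wire diagram. The main obstacle I expect is the first stage: formulating the simulation invariant so that it is genuinely preserved by \emph{every} implementation step — in particular the join/update/fork sequence at an internal node, where one must show that joining children states, applying $\fl{update}$, and forking back corresponds, in the wire-diagram picture, to collapsing two parallel wires via (C2)-style reasoning, performing the update, and re-splitting. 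This is where the interaction between the tree shape of the plan, the predicates attached to each worker, and the independence guarantees from (V2) all have to be tracked simultaneously; the mailbox/heartbeat reasoning in the second stage is comparatively self-contained, and the third stage is immediate given the earlier theorem.
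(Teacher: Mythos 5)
Your proposal matches the paper's own proof in both structure and substance: the paper likewise decomposes the argument into a mailbox lemma (the selective-reordering/heartbeat machinery delivers to each worker a dependency-order-preserving reordering of the sorted, filtered input) and a worker--wire-diagram correspondence lemma (an induction showing the workers' execution realizes a derivation in the semantics of \Cref{def:prog-model-semantics}, using validity conditions (V1) and (V2)), and then concludes via \Cref{thm:consistency-implies-determinism}. Your three stages are these same two lemmas plus the final application of the determinism theorem, so no further comparison is needed.
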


\section{Experimental Evaluation}
\label{sec:evaluation}

In this section we conduct a set of experiments to investigate tradeoffs between data parallelism and \emph{platform independence} in stream processing APIs.
That is, we want to distinguish between parallelism that is achieved automatically and parallelism that is achieved manually at the cost of portability when the details of the underlying platform change.
To frame this discussion,
we identify a set of platform independence principles (PIP) with which to evaluate this tradeoff:
\begin{description}
\item[PIP1:] \textbf{\emph{parallelism independence.}}
    Is the program developed without regard to the number of parallel instances, or does the program use the number of parallel instances in a nontrivial way?
\item[PIP2:] \textbf{\emph{partition independence.}}
    Is the program developed without regard to the correspondence between input data and parallel instances, or does it require knowledge of how input streams are partitioned to be correct?
\item[PIP3:] \textbf{\emph{API compliance.}}
    Does the program violate any assumptions made by the stream processing API?
\end{description}
Having identified these principles, the following questions guide our evaluation:
\begin{description}
\item[Q1]
For computations requiring synchronization, what are the throughput limits of automatic parallelism exposed by existing stream processing APIs?
\item[Q2]
Can \emph{manual} parallel implementations, i.e., implementations that may sacrifice \textbf{(PIP1--3)} above, that emulate synchronization plans
achieve \emph{absolute} throughput improvements in existing stream processing systems?
\item[Q3]
What is the throughput scalability of the synchronization plans that are generated automatically by our framework?
\item[Q4]
In summary, for each method of achieving data parallelism with synchronization, what platform independence tradeoffs are made?
\end{description}
In order to study these questions, we design three applications with synchronization requirements in \Cref{ssec:eval-applications}.
Our investigation compares three systems at different points
in the implementation space with varying APIs and performance characteristics.
First,
Apache Flink~\cite{Flink,carbone2015flink} represents a well-engineered mainstream streaming system with an expressive API.
Second, the Rust implementation of Timely Dataflow~\cite{Timely,murray2013naiad} (Timely) represents a system with support for iterative computation.
Third, \sys{} is a prototype implementation of our end-to-end framework that supports the communication patterns observed in arbitrary synchronization plans,
some of which are not supported by the execution models of Flink and Timely.

\heading{Experimental setup.}
We conduct all the experiments in this section in a distributed execution environment
consisting of AWS EC2 instances.
To account for the fact that AWS instances can introduce variability in results, we chose m6g medium (1 core @2.5GHz,
4~GB) instances, which do not use burst performance like free-tier instances.
We use instances in the same region (us-east-2) and we increase the number of instances for the scalability experiments.
Communication between nodes is managed by each respective system (the system runtime for Flink and Timely, and Erlang for \sys{}).

We configure Flink to be in true streaming mode by disabling batching (setting buffer-timeout to $0$), checkpointing, and dynamic adaptation.
For Timely, it is inherent to the computational model that events are batched by logical timestamp, and the system is not designed for event-by-event streaming, so our data generators follow this paradigm.
This results in significantly higher throughputs for Timely, but note that these throughputs are \emph{not} comparable with Flink and \sys{} due to the batching differences.
Because the purpose of our evaluation is not to compare absolute performance differences due to batching \emph{across systems},
we focus on relative speedups \emph{on the same system} and how they relate to platform independence \textbf{(PIP1--3)}.

\subsection{Applications requiring synchronization}
\label{ssec:eval-applications}

We consider three applications that require different forms of synchronization. All three of the applications do not perform CPU-heavy computation for each event so as to expose communication and underlying system costs in the measurements. The conclusions that we draw remain valid, since a computation-heavy application that would exhibit similar synchronization requirements would scale even better with the addition of more processing nodes.
The input for all three applications is synthetically generated.

\heading{Event-based windowing.}
An \emph{event-based window} is a window whose start and end is defined by the occurrence of certain events.
This results in a simple synchronization pattern where parallel nodes must synchronize at the end of each window.
For this application, we generate an input consisting of several streams of integer \emph{values} and a single (separate) stream of \emph{barriers}. The task is to
produce an aggregate of the values between every two consecutive
barriers, where \emph{between} is defined based on event timestamps.
We take the aggregation to be the sum of the values.
The computation is parallelizable if there are sufficiently more value events than barrier events. In the input to our experiments, there are 10K events in each value stream between two barriers.

\heading{Page-view join.}
The second application is an example of a streaming join.
The input contains 2 types
of events: \emph{page-view} events that represent visits of users to websites,
and \emph{update-page-info} events that update
the metadata of a particular website and also output the
old metadata when processed by the program. All of these
events contain a unique identifier identifying the website and the
goal is to join page-view events with the latest metadata of the
visited page to augment them for a later analysis. An additional
assumption is that the input is not uniformly distributed among
websites, but a small number of them receive most of the page-views.
To simulate this behavior in the inputs used in our experiments,
all views are distributed between two pages.

\heading{Fraud detection.}
Finally, the third application is a version of the
ML fraud detection application mentioned in the introduction, where the synchronization requirements are the same but the computation is simplified. The
input contains \emph{transaction} events and \emph{rule} events
both of which are integer values. On receiving a rule, the program
outputs an aggregate of the transactions since the last rule and a transaction is considered
fraudulent if it is equivalent modulo $1000$ to the sum of the previous aggregate (simulating model retraining)
and the last rule event. As in event-based windowing, we generate 10K transaction events between every two rule events.

\subsection{Implementations in Flink and Timely}
\label{ssec:eval-existing-implementations}

\begin{figure}[t]
  \centering
  \includegraphics[width=0.8\columnwidth]{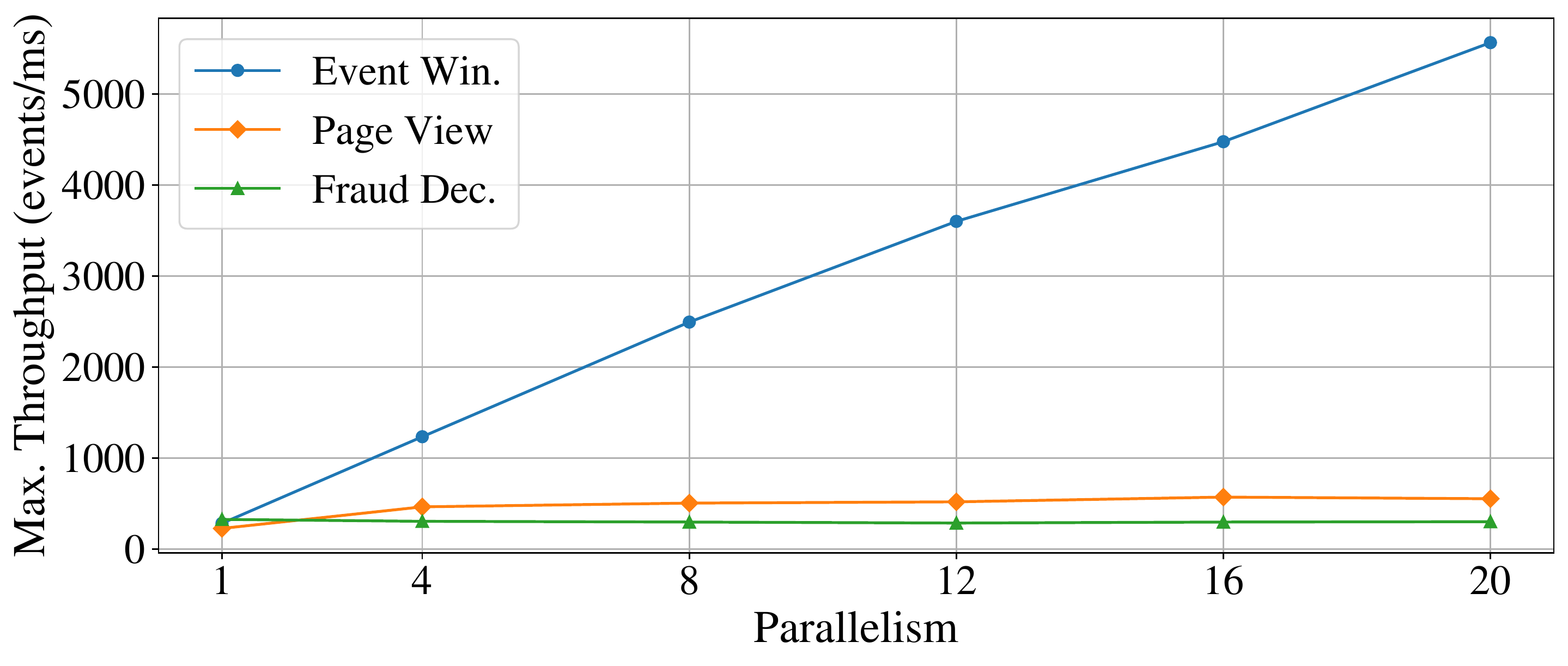}

  \includegraphics[width=0.8\columnwidth]{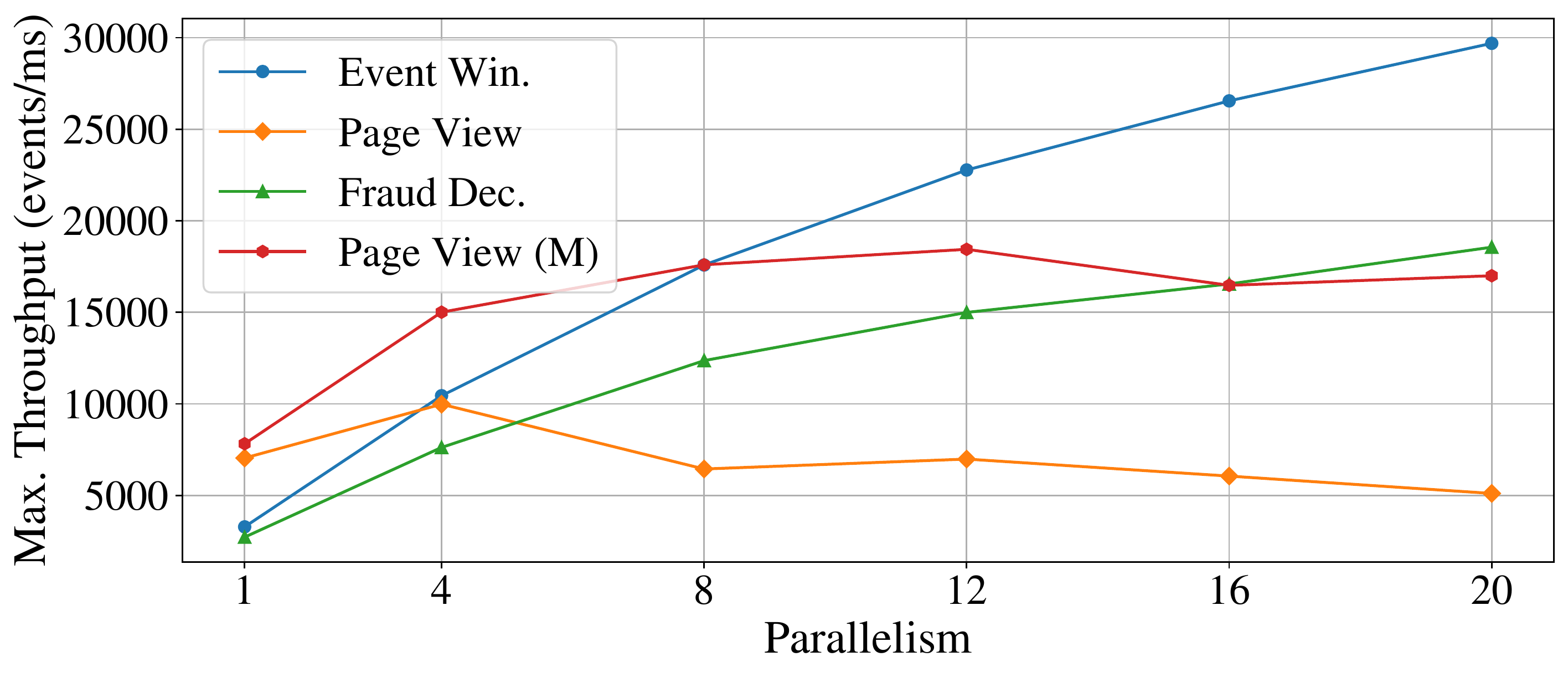}
  \caption{
  Flink (top) and Timely (bottom) maximum throughput increase with increasing parallel nodes for the three applications that require synchronization.
  For the page-view example in Timely, two implementations are shown: Page View uses automatic
  parallelism while Page View (M) is the manual parallel implementation
  in \Cref{fig:timely-snippet}.
  }
  \label{fig:existing-implementations-scaling}
\Description{Flink and Timely maximum throughput increase.}
\end{figure}

In this section we investigate how the Flink and Timely APIs can produce scalable parallel implementations for the aforementioned applications.
We iterated on different implementations resulting in a best-effort attempt to achieve good scalability. These implementations are summarized below, and the source code is presented in \Cref{appendix:timely-code,appendix:flink-code}.
For each of the implementations, we then ran an experiment where we increased the number of distributed nodes and measured the maximum throughput of the resulting implementation (by increasing the input rate until throughput stabilizes or the system crashes).
The results are shown in \Cref{fig:existing-implementations-scaling}.

\begin{figure}[t]
  \centering
  \small{}
\begin{verbatim}
updates.broadcast().filter(move |x| {
    x.name == page_partition_fun(NUM_PAGES, worker_index)
});
\end{verbatim}
  \caption{Snippet from the Timely manual (M) implementation of the page-view join example, satisfying \textbf{PIP1} and \textbf{PIP3} but not \textbf{PIP2}.
}
\Description{Snippet from the manual (M) Timely implementation of the page-view join example.}
\label{fig:timely-snippet}
\end{figure}

\heading{Event-based windowing.}
Flink's API supports a \ttt{broadcast} construct which can send barrier events to all parallel instances of the implementation, therefore being able to scale with an increasing parallel input load.
Note that Flink does not guarantee that the broadcast messages are synchronized with the other stream events, and therefore the user-written code has to ensure that it processes them in order.
By transforming these barriers to Flink watermarks that indicate window boundaries, we can then aggregate values of each window in parallel, and finally merge all windows to get the global aggregate.
Similarly, Timely includes a \ttt{broadcast} operator on streams, which sends all barrier events to all parallel instances;
then the \ttt{reclock} operator is used to match values with corresponding barriers and aggregate them.
Both the Flink and Timely implementations scale because the values are much more frequent than barriers, i.e., a barrier every 10K events, and are processed in a distributed manner.

\heading{Page-view join.}
The input of this application allows for data parallelism across keys, in this case websites, but also for the same key since some keys receive most of the events.
First, for both Flink and Timely, we implemented this application using a standard keyed join, ensuring that the resulting implementation will be parallel with respect to keys.
As shown by the scalability evaluation, this does not scale to beyond 4 nodes in the case that there are a small number of keys receiving most or all of the events.

We wanted to investigate whether it was possible to go beyond the automatic implementations and scale throughput for events \emph{of the same key}.
To study this, we provide a ``manual'' (M) implementation in Timely (\Cref{fig:timely-snippet} and \Cref{fig:existing-implementations-scaling}, bottom). Here, we broadcast \emph{update-page-info} events, then filter to only those relevant to each node, i.e. corresponding to \emph{page-view}s that it is responsible for processing. A similar implementation would be possible in Flink. Unfortunately, our implementation sacrifices \textbf{PIP2}, i.e., the assignment of events to parallel instances
becomes part of the application logic---there are explicit references to the physical partitioning of input streams
(\ttt{page\_partition\_fun}) and the the worker that processes each stream (\ttt{worker\_index}).
Additionally, the implementation broadcasts \emph{all} update events to \emph{all} sharded nodes (not just the ones that are responsible for them), introducing a linear synchronization overhead with the increase of the number of nodes.
An alternative choice would have been to not only broadcast events, but also keep state for \emph{every} page at every sharded replica: this would satisfy \textbf{PIP2} because nodes no longer need to be aware of which events they process, but it does not avoid the broadcasting issue and thus we would expect performance overheads.
Overall, we observe inability to automatically scale this application without sacrificing platform independence in both Flink and Timely.

\heading{Fraud detection.}
The standard dataflow streaming API cannot support cross-instance synchronization, and therefore we can only develop a sequential implementation of this application using Flink's API.
Timely offers a more expressive API with \emph{iterative} computation, and this allows for an automatically scaling implementation: after aggregating local state to globally updated the learning model, we have a cyclic loop which then sends the state back to all the nodes to process further events.
The results show that this implementation scales almost as well as the event-based window.
This effectively demonstrates the value of iterative computation in machine learning and complex stateful workflows.

\begin{takeaway}
\textbf{Take-away (Q1):}
The streaming APIs of Flink and Timely cannot automatically produce implementations that scale throughput for all applications that have synchronization requirements without sacrificing platform independence.
\end{takeaway}

\begin{figure}[t]
    \centering
    \begin{subfigure}[b]{0.48\columnwidth}
      \centering
      \includegraphics[width=\textwidth]{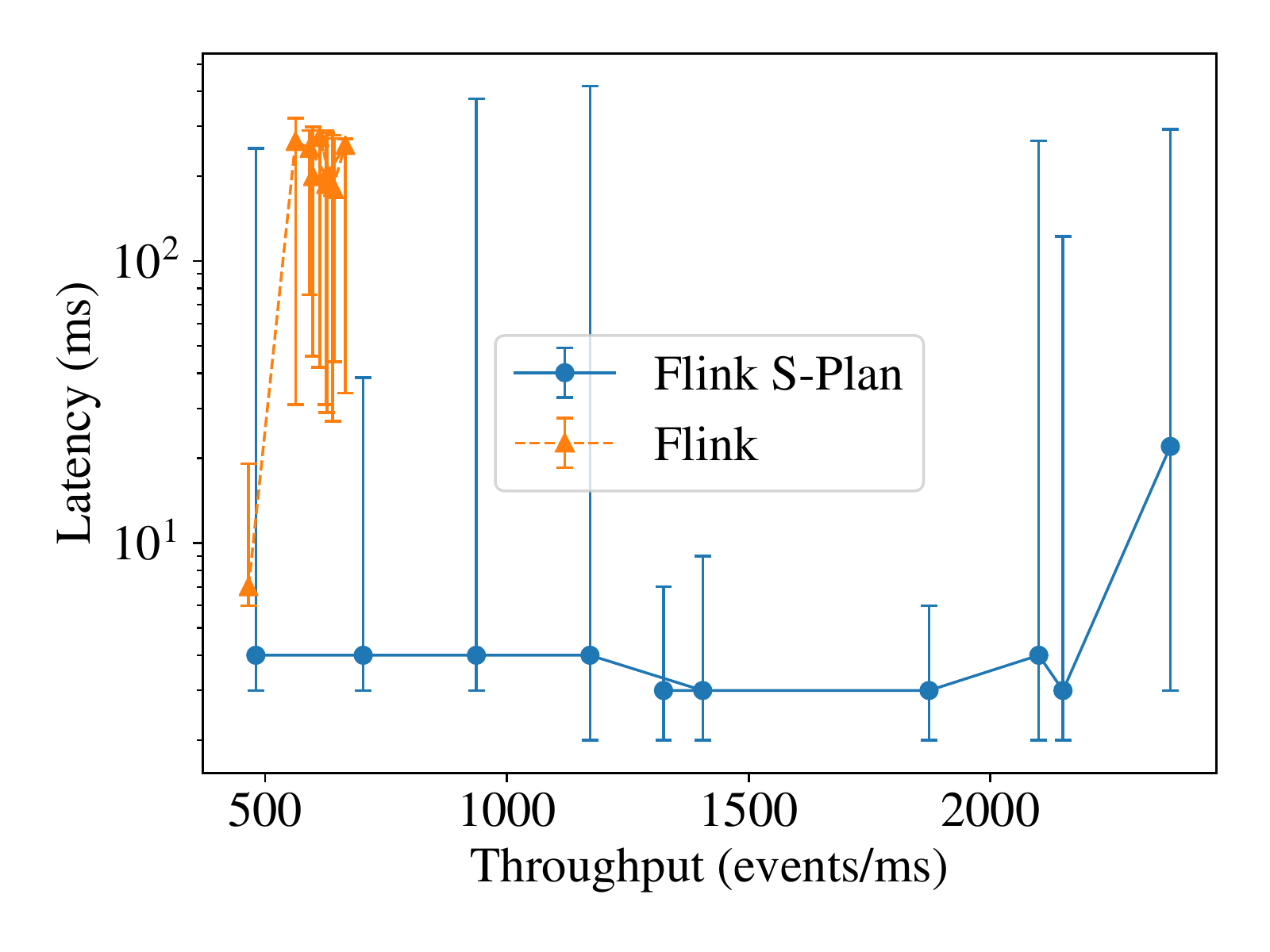}
      \caption{Page-view join.}
      \label{fig:synchronization-plan-page-view-join-throughput}
    \end{subfigure}
    ~
    \begin{subfigure}[b]{0.48\columnwidth}
      \centering
      \includegraphics[width=\textwidth]{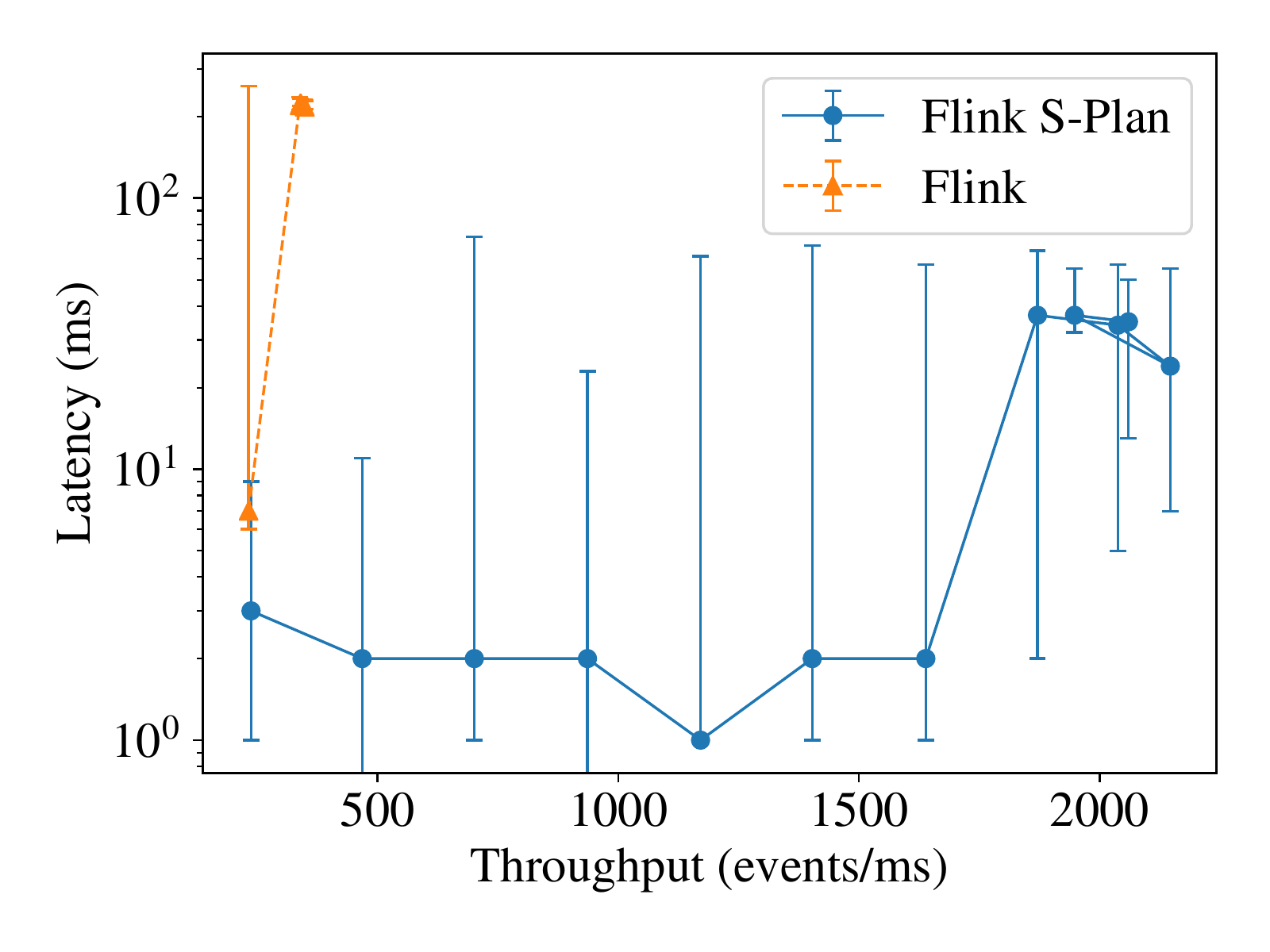}
      \caption{Fraud detection.}
      \label{fig:synchronization-plan-fraud-detection-throughput}
    \end{subfigure}
    \caption{
      Throughput (x-axis) and 10th, 50th, 90th percentile latencies on the y-axis for increasing input rates (from left to right) and 12 parallel nodes. Flink (\textcolor{Orange}{orange}) is the parallel implementation produced automatically, and Flink S-Plan (\textcolor{NavyBlue}{blue}) is the synchronization plan implementation.
      }
    \label{fig:synchronization-plan-throughputs-flink}
    \Description{Throughput scaling for Flink synchronization plan implementation.}
\end{figure}

\subsection{Manual synchronization}
\label{ssec:performance-evaluation}

\begin{figure}[t]
  \footnotesize{}
  \centering
  \begin{lstlisting}[language=Java,basicstyle=\scriptsize\ttfamily,commentstyle=\scriptsize\ttfamily,morekeywords={var}]
public Integer joinChild(
    final int subtaskIndex,
    final Integer state
) {
    final int parentId = subtaskIndex / pageViewParallelism;
    final int childId = subtaskIndex %
    joinSemaphores.get(parentId).get(childId).release();
    forkSemaphores.get(parentId).get(childId)
        .acquireUninterruptibly();
    return zipCode.get(parentId);
}
  \end{lstlisting}
  \caption{Snippet from the implementation of the manual synchronization \ttt{join} in Flink. This implementation does not satisfy \textbf{PIP1--3}.}
  \label{fig:flink-snippet}
  \Description{Snippet from the implementation of the manual synchronization join in Flink.}
\end{figure}

\noindent
To address Q2, we next investigate whether synchronization, implemented manually and possibly sacrificing \textbf{PIP1--3}, can offer concrete throughput speedups.
We focus this implementation in Flink, and consider the two applications that Flink cannot produce parallel implementations for, namely \emph{page-view join} and \emph{fraud detection}. We write a DGS program for these applications and we use our generation framework to produce a synchronization plan for a specific parallelism level (12 nodes). We then manually implement the communication pattern for these synchronization plans in Flink, and we measure their throughput and latency compared to the parallel implementations that the systems produced in \Cref{ssec:eval-existing-implementations}. The results for both applications are shown in \Cref{fig:synchronization-plan-throughputs-flink}. Flink does not achieve adequate parallelism and therefore cannot handle the increasing input rate (low throughput and high latency).

\heading{Page-view join.}
The synchronization plan that we implement for this application is a forest containing a tree for each key (website) and each of these trees has leaves that process page-view events. Each time an update event needs to be processed for a specific key, the responsible tree is joined, processes the event, and then forks the new state back.

\heading{Fraud detection.}
The synchronization plan that we implement for this application is a tree that processes rule events at its root and transactions at all of its leaves. The tree is joined in order to process rules and is then forked back to keep processing transactions.

\heading{Implementation in Flink.}
In order to implement the synchronization plans in Flink we need to introduce communication across parallel instances. We accomplish this by using a centralized service that can be accessed by the instances using Java RMI. Synchronization between a parent and its children happens using two sets of semaphores, $J$ and $F$.
A child releases its $J$ semaphore and acquires its $F$ semaphore when it is ready to join, and a parent acquires its children's $J$ semaphores, performs the event processing, and then releases their $F$ semaphores (\Cref{fig:flink-snippet}).
This implementation of manual synchronization
sacrifices all three platform independence principles \textbf{PIP1--3}.
For \textbf{PIP1} and \textbf{PIP2}, it refers explicitly to the number of parallel instances and the partitioning (\ttt{pageViewParallelism}, \ttt{subtaskIndex}, etc.).
For \textbf{PIP3}, it is not API-compliant because it uses an external service (semaphores) to implement synchronization, whereas Flink's documentation requires that operators lack side effects.
This requirement is imposed because, among other considerations, the use of semaphores might cause the program to fail in cases where
work is interrupted and/or repeated after a node failure.

\begin{takeaway}
\textbf{Take-away (Q2):}
Synchronization plans achieve higher throughputs (4-8x for 12 parallel nodes) over the automatic parallel implementations produced by Flink's API.
\end{takeaway}

\subsection{Implementation in \sys{}}
\label{ssec:eval-prototype-performance}

\begin{figure}[t]
  \centering
  \includegraphics[width=0.8\columnwidth]{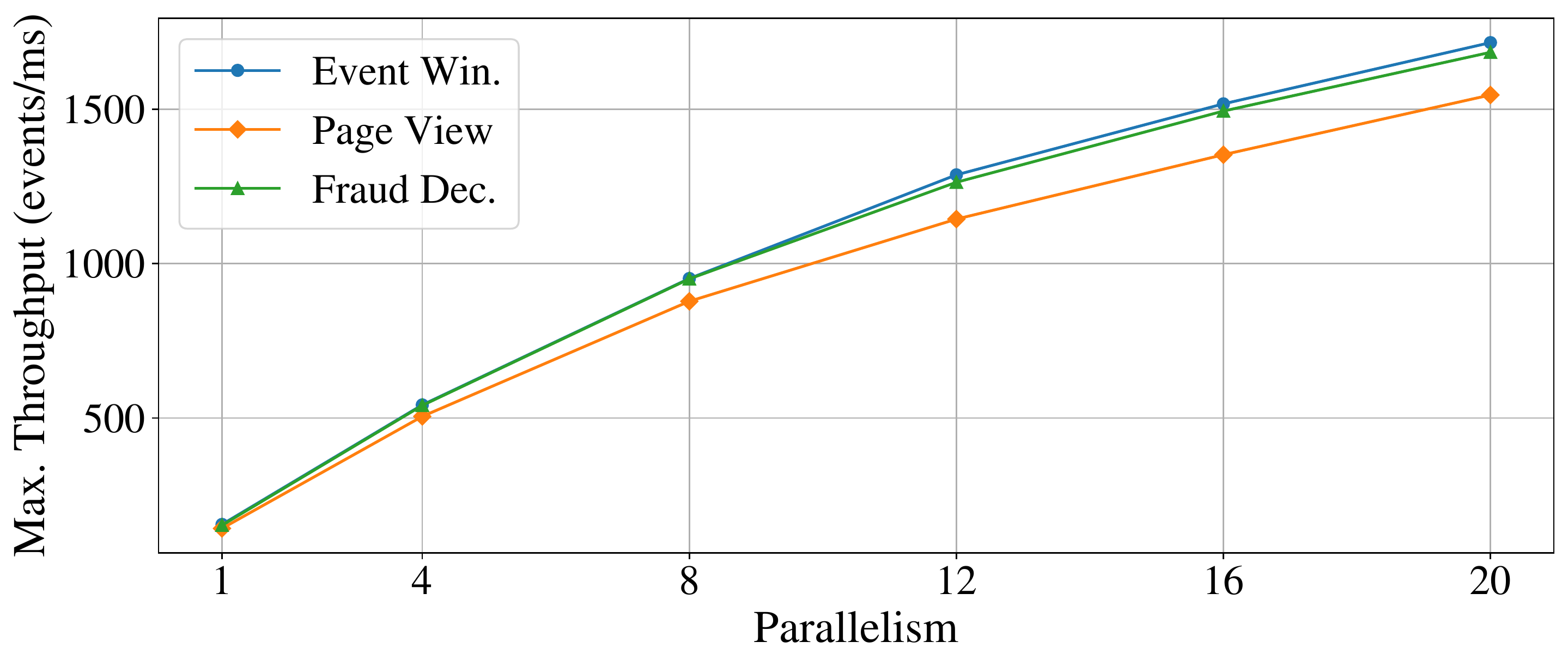}

  \caption{\sys{} (DGS) maximum throughput increase with increasing parallel nodes for the three applications.
  }
  \Description{Maximum throughput increase for our system.}
  \label{fig:flumina-scaling}
\end{figure}

To answer Q3, we implement \sys{},
a prototype of our end-to-end framework that can automatically achieve parallelism given a DGS program.
\sys{} receives a DGS program written in Erlang,
uses the generation framework that was described in \Cref{sec:dist-impl} to generate a correct and efficient synchronization plan,
and then implements the plan according to the description in \Cref{ssec:runtime}.
We implemented all three applications (source code described in \Cref{appendix:flumina-code})
in \sys{}, and measured maximum throughput increase with the addition of parallel nodes
(\Cref{fig:flumina-scaling}).

\begin{table*}[t]
  \normalsize
  \begin{center}
    \setlength{\tabcolsep}{4pt}
    \renewcommand{\arraystretch}{1.5}
    \begin{tabular}{l | b | t | f | b b | t t | f | b b | t | f}
		& \multicolumn{3}{c|}{Event window} & \multicolumn{5}{c|}{Page-view join} & \multicolumn{4}{c}{Fraud detection} \\
    \hline
    Development tradeoff
     & F & TD & DGS & F & FM & TD & TDM & DGS & F & FM & TD & DGS \\
    \hline
    \textbf{(PIP1)}
  Parallelism independence
		& \cmark & \cmark & \cmark
		& \cmark & \xmark & \cmark & \cmark & \cmark
		& \cmark & \xmark & \cmark & \cmark \\
	\textbf{(PIP2)}
  Partition independence
		& \cmark & \cmark & \cmark
		& \cmark & \xmark & \cmark & \xmark & \cmark
		& \cmark & \xmark & \cmark & \cmark \\
    \textbf{(PIP3)}
  API compliance
		& \cmark & \cmark & \cmark
		& \cmark & \xmark & \cmark & \cmark & \cmark
		& \cmark & \xmark & \cmark & \cmark \\
  Scaling
		& 10x & 8x & 8x
		& 2x & 9x & 1x & 2x & 8x
		& 1x & 9x & 6x & 8x \\
    \end{tabular}

    \phantom{LaTeX is incapable of changing the space in between figure and caption in a sane way.}
  \end{center}
\caption{
Development tradeoffs for each program,
together with throughput scaling for 12 nodes,
in Flink (F), Flink with manual synchronization (FM),
Timely (TD), Timely with manual partitioning (TDM), and our system (DGS).
}
\label{tab:developer-tradeoffs}
\end{table*}

\heading{Event-based windowing and fraud detection.}
The DGS program for event-based windowing contains: (i) a sequential update function that adds incoming value events to the state, and outputs the value of the state when processing a barrier event, (ii) a dependence relation that indicates that all events depend on barrier events, and (iii) a \fl{fork} that splits the current state in half, together with a \fl{join} that adds two states to aggregate them. The DGS program for fraud detection is the same with the addition that the  \fl{fork} also duplicates the sum of the previous transaction and last rule modulo 1000.

\heading{Page-view join.}
In addition to the sequential update function, the program indicates that events referring to different keys are independent, and that page-view events of the same key are also independent. The \fl{fork} and \fl{join} are very similar to the ones in \Cref{fig:key-value-store} and just separate the state with respect to the keys.

\begin{takeaway}
\textbf{Take-away (Q3):}
Across all three examples,
\sys{} produces parallel implementations that scale throughput without sacrificing platform independence.
\end{takeaway}

\subsection{Summary: development tradeoffs}
\label{ssec:eval-development-tradeoffs}

Finally, regarding Q4,
\Cref{tab:developer-tradeoffs} shows all the tradeoffs that need to be made for each of the programs in this section together with the throughput increase for 12 nodes.
Note that throughput scaling comparison is only relevant for the same system (each of which is denoted with a different color) and not across systems due to differing sequential baselines.
Of all the implementations in \Cref{ssec:eval-existing-implementations}, the Timely manual page-view example sacrifices \textbf{PIP2}.
The manually synchronizing Flink implementations in \Cref{ssec:performance-evaluation} show good throughput scaling at the cost of \textbf{PIP1--3}.

\begin{takeaway}
\textbf{Take-away (Q4):}
Of the three APIs studied,
only DGS can achieve scalable implementations across
all examples without sacrificing either parallelism independence, partition independence, or API compliance.
\end{takeaway}

\section{Related Work}
\label{sec:related-work}

\heading{Dataflow stream processing systems.}
Applications over streaming data can be implemented using
high-performance, fault tolerant stream processing systems, such as
Flink \cite{carbone2015flink}, Trill~\cite{chandramouli2014trill},
IBM Streams~\cite{HAG2013SPL},
Spark Streaming~\cite{DStreams2013}, Storm~\cite{Storm},
Samza~\cite{Samza2017}, Heron~\cite{kulkarni2015twitter-heron}, and
MillWheel~\cite{MillWheel}.
The need for synchronization in these systems has resulted in a number of extensions to their APIs, but they fall short of a general solution.
Naiad~\cite{murray2013naiad} proposes \emph{timely dataflow} in order
to support iterative computation, which enables some synchronization but falls short of automatically scaling without high-level design sacrifices, as shown in our evaluation.
S-Store and TSpoon~\cite{meehan2015s,affetti2020tspoon} extend stream processing systems with online transaction processing (OLTP),
which includes some forms of synchronization, e.g. locking-based concurrency control.
Concurrent with our work, Nova~\cite{zhao2021timestamped} also identifies
the need for synchronization in stream processing systems,
and proposes to address it through a shared state abstraction.

\heading{Actor-based databases.}
As data processing applications are becoming more complex, evolving from
data analytics to general event-driven applications, some stream
processing and database systems are moving from dataflow programming to more
general actor models
\cite{DBLP:conf/sigmod/CarboneFKK20,DBLP:conf/cidr/Bernstein19,DBLP:conf/cidr/BernsteinDKM17,DBLP:conf/sigmod/0001S18,xu2021move}.
For example, Flink has recently released Stateful Functions,
an actor-based programming model running on top of Flink
\cite{DBLP:journals/pvldb/AkhterFK19,StatefulFunctions}.
Actor models can encode arbitrary synchronization patterns,
but the patterns still need to be implemented manually as
message-passing protocols.
DGS and synchronization plans can be built on top
of the actor abstraction, and in fact our own implementation
relies on actors as provided by Erlang~\cite{armstrong1993erlang}.

\heading{Programming with synchronization.}
In the broader context of distributed and parallel programming,
synchronization is a significant source of overhead for developers,
and a good deal of existing work can be viewed as addressing this problem.
Our model draws inspiration from fork-join based
concurrent programming~\cite{frigo1998implementation,lea2000java},
  bringing some of the expressiveness in those models to the streaming setting,
  where parallelism is much less flexible but essential for performance,
  but also extending them, since in our setting the system (and not the user) decides when to fork and join by choosing a synchronization plan.
A particularly relevant example is
Concurrent Revisions~\cite{burckhardt2010concurrent},
which is a programming model that guarantees determinism in the presence of concurrent updates by allowing programmers to specify isolation types that are processed in parallel and then merged at join points.
The difference of our work is that it targets a more restricted domain providing automation,
not requiring programmers to manually specify the execution synchronization points.
Another related domain is monotonic lattice-based programming models,
including
Conflict-Free Replicated Data Types~\cite{shapiro2011conflict},
Bloom$^L$~\cite{conway12},
and LVars~\cite{lvars13,lvars14},
which are designed for coordination-free distributed programming.
These models guarantee strong eventual consistency,
i.e., eventually all replicas will have the same state,
but, in contrast to our model, CRDTs and Bloom$^L$
do not allow synchronization between different workers.
LVars, which focuses on determinism for concurrent updates on shared variables,
extends lattice-based models with a freeze operation that enforces a synchronization point,
inducing partial order executions that are similar to the ones in our model.
Some similarities with our work can be found in the domain of consistency for replicated data stores.
Some examples include RedBlue consistency~\cite{li2012making},
MixT~\cite{milano2018mixt},
Quelea~\cite{sivaramakrishnan2015declarative},
CISE~\cite{gotsman16},
Carol~\cite{lewchenko2019sequential},
all of which support a mix of consistency guarantees on different operations,
inducing a partial order of data store operations.

\heading{Correctness in stream processing.}
Finally, in prior work,
researchers have pointed out that
parallelization in stream processing systems is not
semantics-preserving, and have proposed methods to restrict
parallelization so that it preserves the
semantics~\cite{schneider2015safeparallelism, mamouras2019data}.
In particular, dependence relations have been previously used in this context
as to specify and ensure correct parallelism~\cite{mamouras2019data,2020:DifferentialTesting:OOPSLA}
and as a type system for synchronization~\cite{alur2021synchronization}.
However, these works do not propose a general programming model
or generation of a parallel implementation.

\section{Future Work}
\label{sec:conclusion}

One problem that is not yet adequately explored in our framework is optimization:
given a DGS program, how to select a valid synchronization plan which is most efficient according to a desired cost metric.
Traditional optimization algorithms for stream processing systems cannot be directly applied to
the complex tree structure of synchronization plans.
There are also possibilities for
dynamic optimization, in which the synchronization plan is modified online in response to profiling data from the system.
Besides optimization, the implementation of synchronization plans needs to address a plethora of systems related issues, such as (i) the efficient management and communication of forked state in a distributed environment, (ii) execution guarantees in the presence of faults, and (iii) supporting performance optimizations such as batching and backpressure.

\begin{acks}
  We want to thank
    Anton Xue,
    Badrish Chandramouli,
    Nikos Vasilakis,
    and many others,
    including attendees at the
    Heidelberg Laureate Forum 2019 and
    PRECISE Industry Day 2019,
    for their comments on earlier versions of this work over the years.
  We also want to thank the reviewers of this and all past versions of this paper for their valuable feedback leading to substantial improvements.
  This research was supported in part by NSF award 1763514.
\end{acks}

\bibliographystyle{ACM-Reference-Format}
\bibliography{ref}

\newpage
\appendix
\section{Case Studies}
\label{appendix:case-studies}

In this section, we evaluate if DGS can be used for realistic application workloads.
We consider the following questions:
\begin{enumerate}
    \item How does the performance of \sys{} compare with handcrafted implementations?
    \item What is the additional programming effort necessary to achieve automatic parallelization when using the DGS programming model?
\end{enumerate}

To evaluate these questions we describe two case studies on applications taken from the literature that have existing high-performance implementations for comparison. The first is a state-of-the-art algorithm for statistical outlier detection, and the second is a smart home power prediction task from the DEBS Grand Challenge competition.
For question (1),
the two case studies are posed in the literature targetting different performance metrics: throughput scalability for outlier detection and latency for the smart home task.
We are able to achieve performance comparable to the existing handcrafted implementations in both cases with respect to the targetted metric.
For question (2), this performance is achieved while putting minimal additional effort into parallelization: compared to 200-700 LoC for the sequential task, writing the fork and join primitives requires only an additional 50-60 LoC.
These results support the feasibility of using DGS for practical workloads, with minimal additional programmer effort to accomplish parallelism.

\subsection{Statistical outlier detection}
\label{ssec:outlier-detection}

\textsc{Reloaded}~\cite{otey2006fast} is a state-of-the-art distributed
streaming algorithm for outlier detection in mixed attribute
datasets. It is a structurally similar
to the fraud-detection example from the experimental evaluation.
The algorithm assumes a set of input streams that
contain events drawn from the same distribution. Each input stream is
processed independently by a different worker with the goal of
identifying outlier events. Each worker constructs a local model of
the input distribution and uses that to flag some events as potential
outliers. Whenever a user (or some other event) requests the current
outliers, the individual workers merge their states to construct a
global model of the input distribution and use that to flag the
potential outlier events as definitive outliers.
We executed a network
intrusion detection task from the original
paper~\cite{kddcup1999dataset}.
The goal of the task is to distinguish malicious connections from a
streaming dataset of simulated connections on a typical U.S.~Air Force
LAN. Each connection is represented as an input event. In the
experiment we varied the number of nodes from 1-8 and we measured the
execution time speedup. We executed this experiment on a local server
(Intel Xeon Gold 6154, 18 cores @3GHz, 384 GB) with the NS3~\cite{carneiro2010ns3} network simulator
to have comparable network latency with the original paper that used a small local cluster.

\heading{Programmability.}
The sequential implementation of the algorithm in DGS consists of approximately 700
lines of Erlang code---most of which is boilerplate to manage the
data structures of the algorithm.
Compared to the sequential implementation, the programming effort to achieve a
parallel implementation is a straightforward pair of \texttt{fork} and \texttt{join} primitives, which amounts to 50~LoC.

\heading{Performance.}
DGS and synchronization plans can capture the complex synchronization pattern that was proposed for \textsc{Reloaded}, achieving comparable speedup to that reported in the original paper:
almost linear (7.3$\times$ for 8 nodes), compared to 7.7$\times$ for 8 nodes by the handcrafted C++ cluster evaluation reported in the paper.

\subsection{IoT power prediction}
\label{ssec:iot-case-study}

The DEBS Grand Challenge is an annual competition to evaluate how
research solutions in distributed event-based systems can perform on
real applications and workloads.  The 2014
challenge~\cite{jerzak2014debs2014, DEBS2014web} involves an IoT task
of power prediction for smart home plugs.  As with the previous case
study, our goal is to see if the performance and programmability of
our model and framework can be used on a task where there are
state-of-the-art results we can compare to.
The problem (query 1 of the challenge) is to predict the load of a
power system in multiple granularities (per plug, per house, per
household) using a combination of real-time and historic data.  We
developed a solution that follows the suggested prediction method,
that involves using a weighted combination of averages from the hour and
historic average loads by the time of day.  This task involves
inherent synchronization: while parallelization is possible at each of
the different granularities, synchronization is then required to bring
together the historic data for future predictions. For example, if
state is parallelized by plug, then state needs to be joined in order
to calculate a historic average load by household.  Our program
is conceptually similar to the map from keys to counters in
\Cref{sec:prog-model}, where we maintain a map of historical totals
for various keys (plugs, houses, and households).  The challenge input
contains 29GB of synthetic load measurements for 2125 plugs
distributed across 40 houses, during the period of one month.  We
executed our implementation on a subset of 20 of the 40 houses, which
we sped up by a factor of 360 so that the whole experiment took about
2 hours to complete.  To compare with submissions to the challenge
which were evaluated on one node, we ran a parallelized computation on
one server node.
To simulate the network and measure network load, we
then used NS3 \cite{carneiro2010ns3}.

In the state, we maintain a set of maps of recent and historical
averages for house, household, and plug.  Then, we set different
houses, $\tg{house}_k$ (for $k$ between $1$ and $20$) to be
different tags, and we add an end-timeslice event $\tg{\#}$
at the end of every hour.  The dependence relation is that
end-timeslice is dependent on everything (this is when output is
produced), and that $\tg{house}_k$ is dependent on itself for every
$k$, but independent of other houses.  The \emph{fork} function
splits each map by house ID, and the \emph{join} function merges
maps back together.

\heading{Programmability.}
In total, the sequential code of our solution amounted to about 200~LoC, and the parallelization primitives (fork, join,
dependence relation) were 60~LoC.  We conclude that the overhead to enable
parallelization is small compared to the sequential code.

\heading{Performance.}
Latency varied between $44$ms (10th percentile), $51$ms (median), and
$75$ms (90th), and the average throughput was $104$
events/ms. These results are on par with the ones reported
by that year's grand challenge winner~\cite{mutschler2014predictive}:
$6.9ms$ (10th) $22.5ms$ (median) 41.3 (90th) and $131$ events/ms
throughput.
Note that although the dataset is the same,
the power prediction method used was different in some solutions.  In
this application domain, our optimizer has the advantage of enabling
edge processing: we measure only 362~MB of data sent over the network,
in contrast to the 29~GB of total processed data.

\section{Communication Optimizer}
\label{appendix:optimizer}

In this section we describe one of the synchronization plan optimizers that we have developed, namely one that is based on a simple
heuristic: it tries to generate a synchronization plan with a
separate worker for each input stream, and then tries to place these
workers in a way that minimizes communication between them.
This optimizer assumes a network of computer nodes and takes as input
estimates of the input rates at each computer node. It searches for an
$P$-valid synchronization plan that maximizes the number of events
that are processed by leaves; since leaves can process events
independently without blocking. The optimizer first uses a greedy
algorithm that generates a graph of implementation tags (where the
edges are between dependent tags) and iteratively removes the implementation
tags with the lowest rate until it ends up with at least two
disconnected components.

\begin{example}
\label{example:optimization}
For an example optimization run, consider \Cref{fig:example-configuration},
and suppose that $\tg{r}(2)$ has a rate of 10 and arrives at node $E_0$,
$\tg{r}(1)$, $\tg{i}(1)$ have rates of 15 and 100 respectively and arrive at node $E_1$,
$\tg{i}(2)_a$ has rate 200 and arrives at node $E_2$,
and $\tg{i}(2)_b$ has rate 300 and arrives at node $E_3$.
Since events of different keys are independent, there are
two connected components in the initial graph---one for each
key. The optimizer starts by separating them into two
subtrees. It then recurses on each disconnected component, until there
is no implementation tag left, ending up with the tree structure shown
in \Cref{fig:example-configuration}. Finally, the optimizer
exhaustively tries to match this implementation tag tree, with a
sequence of forks, in order to produce a valid synchronization plan
annotated with state types, updates, forks, and joins.
This generates the implementation in \Cref{fig:distr_arch}.
\end{example}

\begin{figure*}[t]
  \begin{tikzpicture}[sibling distance=25em, level distance=80pt,
    every node/.style = {shape=rectangle,
      rounded corners,
      draw, align=center},
    level 1/.style = {sibling distance=24em},
    level 2/.style = {sibling distance=10em}]]

    \node (e0) { \TopDNode{$w_1$}{ }
                          {$E_0$}{update -- $\langle$ fork, join $\rangle$} }
      child { \DNode{$w_2$}{$\tg{r}(1), \tg{i}(1)$}{$E_1$}{update}{e1} }
      child { \DNode{$w_3$}{$\tg{r}(2)$}{$E_0$}{update -- $\langle$ fork, join $\rangle$}{e23}
          child { \DNode{$w_4$}{$\tg{i}(2)_a$}{$E_2$}{update}{e2} }
          child { \DNode{$w_5$}{$\tg{i}(2)_b$}{$E_3$}{update}{e3} } };

      \node (node0) [above=1mm of e0.north west, draw=none] { $E_0$ };
      \node [draw=black!50, fit={(e0) (e23) (node0)}, rounded corners=0] (c0) {};

      \node (node1) [above=1mm of e1.north west, draw=none] { $E_1$ };
      \node [draw=black!50, fit={(e1) (node1)}, rounded corners=0] (c1) {};

      \node (node2) [above=1mm of e2.north west, draw=none] { $E_2$ };
      \node [draw=black!50, fit={(e2) (node2)}, rounded corners=0] (c2) {};

      \node (node3) [above=1mm of e3.north west, draw=none] { $E_3$ };
      \node [draw=black!50, fit={(e3) (node3)}, rounded corners=0] (c3) {};

      \coordinate[right=20mm of c0] (d0);
      \coordinate[above=5mm of d0] (dd0);
      \coordinate[above=5mm of c0.east] (dc0);
      \draw [->] (d0) to[left] node[draw=none, auto] {$\tg{r}(2)$, 10} (c0);
      \draw [->] (dc0) to[right] node[draw=none, auto] {$val$} (dd0);

      \coordinate[left=20mm of c1] (d1);
      \coordinate[above=5mm of d1] (dd1);
      \coordinate[above=5mm of c1.west] (dc1);
      \coordinate[below=5mm of d1] (dd2);
      \coordinate[below=5mm of c1.west] (dc2);
      \draw [->] (d1) to[right] node[draw=none, auto] {$\tg{i}(1)$, 100} (c1);
      \draw [->] (dd1) to[right] node[draw=none, auto] {$\tg{r}(1)$, 15} (dc1);
      \draw [->] (dc2) to[left] node[draw=none, auto] {$val$} (dd2);

      \coordinate[left=20mm of c2] (d2);
      \draw [->] (d2) to[left] node[draw=none, auto] {$\tg{i}(2)_a$, 200} (c2);

      \coordinate[right=20mm of c3] (d3);
      \draw [->] (d3) to[right] node[draw=none, auto] {$\tg{i}(2)_b$, 300} (c3);

  \end{tikzpicture}
  \caption{Example synchronization plan generated in \Cref{example:optimization}.
  The large gray rectangles $E_0$, $E_1$, $E_2$, $E_3$ represent physical nodes
  and the incoming arrows represent input streams and their relative rates.}
  \label{fig:distr_arch}
\end{figure*}

\section{Correctness Proof}
\label{appendix:correctness}

We prove \Cref{theorem:correctness} and therefore show that our framework is correct according to \Cref{def:distr-correctness}.
The key assumptions used are:
\begin{enumerate}
\item[(1)] The program is consistent, as defined in \Cref{ssec:prog-model-correctness}.
\item[(2)] The input streams constitute a valid input instance, as defined in \Cref{def:valid-input-instance}.
\item[(3)] The synchronization plan that is chosen by the optimizer is valid, as defined in \Cref{ssec:distributed-configurations}.
\item[(4)] Messages that are sent between two processes in the system arrive in order and are always received. This last assumption is ensured by the Erlang runtime.
\item[(5)] The underlying scheduler is fair, i.e. all processes get scheduled infinitely often.
\end{enumerate}

The proof decomposes the implementation into the mailbox
implementations and the worker implementations, which are both sets of
processes that given a set of input streams produce a set of output
streams. We first show that the mailboxes transform a valid input
instance to a worker input that is correct up to reordering of
independent events. Then we show that given a valid worker input, the
workers processes' collective output is correct.
The second step relies on
\Cref{thm:consistency-implies-determinism},
from the previous section, as well as
\Cref{lemma:worker-wire-correspondence}
which ties this to the mailbox implementations,
showing that the implementation produces a valid wire diagram according to the formal semantics of the program $P$.
Given a valid input instance $u$ and a computation program $P$ that contains
in particular a sequential specification \fl{spec: List(Event) -> List(Out)},
we want to show that any implementation $f$ that is produced by our framework is correct according to \Cref{def:distr-correctness}.

\begin{definition}
For a valid input instance $u$, a worker input $m_u = (m_1, ..., m_N)$
is \emph{valid} with respect to $u$ if $m_i \in \mathrm{reorder}_{D_I}
(\mathrm{filter}(\mathrm{rec}_{w_i}, \mathrm{sort}_O(u))),$ where
$\mathrm{reorder}_{D_I}$ is the set of reorderings that preserve the
order of dependent events,
  $\mathrm{rec}_w(\langle \sigma, t, p\rangle) = \sigma \in \mathrm{atags}(w) \cup \wfield{w}{itags}$
  is a predicate of the messages that each worker $w$ receives,
  and $\mathrm{atags}$ is the set of implementation tags that are handled by a workers ancestors, and is given by $\mathrm{atags}_w = \{ \wfield{w'}{itags} : \forall w' \in \mathrm{anc}(w', w)\}$.
\end{definition}

\begin{lemma}
\label{lemma:mailbox}
Given a valid input instance $u$, any worker input $m$ produced by
the mailbox implementations $(u, m)$ in $f$ is valid with respect to
$u$.
\end{lemma}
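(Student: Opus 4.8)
The plan is to argue by analyzing the selective reordering procedure described in \Cref{ssec:runtime} and showing it maintains two invariants: (i) \emph{completeness}, that every message in $\mathrm{filter}(\mathrm{rec}_{w_i}, \mathrm{sort}_{\mathcal O}(u))$ is eventually released to worker $w_i$, and (ii) \emph{dependency-order preservation}, that if $e_1$ and $e_2$ are dependent (i.e., $\neg\,\mathrm{indep}$) messages with $e_1 <_{\mathcal O} e_2$, then $e_1$ is released before $e_2$. Together these two facts say exactly that the released stream $m_i$ lies in $\mathrm{reorder}_{D_I}(\mathrm{filter}(\mathrm{rec}_{w_i}, \mathrm{sort}_{\mathcal O}(u)))$, which is the definition of validity. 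I would handle $w_i$ one mailbox at a time, since mailboxes do not interact except through the worker processes, and the hypothesis $(u,m)\in f$ already fixes the input the mailbox sees to be (an interleaving consistent with) $\mathrm{filter}(\mathrm{rec}_{w_i},\mathrm{sort}_{\mathcal O}(u))$ by assumptions (2) and (4).

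First I would establish dependency-order preservation. The release condition for an event $e = \langle\sigma, ts, v\rangle$ requires that (a) the timer of every dependent tag $\sigma'$ exceeds $ts$, and (b) the earliest buffered event of every tag $\sigma$ depends on has timestamp $ts' > ts$. Condition (b) directly prevents releasing $e_2$ before a dependent $e_1$ with smaller timestamp that is still buffered; condition (a), combined with monotonicity of each input stream (property (1) of \Cref{def:valid-input-instance}) and in-order message delivery (assumption (4)), guarantees that once the timer for $\sigma'$ has advanced past $ts$, no further $\sigma'$-event with timestamp $\le ts$ can arrive, so a released $e$ is never "overtaken." A short induction on the release order, using that $<_{\mathcal O}$ is a total order and that dependent events are never reordered by either clause, gives the invariant. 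I would also note here that heartbeats only advance timers and clear buffers exactly as a real event would, so they never cause a premature release and never drop a real event.

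Next I would establish completeness, i.e., progress. The only way an event $e=\langle\sigma,ts,v\rangle$ can be stuck in a buffer forever is if some dependent tag's timer never advances past $ts$, or some earlier dependent event never gets released. The Progress property (property (2) of \Cref{def:valid-input-instance}) guarantees that for every real event $x$ in stream $j$ and every other stream $j'$ there is a later event or heartbeat in $j'$; broadcasting heartbeats of $\sigma$ to all descendants of the worker responsible for $\sigma$ (the heartbeat mechanism) ensures the relevant timers in $w_i$'s mailbox do get advanced. Combined with fairness of the scheduler (assumption (5)) so that the cascading release loop actually runs, an induction on $<_{\mathcal O}$-rank shows every qualifying message is released.

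The main obstacle I expect is the bookkeeping around the \emph{cascading} release process: releasing one event can unblock events of dependent tags, which can unblock further tags, and one must check this fixpoint computation (the "dependent tag workset" loop) both terminates on each trigger (finiteness of the tag set, assumption from \Cref{ssec:distributed-assumptions}) and does not miss any releasable event — in particular that an event blocked only because a \emph{sibling independent} tag happened to be behind is never incorrectly delayed, which is why the release condition quantifies only over \emph{dependent} tags. Getting the interaction between clauses (a) and (b), heartbeats, and the workset recursion exactly right — so that neither safety nor progress is violated at the fixpoint — is the delicate part; the rest is routine induction on $<_{\mathcal O}$.
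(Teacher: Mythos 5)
Your proposal is correct and follows essentially the same route as the paper, whose entire proof is the single sentence ``By induction on the input events of one mailbox and using assumptions (2)--(5)''; you handle one mailbox at a time, induct on the events it receives, and invoke exactly the valid-input-instance properties, in-order delivery, and scheduler fairness that the paper cites. Your version is considerably more detailed than the paper's --- in particular the explicit split into dependency-order preservation (safety, via release conditions (a) and (b)) and completeness (progress, via heartbeats and the Progress property), and the attention to termination of the cascading workset loop --- but these are elaborations of the same induction, not a different argument.
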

\begin{proof}
By induction on the input events of one mailbox and using assumptions
(2)-(5).
\end{proof}

Each worker $w$ runs the update function on each event $e = \langle
\sigma, t, p\rangle$ on its stream that it is responsible for $\sigma
\in \wfield{w}{itags}$ possibly producing output $\fl{o: List(Out)}$.
For all events in the stream that one of its
ancestors is responsible for, it sends its state to its parent worker
and waits until it receives an updated one.
The following key lemma states that this corresponds to a particular
wire diagram in the semantics of the program.

\begin{lemma}
\label{lemma:worker-wire-correspondence}
Let $m$ be the worker input to $f$ for program $P$
on input \fl{u},
and let $\fl{o}_i: \fl{List(Out)}$ be the stream of output events
produced by worker $i$ on input $m_i$.
Then there exists \fl{v: List(Out)}
such that $\fl{inter}(\fl{v}, \fl{o}_1, \fl{o}_2, \ldots, \fl{o}_N)$
and $(\fl{u}, \fl{v}) \in \sem{S}$.
\end{lemma}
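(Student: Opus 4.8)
The plan is to prove the lemma by induction on the tree structure of the synchronization plan $S$ underlying $f$, assembling the required wire diagram bottom-up from the leaves to the root. Note that, unlike \Cref{thm:consistency-implies-determinism}, this statement does not mention the sequential specification: it only claims that the implementation realizes \emph{some} derivation of the semantics of \Cref{def:prog-model-semantics} that is shaped like $S$, so the consistency conditions (C1)--(C3) play no role here. The hypotheses actually used are $P$-validity of $S$ (properties (V1)--(V2)), in-order message delivery, and fair scheduling.

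The first step is to linearize the asynchronous execution. Each worker handles the messages released by its mailbox one at a time, and the join/fork handshake --- a parent blocks until it has collected join responses from all its children, and a child blocks until it receives the fork reply --- imposes a happens-before relation on all $\fl{update}$, $\fl{fork}$, and $\fl{join}$ steps; I would show this relation is acyclic, hence extends to a total order compatible with every worker's local sequential order, and use fairness to argue that every event is eventually released and processed and every pending join eventually completes. Relative to this linearization, call an interval during which no ancestor of a worker $w$ requests a join from $w$ a \emph{window} for $w$. The windows of a child refine those of its parent (a parent processing one of its own events, or being joined by its own ancestor, is a boundary for every node below it), so these families of windows nest.

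The inductive claim is: for a worker $w$ of state type $\fl{State_i}$ and a single window, letting $\fl{pred}$ be the (disjoint) union of the implementation-tag sets of all workers in $w$'s subtree, $\fl{u'}$ the subsequence of routed input consumed in that window, $\fl{o}$ the interleaving of outputs emitted in it, and $\fl{s}$, $\fl{s'}$ the states of $w$ at the window's endpoints, the judgement $\semantics{\fl{State_i}}{\fl{pred}}{\fl{s}}{\fl{u'}}{\fl{s'}}{\fl{o}}$ is derivable by a derivation whose fork/join nesting mirrors the subtree at $w$. A leaf is immediate: its behavior in the window is a chain of $\fl{update}$ steps, obtained from rule (2) composed via rule (3), using the routing fact that the only events reaching a leaf as $\fl{update}$s are those owning its tag (which by (V1) satisfy $\fl{pred_i}$). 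For an internal node $w$ with children $c_1$, $c_2$: every event $w$ processes in its own window is a window boundary for both children, so between two consecutive such events $c_1$ and $c_2$ run independently; I would apply the induction hypothesis to each child on each sub-window, combine the two sub-derivations with rule (4) --- whose side conditions are exactly independence and disjointness of the two subtrees' tag sets, supplied by (V2), together with well-typedness of the $\fl{fork}$ and $\fl{join}$ that $w$ actually invokes at the boundary, supplied by (V1) --- and then splice in each $\fl{update}(\cdot, e)$ via rule (2), chaining the whole window together with rule (3). Applying the claim at the root $w_1$ --- state type $\fl{State_0}$, predicate $\fl{true}$, initial state $\fl{init}$, no ancestor, so the whole run is a single window spanning all of $\fl{u}$ --- yields $\semantics{\fl{State_0}}{\fl{true}}{\fl{init}}{\fl{u}}{\fl{s'}}{\fl{v}}$ with $\fl{v}$ an interleaving of $\fl{o}_1, \ldots, \fl{o}_N$ by construction and a derivation of exactly the shape of $S$, i.e. $(\fl{u}, \fl{v}) \in \sem{S}$.

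I expect the main obstacle to be the first step: establishing that the happens-before order induced by the blocking handshakes is genuinely acyclic and that the window families at different depths really do nest (so the recursive appeal to rule (4) is well-founded), while also accounting for the fact that a worker must observe its parent's fork reply before processing its next message --- which is exactly where in-order delivery and fairness are used. A secondary, more bookkeeping-heavy point is verifying the routing invariant: each event is $\fl{update}$d by exactly the unique worker owning its tag, with every ancestor of that worker treating the event solely as a join boundary and never applying $\fl{update}$ to it, which follows from the definition of a valid worker input together with disjointness in (V2). Once the linearization is fixed, choosing the interleaving witnesses in rule (4) to match the real-time order in which outputs are emitted makes the construction of $\fl{v}$ routine.
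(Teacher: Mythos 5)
Your overall strategy matches the paper's: construct, by induction guided by the synchronization plan, a wire-diagram derivation whose fork/join nesting mirrors the tree, using (V1) for well-typedness of each update/fork/join and (V2) to discharge the independence side condition of rule (4). Your linearization and window decomposition is a considerably more explicit account of what the paper compresses into ``by induction on the worker input,'' and you are right that the consistency conditions (C1)--(C3) play no role in this lemma.

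There is, however, one genuine gap at the final step. Your inductive claim produces the judgement for ``the subsequence of routed input consumed in that window''---that is, for the order $\fl{u}'$ in which events are actually processed under your linearization (you even say you choose the interleaving witnesses in rule (4) to match real-time emission order). At the root this yields $(\fl{u}', \fl{v}) \in \sem{S}$ for the linearized processing order $\fl{u}'$, not for the given input $\fl{u}$, and the two generally differ: mailboxes may release independent events in an order other than $<_{\mathcal{O}}$, and the real-time interleaving of events across sibling subtrees need not coincide with their relative order in $\fl{u}$. Bridging this requires \Cref{lemma:mailbox}---each worker's input is a dependency-preserving reordering of $\mathrm{filter}(\mathrm{rec}_{w_i}, \mathrm{sort}_O(u))$---together with the observation that the derivation you build is parametric in the interleavings chosen at each application of rule (4), so that $\fl{u}$ itself is one admissible choice. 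Your proposal never invokes the mailbox lemma or any property of the valid input instance, so as written it establishes membership in $\sem{S}$ for the wrong input stream. The rest of the argument, including the routing invariant and the nesting of windows, is sound and aligns with the intended proof.
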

\begin{proof}
By induction on the worker input and using assumption (3), in particular
validity condition (V1),
we first show that the worker input corresponds to a wire diagram,
in particular we show that
$\semantics{\fl{State_0}}{\fl{true}}{\fl{s}}{\fl{u}'}{\fl{s'}}{\fl{v}}$
where \fl{v} is an interleaving of $\fl{o}_1, \ldots, \fl{o}_N$
and $\fl{u}'$ is \emph{any} interleaving of the events
$\fl{u}'_i$
processed by each mailbox, namely
$\mathrm{filter}(\mathrm{rec}_{w_i}, \wfield{w_i}{itags})$.
Applying \Cref{lemma:mailbox},
\fl{u} is one possible interleaving of the events $\fl{u}'_i$
and hence we conclude that
$\semantics{\fl{State_0}}{\fl{true}}{\fl{s}}{\fl{u}}{\fl{s'}}{\fl{v}}$,
thus
$(\fl{u}, \fl{v}) \in \sem{S}$.
\end{proof}

Combining \Cref{lemma:worker-wire-correspondence}
and \Cref{thm:consistency-implies-determinism} then yields the end-to-end correctness theorem \Cref{theorem:correctness}.

\section{\sys{} System Details}
\label{appendix:flumina-impl}

\subsection{\sys{} synchronization latency}
\label{ssec:latency-eval}

We also conducted an experiment to evaluate the latency of synchronization plans. We studied three factors that affect latency: (i) the depth of the synchronization plan, (ii) the rate of events that are processed at non-leaf nodes of the plan, and (iii) the heartbeat rate. We ran the event-based windowing application with various configurations and the results are shown in \Cref{fig:synchronization-overhead}.
\begin{figure}[t]
    \centering
    \begin{subfigure}[t]{0.45\columnwidth}
      \includegraphics[width=\textwidth]{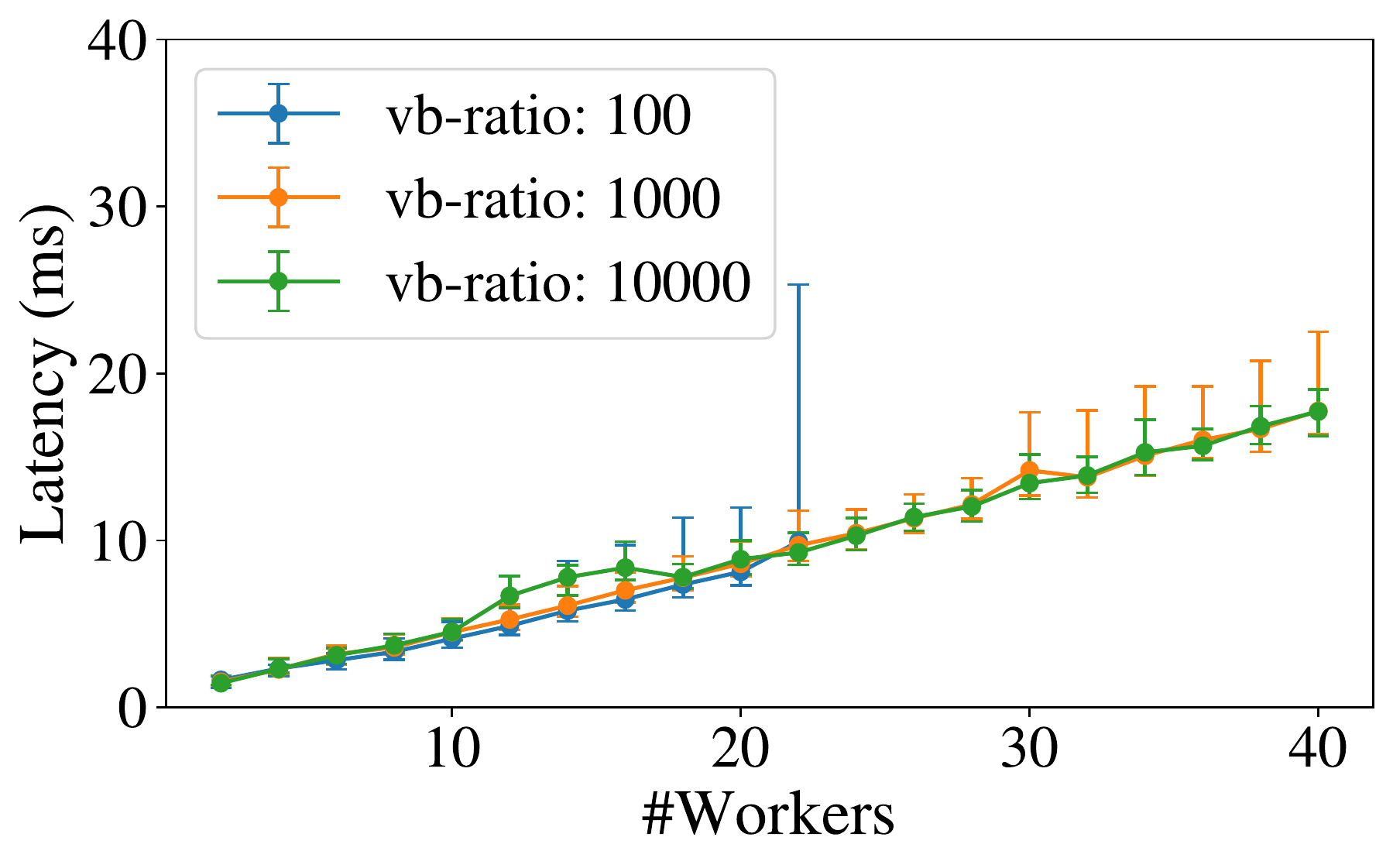}
    \end{subfigure}
    \begin{subfigure}[t]{0.45\columnwidth}
      \includegraphics[width=\textwidth]{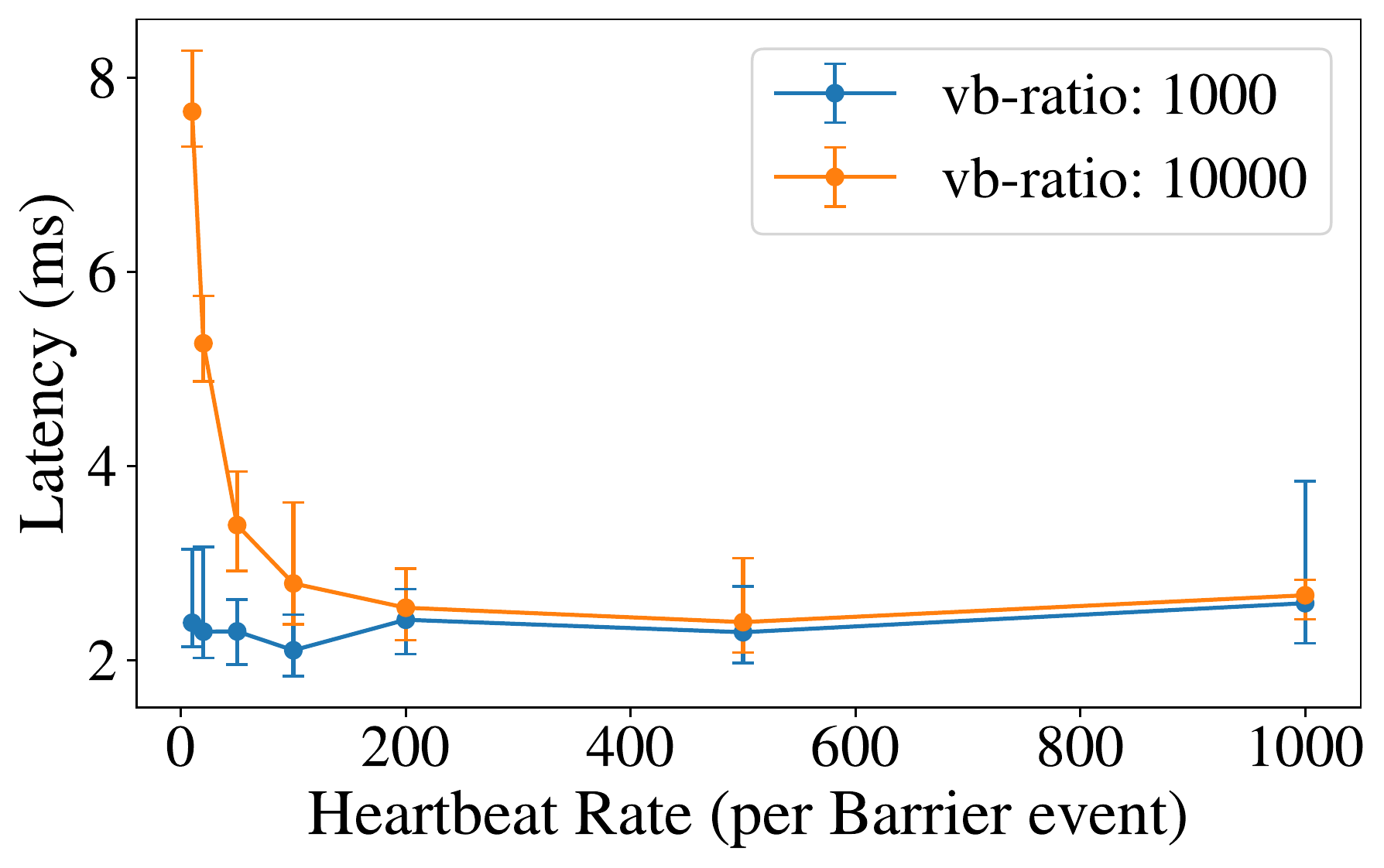}
    \end{subfigure}
    \caption{
    \sys{} latency (10/50/90 percentile) on the event-based windowing application for various configurations:
    Synchronization impact on latency (10/50/90 percentile) for various configurations:
    \textbf{(a)}
    Varying ratios of value to barrier events (different lines) and number of parallel nodes (x-axis). Heartbeat ratio is 1/100 the vb-ratio. The blue line stops after 22 workers.
    \textbf{(b)} Varying ratios of value to barrier events (different lines) and heartbeat rates (x-axis). Number of parallel nodes is fixed to 5.
      }
    \label{fig:synchronization-overhead}
\end{figure}
Latency increases linearly with
the number of workers, since the more workers there are, the more
messages have to be exchanged when a barrier event occurs. Note that
the latencies are higher for lower vb-ratios since every time a barrier
event occurs, all of the nodes need to synchronize. In particular, the
system cannot handle beyond 22 workers for vb-ratio of 100 (i.e. 100
forks-joins of the whole synchronization plan per second). On the right,
we see that if the heartbeat rate is too low latency increases since worker mailboxes cannot release events quickly and therefore get filled up, only releasing events in big batches whenever a barrier occurs.

\subsection{State checkpoints}

\sys{} provides a simple yet effective state checkpointing mechanism
for resuming a computation in case of faults.
In contrast to other distributed stream processing systems, where the
main challenge for checkpointing state is the acquisition of
consistent distributed
snapshots~\cite{murray2013naiad,carbone2017statemanagementflink},
performing a state checkpoint is straightforward when the root
node has joined the states of its descendants.
The joined state at that point in time (which corresponds to
the timestamp of the message that triggered the join request) is a
consistent snapshot of the distributed state of the system, and
therefore we get a non pausing snapshot acquiring mechanism for free.
We have exploited this property of \sys{} to implement a programmable
checkpoint mechanism that is given to the system as an option when
initializing worker processes. The checkpoint mechanism can be
instantiated to save a snapshot of the state every time the state is
joined on the root node, or less frequently, depending on an arbitrary
user defined predicate. We implemented a checkpoint mechanism that
produces a checkpoint every time the root node joins its children
states.

\section{\sys{} Programs}
\label{appendix:flumina-code}

This section contains the real code for the implementations of the three programs in \Cref{sec:evaluation}. Note that the code for fraud detection (\Cref{fig:fraud-detection-erlang}) is very similar to the code for value barrier (\Cref{fig:value-barrier-erlang}) with the most important difference being that fork returns a sum that was computed from the previous values down the tree.

\begin{figure}[H]
  \centering
  \footnotesize{}
\begin{FluminaCode}
-type ab_tag() :: {'a', integer()} | 'b'.
-type event() :: {ab_tag(), integer()}.
-type state() :: integer().

depends(ATags) ->
    ADeps = [{ATag, [b]} || ATag <- ATags],
    BDeps = {b, ATags},
    maps:from_list([BDeps|ADeps]).

-spec init() -> state().
init() -> 0.

-spec update(event(), state(), mailbox()) -> state().
update({{a,_}, Value}, Sum, SendTo) ->
    Sum + Value;
update({b, Ts}, Sum, SendTo) ->
    SendTo ! {sum, {{b, Ts}, Sum}},
    Sum.

-spec join(state(), state()) -> state().
join(Sum1, Sum2) ->
    Sum1 + Sum2.

-spec fork({tag_predicate(), tag_predicate()}, state())
        -> {state(), state()}.
fork(_, Sum) ->
    {Sum, 0}.

\end{FluminaCode}
\caption{\sys{} implementation of the Value Barrier example from \Cref{sec:evaluation}.}
\label{fig:value-barrier-erlang}
\end{figure}

\begin{figure*}[tbhp]
  \centering
  \footnotesize{}
\begin{FluminaCode}
-type uid() :: integer().
-type uids() :: list(uid()).
-type zipcode() :: integer().
-type update_user_address_tag() :: {'update_user_address', uid()}.
-type get_user_address_tag() :: {'get_user_address', uid()}.
-type page_view_tag() :: {'page_view', uid()}.
-type event_tag() :: update_user_address_tag() | get_user_address_tag() | page_view_tag().
-type update_user_address() :: {update_user_address_tag(), {zipcode(), integer()}}.
-type get_user_address() :: {get_user_address_tag(), integer()}.
-type page_view() :: {page_view_tag(), integer()}.
-type event() :: update_user_address() | get_user_address() | page_view().

-spec depends(uids()) -> depends().
depends(Keys) ->
    PageViewDeps = [{{page_view, Key}, [{update_user_address, Key}]} || Key <- Keys],
    GetAddressDeps = [{{get_user_address, Key}, [{update_user_address, Key}]} || Key <- Keys],
    UpdateAddressDeps = [{{update_user_address, Key}, [{update_user_address, Key}, {get_user_address, Key}, {page_view, Key}]}
                         || Key <- Keys],
    maps:from_list(PageViewDeps ++ GetAddressDeps ++ UpdateAddressDeps).

-type state() :: #{uid() := zipcode()}.

-spec init() -> state().
init() -> #{}.

-spec update(event(), state(), mailbox()) -> state().
update({{update_user_address, Uid}, {ZipCode, Ts}}, State, SendTo) ->
    SendTo ! {{update_user_address, Uid}, {ZipCode, Ts}},
    maps:put(Uid, ZipCode, State);
update({{page_view, Uid}, Ts}, State, SendTo) ->
    ZipCode = maps:get(Uid, State, 'no_zipcode'),
    State;
update({{get_user_address, Uid}, Ts}, State, SendTo) ->
    ZipCode = maps:get(Uid, State, 'no_zipcode'),
    State.

-spec fork({tag_predicate(), tag_predicate()}, state()) -> {state(), state()}.
fork({Pred1, Pred2}, State) ->
    State1 = maps:filter(fun(K,_) -> is_uid_in_pred(K, Pred1) end, State),
    State2 = maps:filter(fun(K,_) -> is_uid_in_pred(K, Pred2) end, State),
    {State1, State2}.

-spec join(state(), state()) -> state().
join(State1, State2) ->
    util:merge_with(
      fun(K, V1, V2) -> V1 end, State1, State2).

-spec is_uid_in_pred(uid(), tag_predicate()) -> boolean().
is_uid_in_pred(Uid, Pred) ->
    UidTags = tags([Uid]),
    lists:any(Pred, UidTags).

-spec tags(uids()) -> [event_tag()].
tags(Uids) ->
    update_user_address_tags(Uids)
        ++ get_user_address_tags(Uids)
        ++ page_view_tags(Uids).

...
\end{FluminaCode}

\caption{\sys{} implementation of the page-view join example from \Cref{sec:evaluation}.}
\label{fig:page-view-join-erlang}
\end{figure*}

\begin{figure}[H]
  \centering
  \footnotesize{}
\begin{FluminaCode}
-type ab_tag() :: {'a', integer()} | 'b'.
-type event() :: {ab_tag(), integer()}.
-type state() :: {integer(), integer()}.

-define(MODULO, 1000).

depends(ATags) ->
    ADeps = [{ATag, [b]} || ATag <- ATags],
    BDeps = {b, ATags},
    maps:from_list([BDeps|ADeps]).

-spec init() -> state().
init() -> {0, 0}.

-spec update(event(), state(), mailbox()) -> state().
update({{a,Key}, Value}, {Sum, PrevBModulo}, SendTo) ->
    case Value rem ?MODULO == PrevBModulo of
        true ->
            SendTo ! {{a,Key}, Value};
        false ->
            ok
    end,
    {Sum + Value, PrevBModulo};
update({b, Ts}, {Sum, _PrevBModulo}, SendTo) ->
    SendTo ! {sum, {{b, Ts}, Sum}},
    {Sum, (Ts + 1) rem ?MODULO}.

-spec join(state(), state()) -> state().
join({Sum1, PrevBModulo1}, {Sum2, PrevBModulo2}) ->
    {Sum1 + Sum2, PrevBModulo1}.

-spec fork({tag_predicate(), tag_predicate()}, state())
        -> {state(), state()}.
fork(_, {Sum, PrevBModulo}) ->
    {{Sum, PrevBModulo}, {0, PrevBModulo}}.
\end{FluminaCode}
\caption{\sys{} implementation of the Fraud Detection example from \Cref{sec:evaluation}.}
\label{fig:fraud-detection-erlang}
\end{figure}

\section{Timely Dataflow Programs}
\label{appendix:timely-code}

This section contains the real code for the implementations of the three programs in \Cref{sec:evaluation} for Timely Dataflow.

\begin{figure}[H]
  \centering
  \footnotesize{}
\begin{verbatim}
fn event_window_dataflow<G>(
    value_stream: &Stream<G, VBItem>,
    barrier_stream: &Stream<G, VBItem>,
) -> Stream<G, usize>
where
    G: Scope<Timestamp = u128>,
{
    // Use barrier stream to create two clocks, one with
    // heartbeats and one without
    let barrier_broadcast = barrier_stream.broadcast();
    let barrier_clock_withheartbeats = barrier_broadcast
        .map(|_| ());
    let barrier_clock_noheartbeats = barrier_broadcast
        .filter(|x| x.data == VBData::Barrier)
        .map(|_| ());

    // Aggregate values
    value_stream
        .map(|x| match x.data {
            VBData::Value(v) => v,
            VBData::BarrierHeartbeat => unreachable!(),
            VBData::Barrier => unreachable!(),
        })
        .sum()
        .reclock(&barrier_clock_noheartbeats)
        .sum()
        .exchange(|_x| 0)
        .sum()
}
\end{verbatim}
\caption{Timely Dataflow implementation of the Event Windowing example.}
\end{figure}

\begin{figure}[H]
  \centering
  \footnotesize{}

\begin{verbatim}
fn pv_dataflow<G>(
    views: &Stream<G, PVItem>,
    updates: &Stream<G, PVItem>,
) -> Stream<G, PVItem>
where
    G: Scope<Timestamp = u128>,
{
    // partition updates and views by key
    let partitioned_updates = updates.exchange(|x| x.name);
    let partitioned_views = views.exchange(|x| x.name);

    // re-timestamp views using updates
    let updates_clock = partitioned_updates
        .map(|_| ());
    let clocked_views = partitioned_views.reclock(&updates_clock);

    // join each value with the most recent update
    join_by_timestamp(&partitioned_updates, &clocked_views)
        .map(|(x, _y)| x)
}

\end{verbatim}

\caption{Timely Dataflow implementation of the Page-View example, automatic version.}
\end{figure}

\begin{figure}[H]
  \centering
  \footnotesize{}

\begin{verbatim}
fn pv_dataflow_manual<G>(
    views: &Stream<G, PVItem>,
    updates: &Stream<G, PVItem>,
    worker_index: usize,
) -> Stream<G, PVItem>
where
    G: Scope<Timestamp = u128>,
{
    // broadcast updates then filter to only ones relevant to
    // this partition
    // This is low-level because it uses information about
    // physical partitioning of input streams, given by
    // page_partition_function.
    let partitioned_updates = updates
        .broadcast()
        .filter(move |x| {
            x.name == page_partition_function(NUM_PAGES, worker_index)
        });

    // re-timestamp views using updates
    let updates_clock = partitioned_updates.map(|_| ());
    let clocked_views = views.reclock(&updates_clock);

    // join each value with the most recent update
    join_by_timestamp(&partitioned_updates, &clocked_views)
        .map(|(x, _y)| x)
}
\end{verbatim}

\caption{Timely Dataflow implementation of the Page-View example, manual version.}
\end{figure}

\begin{figure}[H]
  \centering
  \footnotesize{}
\begin{verbatim}
    fn fraud_detection_dataflow<G>(
        value_stream: &Stream<G, VBItem>,
        barrier_stream: &Stream<G, VBItem>,
        scope: &mut G,
    ) -> Stream<G, VBItem>
    where
        G: Scope<Timestamp = u128>,
    {
        // Use barrier stream to create two clocks, one with
        // heartbeats and one without
        let barrier_broadcast = barrier_stream.broadcast();
        let barrier_clock_noheartbeats = barrier_broadcast
            .filter(|x| x.data == VBData::Barrier)
            .map(|_| ());

        // Create cycle / feedback loop
        let (handle, model_stream) = scope.feedback(1);
        // start with a model with value 0
        let init_model = (0..1).to_stream(scope);
        let model_broadcast = model_stream
            .concat(&init_model)
            // "line up" the stream with barrier clock, then broadcast
            .reclock(&barrier_clock_noheartbeats)
            .broadcast();

    // Calculate aggregate of values between windows, but
    // now also given the previous state (trained model)
    let value_reclocked = value_stream
        .reclock(&barrier_clock_noheartbeats);

    // Use the model to predict -- includes core logic for
    // how prediction works
    let value_labeled = join_by_timestamp(
        &model_broadcast, &value_reclocked
    ).map(|(model, value)| {
        let label = match value.data {
            VBData::Value(v) => model %
            VBData::BarrierHeartbeat => unreachable!(),
            VBData::Barrier => unreachable!(),
        };
        (value, label)
    });

    // Aggregate the (labeled) values to get the next model
    // In case there are no values in between two barriers,
    // we also need to insert a 0 for each barrier.
    let barrier_zeros = barrier_clock_noheartbeats
        .map(|()| 0);
    value_labeled
        .count()
        .reclock(&barrier_clock_noheartbeats)
        .sum()
        .concat(&barrier_zeros)
        .exchange(|_x| 0)
        .sum()
        .connect_loop(handle);

    // Output labels that were labeled fraudulent
    value_labeled
        .filter(|(_value, label)| *label)
        .map(|(value, _label)| value)
}
\end{verbatim}
\caption{Timely Dataflow implementation of the Fraud Detection example.}
\end{figure}

\onecolumn
\section{Flink Programs}
\label{appendix:flink-code}

In this section we include Flink implementations of Event-Based Windowing,
Page-View Join, and Fraud Detection from \Cref{sec:evaluation}. Since the
implementations are written in Java, which is a fairly verbose programming
language, we omit the code for processing configuration and measuring
performance, data source classes, data classes, and various helper classes.

Since we are omitting the data classes, one of the design patterns we use in the
code may seem unclear, so we explain it before getting to the implementations.
Suppose we have two data classes, \texttt{Value} and \texttt{Heartbeat}, and we want
to have a union type \texttt{ValueOrHeartbeat}. In Java, we can achieve this by
defining \texttt{ValueOrHeartbeat} as an interface implemented by both classes.
Now suppose we have an object \texttt{obj} of type \texttt{ValueOrHeartbeat} and we
would like to implement some logic that depends on the concrete type of \texttt{obj}.
To achieve this, in many other languages we would use pattern matching, but
since Java doesn't have pattern matching built into the language, we implement
it ourselves in the following way.

\begin{lstlisting}[language=Java,basicstyle=\scriptsize\ttfamily]
public interface ValueOrHeartbeat {
    <T> T match(Function<Value, T> valueCase, Function<Heartbeat, T> heartbeatCase);
}

public class Value implements ValueOrHeartbeat {
    @Override
    public <T> T match(Function<Value, T> valueCase, Function<Heartbeat, T> heartbeatCase) {
        return valueCase.apply(this);
    }
}

public class Heartbeat implements ValueOrHeartbeat {
    @Override
    public <T> T match(Function<Value, T> valueCase, Function<Heartbeat, T> heartbeatCase) {
        return heartbeatCase.apply(this);
    }
}
\end{lstlisting}

\noindent The interface defines a method \texttt{match} parametric in a type
parameter \texttt{T} which takes two ``case'' functions as arguments: one taking
\texttt{Value}, the other taking \texttt{Heartbeat} as input, and both functions
returning \texttt{T} as output. The concrete classes implement \texttt{match} by
calling the appropriate ``case'' function with their object instance \texttt{this}
as input and returning the ``case'' function's output.

As an example of common usage of pattern matching in the code below, suppose we
have a list called \texttt{values} of type \texttt{List<Value>} and an object called
\texttt{valueOrHeartbeat} of type \texttt{ValueOrHeartbeat}. We want to add the
object to the list if it is a value. We can do this with a single line of code:

\begin{lstlisting}[language=Java,basicstyle=\scriptsize\ttfamily]
values.addAll(valueOrHeartbeat.match(v -> List.of(v), hb -> Collections.emptyList()));
\end{lstlisting}

\noindent By writing the value case more succinctly in the point-free style, we
get to the kind of line that commonly occurs in our code:

\begin{lstlisting}[language=Java,basicstyle=\scriptsize\ttfamily]
values.addAll(valueOrHeartbeat.match(List::of, hb -> Collections.emptyList()));
\end{lstlisting}

\subsection{Event-based windowing}

In this example, the inputs are a value stream consisting of several parallel
substreams and a single barrier stream. The number of parallel value substreams
is for historical reasons called \texttt{conf.getValueNodes()}. Both values and
barriers have timestamps, obtained by calling the method \texttt{getPhysicalTimestamp()},
which define how the values and barriers should be interleaved. The
task is to calculate the sum of values between two consecutive barriers.

\subsubsection{Sequential implementation}
In the sequential version, we perform a low-level join of the two streams using
Flink's \texttt{connect} operator. The logic of the join is implemented in the
class \texttt{ValueBarrierProcessSequential} listed below. Since the class
maintains non-trivial internal state, it cannot be parallelized by simply
instantiating multiple copies.

\begin{lstlisting}[language=Java,basicstyle=\tiny\ttfamily,commentstyle=\tiny\ttfamily,morekeywords={var}]
public class ValueBarrierSequentialExperiment implements Experiment {

    private final ValueBarrierConfig conf;

    public ValueBarrierSequentialExperiment(final ValueBarrierConfig conf) {
        this.conf = conf;
    }

    @Override
    public JobExecutionResult run(final StreamExecutionEnvironment env, final Instant startTime) throws Exception {
        env.setParallelism(1);

        final var valueSource = new ValueOrHeartbeatSource(conf.getTotalValues(), conf.getValueRate(), startTime);
        final var valueStream = env.addSource(valueSource)
                .setParallelism(conf.getValueNodes())
                .slotSharingGroup("values");
        final var barrierSource = new BarrierOrHeartbeatSource(
                conf.getTotalValues(), conf.getValueRate(), conf.getValueBarrierRatio(),
                conf.getHeartbeatRatio(), startTime);
        final var barrierStream = env.addSource(barrierSource)
                .slotSharingGroup("barriers");

        valueStream.connect(barrierStream)
                .process(new ValueBarrierProcessSequential(conf.getValueNodes()))
                .slotSharingGroup("barriers")
                .map(new TimestampMapper())
                .writeAsText(conf.getOutFile(), FileSystem.WriteMode.OVERWRITE);

        return env.execute("ValueBarrier Experiment");
    }

    @Override
    public long getTotalEvents() {
        return conf.getValueNodes() * conf.getTotalValues() + conf.getTotalValues() / conf.getValueBarrierRatio();
    }

    @Override
    public long getOptimalThroughput() {
        return (long) (conf.getValueRate() * conf.getValueNodes() + conf.getValueRate() / conf.getValueBarrierRatio());
    }

}

public class ValueBarrierProcessSequential extends
        CoProcessFunction<ValueOrHeartbeat, BarrierOrHeartbeat, Tuple3<Long, Long, Instant>> {

    private final List<Instant> valueTimestamps = new ArrayList<>();
    private final PriorityQueue<Value> values = new PriorityQueue<>(new TimestampComparator());
    private final Queue<Barrier> barriers = new ArrayDeque<>();
    private Instant barrierTimestamp = Instant.MIN;
    private long sum = 0L;

    public ValueBarrierProcessSequential(final int valueParallelism) {
        valueTimestamps.addAll(Collections.nCopies(valueParallelism, Instant.MIN));
    }

    @Override
    public void processElement1(final ValueOrHeartbeat valueOrHeartbeat,
                                final Context ctx,
                                final Collector<Tuple3<Long, Long, Instant>> out) {
        values.addAll(valueOrHeartbeat.match(List::of, hb -> Collections.emptyList()));
        valueTimestamps.set(valueOrHeartbeat.getSourceIndex(), valueOrHeartbeat.getPhysicalTimestamp());
        makeProgress(out);
    }

    @Override
    public void processElement2(final BarrierOrHeartbeat barrierOrHeartbeat,
                                final Context ctx,
                                final Collector<Tuple3<Long, Long, Instant>> out) {
        barriers.addAll(barrierOrHeartbeat.match(List::of, hb -> Collections.emptyList()));
        barrierTimestamp = barrierOrHeartbeat.getPhysicalTimestamp();
        makeProgress(out);
    }

    private Instant getCurrentTimestamp() {
        final var valueTimestamp = valueTimestamps.stream().min(Instant::compareTo).get();
        return min(valueTimestamp, barrierTimestamp);
    }

    private void makeProgress(final Collector<Tuple3<Long, Long, Instant>> out) {
        final var currentTimestamp = getCurrentTimestamp();
        while (!barriers.isEmpty() &&
                barriers.element().getPhysicalTimestamp().compareTo(currentTimestamp) <= 0) {
            final var barrier = barriers.remove();
            while (!values.isEmpty() &&
                    values.element().getPhysicalTimestamp().isBefore(barrier.getPhysicalTimestamp())) {
                update(values.remove());
            }
            update(barrier, out);
        }
        while (!values.isEmpty() &&
                values.element().getPhysicalTimestamp().compareTo(currentTimestamp) <= 0) {
            update(values.remove());
        }
    }

    private void update(final Value value) {
        sum += value.val;
    }

    private void update(final Barrier barrier,
                        final Collector<Tuple3<Long, Long, Instant>> out) {
        out.collect(Tuple3.of(sum, barrier.getLogicalTimestamp(), barrier.getPhysicalTimestamp()));
        sum = 0L;
    }

}
\end{lstlisting}

\subsubsection{Parallel implementation}
The parallel implementation uses Flink's broadcast state pattern. The barriers
are broadcast to all parallel value substreams, which can then compute their
partial sums. The partial sums are later aggregated into the total sum. We use a
trick of converting barriers to Flink's watermarks, which allows us to define
logical windows over which we perform aggregation.

\begin{lstlisting}[language=Java,basicstyle=\tiny\ttfamily,commentstyle=\tiny\ttfamily,morekeywords={var}]
public class ValueBarrierExperiment implements Experiment {

    private final ValueBarrierConfig conf;

    public ValueBarrierExperiment(final ValueBarrierConfig conf) {
        this.conf = conf;
    }

    @Override
    public JobExecutionResult run(final StreamExecutionEnvironment env, final Instant startTime) throws Exception {
        env.setParallelism(1);
        env.setStreamTimeCharacteristic(TimeCharacteristic.EventTime);

        final var valueSource = new ValueOrHeartbeatSource(conf.getTotalValues(), conf.getValueRate(), startTime);
        final var valueStream = env.addSource(valueSource)
                .setParallelism(conf.getValueNodes())
                .slotSharingGroup("values");
        final var barrierSource = new BarrierOrHeartbeatSource(
                conf.getTotalValues(), conf.getValueRate(), conf.getValueBarrierRatio(),
                conf.getHeartbeatRatio(), startTime);
        final var barrierStream = env.addSource(barrierSource)
                .slotSharingGroup("barriers");

        // Broadcast the barrier stream and connect it with the value stream
        // We use a dummy broadcast state descriptor that is never actually used.
        final var broadcastStateDescriptor =
                new MapStateDescriptor<>("BroadcastState", Void.class, Void.class);
        final var broadcastStream = barrierStream.broadcast(broadcastStateDescriptor);

        valueStream.connect(broadcastStream)
                .process(new BroadcastProcessFunction<ValueOrHeartbeat, BarrierOrHeartbeat,
                        Tuple3<Long, Long, Instant>>() {
                    private Instant valuePhysicalTimestamp = Instant.MIN;
                    private Instant barrierPhysicalTimestamp = Instant.MIN;
                    private long sum = 0;

                    private final Queue<Value> unprocessedValues = new ArrayDeque<>();
                    private final Queue<Barrier> unprocessedBarriers = new ArrayDeque<>();

                    @Override
                    public void processElement(final ValueOrHeartbeat valueOrHeartbeat,
                                               final ReadOnlyContext ctx,
                                               final Collector<Tuple3<Long, Long, Instant>> collector) {
                        unprocessedValues.addAll(valueOrHeartbeat.match(List::of, hb -> Collections.emptyList()));
                        valuePhysicalTimestamp = valueOrHeartbeat.getPhysicalTimestamp();
                        makeProgress(collector);
                    }

                    @Override
                    public void processBroadcastElement(final BarrierOrHeartbeat barrierOrHeartbeat,
                                                        final Context ctx,
                                                        final Collector<Tuple3<Long, Long, Instant>> collector) {
                        unprocessedBarriers.addAll(barrierOrHeartbeat.match(List::of, hb -> Collections.emptyList()));
                        barrierPhysicalTimestamp = barrierOrHeartbeat.getPhysicalTimestamp();
                        makeProgress(collector);
                    }

                    private void makeProgress(final Collector<Tuple3<Long, Long, Instant>> collector) {
                        final var currentTime = min(valuePhysicalTimestamp, barrierPhysicalTimestamp);
                        while (!unprocessedValues.isEmpty() &&
                                unprocessedValues.element().getPhysicalTimestamp().compareTo(currentTime) <= 0) {
                            final var value = unprocessedValues.remove();
                            while (!unprocessedBarriers.isEmpty() &&
                                    unprocessedBarriers.element()
                                        .getPhysicalTimestamp().isBefore(value.getPhysicalTimestamp())) {
                                update(unprocessedBarriers.remove(), collector);
                            }
                            update(value, collector);
                        }
                        while (!unprocessedBarriers.isEmpty() &&
                                unprocessedBarriers.element().getPhysicalTimestamp().compareTo(currentTime) <= 0) {
                            update(unprocessedBarriers.remove(), collector);
                        }
                    }

                    private void update(final Value value, final Collector<Tuple3<Long, Long, Instant>> collector) {
                        sum += value.val;
                    }

                    private void update(final Barrier barrier, final Collector<Tuple3<Long, Long, Instant>> collector) {
                        collector.collect(Tuple3.of(sum, barrier.getLogicalTimestamp(),
                            barrier.getPhysicalTimestamp()));
                        sum = 0;
                    }
                })
                .setParallelism(conf.getValueNodes())
                .slotSharingGroup("values")
                .assignTimestampsAndWatermarks(new AssignerWithPunctuatedWatermarks<Tuple3<Long, Long, Instant>>() {
                    @Override
                    public Watermark checkAndGetNextWatermark(final Tuple3<Long, Long, Instant> tuple, final long l) {
                        return new Watermark(l);
                    }

                    @Override
                    public long extractTimestamp(final Tuple3<Long, Long, Instant> tuple, final long l) {
                        // The field tuple.f1 corresponds to the logical timestamp
                        return tuple.f1;
                    }
                })
                .setParallelism(conf.getValueNodes())
                .timeWindowAll(Time.milliseconds(conf.getValueBarrierRatio()))
                .reduce((x, y) -> {
                    x.f0 += y.f0;
                    return x;
                })
                .slotSharingGroup("barriers")
                .map(new TimestampMapper())
                .writeAsText(conf.getOutFile(), FileSystem.WriteMode.OVERWRITE);

        return env.execute("ValueBarrier Experiment");
    }

    @Override
    public long getTotalEvents() {
        return conf.getValueNodes() * conf.getTotalValues() + conf.getTotalValues() / conf.getValueBarrierRatio();
    }

    @Override
    public long getOptimalThroughput() {
        return (long) (conf.getValueRate() * conf.getValueNodes() + conf.getValueRate() / conf.getValueBarrierRatio());
    }

}
\end{lstlisting}

\subsubsection{Manual implementation} In the manual implementation, we simulate
\sys{}'s fork-join requests and responses by having the value and barrier
processing operators (classes \texttt{ValueProcessManual} and
\texttt{BarrierProcessManual}) communicate over an external service (class
\texttt{ValueBarrierService}) using Java RMI (Remote Method Invocation). The service
implements synchronization using semaphores as described in \Cref{sec:evaluation}.

\begin{lstlisting}[language=Java,basicstyle=\tiny\ttfamily,commentstyle=\tiny\ttfamily,morekeywords={var}]
public class ValueBarrierManualExperiment implements Experiment {

    private final ValueBarrierConfig conf;

    public ValueBarrierManualExperiment(final ValueBarrierConfig conf) {
        this.conf = conf;
    }

    @Override
    public JobExecutionResult run(final StreamExecutionEnvironment env, final Instant startTime) throws Exception {
        env.setParallelism(1);

        // Set up the remote ForkJoin service
        final var valueBarrierService = new ValueBarrierService(conf.getValueNodes());
        @SuppressWarnings("unchecked") final var valueBarrierServiceStub =
                (ForkJoinService<Long, Long>) UnicastRemoteObject.exportObject(valueBarrierService, 0);
        final var valueBarrierServiceName = UUID.randomUUID().toString();
        final var registry = LocateRegistry.getRegistry(conf.getRmiHost());
        registry.rebind(valueBarrierServiceName, valueBarrierServiceStub);

        final var valueSource = new ValueOrHeartbeatSource(conf.getTotalValues(), conf.getValueRate(), startTime);
        final var valueStream = env.addSource(valueSource)
                .setParallelism(conf.getValueNodes())
                .slotSharingGroup("values");
        final var barrierSource = new BarrierOrHeartbeatSource(
                conf.getTotalValues(), conf.getValueRate(), conf.getValueBarrierRatio(),
                conf.getHeartbeatRatio(), startTime);
        final var barrierStream = env.addSource(barrierSource)
                .slotSharingGroup("barriers");

        // Broadcast the barrier stream and connect it with the value stream
        // We use a dummy broadcast state descriptor that is never actually used.
        final var broadcastStateDescriptor =
                new MapStateDescriptor<>("BroadcastState", Void.class, Void.class);
        final var broadcastStream = barrierStream.broadcast(broadcastStateDescriptor);

        valueStream.connect(broadcastStream)
                .process(new ValueProcessManual(conf.getRmiHost(), valueBarrierServiceName))
                .setParallelism(conf.getValueNodes())
                .slotSharingGroup("values");

        barrierStream
                .flatMap(new FlatMapFunction<BarrierOrHeartbeat, Barrier>() {
                    @Override
                    public void flatMap(final BarrierOrHeartbeat barrierOrHeartbeat, final Collector<Barrier> out) {
                        barrierOrHeartbeat.match(
                                barrier -> {
                                    out.collect(barrier);
                                    return null;
                                },
                                heartbeat -> null
                        );
                    }
                })
                .process(new BarrierProcessManual(conf.getRmiHost(), valueBarrierServiceName))
                .map(new TimestampMapper())
                .writeAsText(conf.getOutFile(), FileSystem.WriteMode.OVERWRITE);

        try {
            return env.execute("ValueBarrier Manual Experiment");
        } finally {
            UnicastRemoteObject.unexportObject(valueBarrierService, true);
            try {
                registry.unbind(valueBarrierServiceName);
            } catch (final NotBoundException ignored) {

            }
        }
    }

    @Override
    public long getTotalEvents() {
        return conf.getValueNodes() * conf.getTotalValues() + conf.getTotalValues() / conf.getValueBarrierRatio();
    }

    @Override
    public long getOptimalThroughput() {
        return (long) (conf.getValueRate() * conf.getValueNodes() + conf.getValueRate() / conf.getValueBarrierRatio());
    }

}

public class ValueProcessManual extends BroadcastProcessFunction<ValueOrHeartbeat, BarrierOrHeartbeat, Void> {

    private Instant valuePhysicalTimestamp = Instant.MIN;
    private Instant barrierPhysicalTimestamp = Instant.MIN;
    private long sum = 0;

    private final Queue<Value> unprocessedValues = new ArrayDeque<>();
    private final Queue<Barrier> unprocessedBarriers = new ArrayDeque<>();

    private final String rmiHost;
    private final String valueBarrierServiceName;
    private transient ForkJoinService<Long, Long> valueBarrierService;

    public ValueProcessManual(final String rmiHost, final String valueBarrierServiceName) {
        this.rmiHost = rmiHost;
        this.valueBarrierServiceName = valueBarrierServiceName;
    }

    @Override
    @SuppressWarnings("unchecked")
    public void open(final Configuration parameters) throws RemoteException, NotBoundException {
        final var registry = LocateRegistry.getRegistry(rmiHost);
        valueBarrierService = (ForkJoinService<Long, Long>) registry.lookup(valueBarrierServiceName);
    }

    @Override
    public void processElement(final ValueOrHeartbeat valueOrHeartbeat,
                               final ReadOnlyContext ctx,
                               final Collector<Void> out) throws RemoteException {
        unprocessedValues.addAll(valueOrHeartbeat.match(List::of, hb -> Collections.emptyList()));
        valuePhysicalTimestamp = valueOrHeartbeat.getPhysicalTimestamp();
        makeProgress();
    }

    @Override
    public void processBroadcastElement(final BarrierOrHeartbeat barrierOrHeartbeat,
                                        final Context ctx,
                                        final Collector<Void> out) throws RemoteException {
        unprocessedBarriers.addAll(barrierOrHeartbeat.match(List::of, hb -> Collections.emptyList()));
        barrierPhysicalTimestamp = barrierOrHeartbeat.getPhysicalTimestamp();
        makeProgress();
    }

    private void makeProgress() throws RemoteException {
        final var currentTime = min(valuePhysicalTimestamp, barrierPhysicalTimestamp);
        while (!unprocessedValues.isEmpty() &&
                unprocessedValues.element().getPhysicalTimestamp().compareTo(currentTime) <= 0) {
            final var value = unprocessedValues.remove();
            while (!unprocessedBarriers.isEmpty() &&
                    unprocessedBarriers.element().getPhysicalTimestamp().isBefore(value.getPhysicalTimestamp())) {
                join(unprocessedBarriers.remove());
            }
            update(value);
        }
        while (!unprocessedBarriers.isEmpty() &&
                unprocessedBarriers.element().getPhysicalTimestamp().compareTo(currentTime) <= 0) {
            join(unprocessedBarriers.remove());
        }
    }

    private void update(final Value value) {
        sum += value.val;
    }

    private void join(final Barrier barrier) throws RemoteException {
        sum = valueBarrierService.joinChild(getRuntimeContext().getIndexOfThisSubtask(), sum);
    }

}

public class BarrierProcessManual extends ProcessFunction<Barrier, Tuple3<Long, Long, Instant>> {

    private final String rmiHost;
    private final String valueBarrierServiceName;
    private transient ForkJoinService<Long, Long> valueBarrierService;
    private long sum;

    public BarrierProcessManual(final String rmiHost, final String valueBarrierServiceName) {
        this.rmiHost = rmiHost;
        this.valueBarrierServiceName = valueBarrierServiceName;
        this.sum = 0L;
    }

    @Override
    @SuppressWarnings("unchecked")
    public void open(final Configuration parameters) throws RemoteException, NotBoundException {
        final var registry = LocateRegistry.getRegistry(rmiHost);
        valueBarrierService = (ForkJoinService<Long, Long>) registry.lookup(valueBarrierServiceName);
    }

    @Override
    public void processElement(final Barrier barrier,
                               final Context ctx,
                               final Collector<Tuple3<Long, Long, Instant>> out) throws RemoteException {
        sum = valueBarrierService.joinParent(getRuntimeContext().getIndexOfThisSubtask(), sum);
        out.collect(Tuple3.of(sum, barrier.getLogicalTimestamp(), barrier.getPhysicalTimestamp()));
    }

}

public class ValueBarrierService implements ForkJoinService<Long, Long> {

    private final int valueParallelism;
    private final List<Semaphore> joinSemaphores;
    private final List<Semaphore> forkSemaphores;
    private final long[] states;

    public ValueBarrierService(final int valueParallelism) {
        this.valueParallelism = valueParallelism;
        joinSemaphores = IntStream.range(0, valueParallelism)
                .mapToObj(x -> new Semaphore(0)).collect(Collectors.toList());
        forkSemaphores = IntStream.range(0, valueParallelism)
                .mapToObj(x -> new Semaphore(0)).collect(Collectors.toList());
        states = new long[valueParallelism];
    }

    @Override
    public Long joinChild(final int subtaskIndex, final Long state) {
        states[subtaskIndex] = state;
        joinSemaphores.get(subtaskIndex).release();
        forkSemaphores.get(subtaskIndex).acquireUninterruptibly();
        return states[subtaskIndex];
    }

    @Override
    public Long joinParent(final int subtaskIndex, final Long state) {
        long sum = 0L;
        for (int i = 0; i < valueParallelism; ++i) {
            joinSemaphores.get(i).acquireUninterruptibly();
            sum += states[i];
        }
        // Simulate fork propagation in the opposite direction
        for (int i = valueParallelism - 1; i >= 0; --i) {
            states[i] = 0;
            forkSemaphores.get(i).release();
        }
        return sum;
    }

}
\end{lstlisting}

\subsection{Page-view join}

In this example, the inputs are a stream of get and update events, of which only
the update events are relevant, and a stream of page-view events consisting of
a number of parallel substreams; the number of substreams is
given by \texttt{conf.getPageViewParallelism()} in the code below. An update
event updates some metadata (a user's ZIP code in the code below), and the processing
of page-view events depends on the most recent metadata. Hence, the task is to
correctly join the two streams while achieving the best possible parallelism.

\subsubsection{Sequential implementation}
As in the previous example, the sequential implementation here is again a
straightforward low-level join using the Flink's \texttt{connect} operator. The
logic of the join is implemented in the
class \texttt{PageViewProcessSequential}, which again cannot be parallelized by
simply instantiating multiple copies.

\begin{lstlisting}[language=Java,basicstyle=\tiny\ttfamily,commentstyle=\tiny\ttfamily,morekeywords={var}]
public class PageViewSequentialExperiment implements Experiment {

    private final PageViewConfig conf;

    public PageViewSequentialExperiment(final PageViewConfig conf) {
        this.conf = conf;
    }

    @Override
    public JobExecutionResult run(final StreamExecutionEnvironment env, final Instant startTime) throws Exception {
        env.setParallelism(1);

        final var getOrUpdateSource = new GetOrUpdateOrHeartbeatSource(conf.getTotalPageViews(),
                conf.getTotalUsers(), conf.getPageViewRate(), startTime);
        final var getOrUpdateStream = env.addSource(getOrUpdateSource)
                .slotSharingGroup("getOrUpdate");
        final var pageViewSource = new PageViewOrHeartbeatSource(conf.getTotalPageViews(),
                conf.getTotalUsers(), conf.getPageViewRate(), startTime);
        final var pageViewStream = env.addSource(pageViewSource)
                .setParallelism(conf.getPageViewParallelism());

        getOrUpdateStream.connect(pageViewStream)
                .process(new PageViewProcessSequential(conf.getTotalUsers(), conf.getPageViewParallelism()))
                .map(new TimestampMapper())
                .writeAsText(conf.getOutFile(), FileSystem.WriteMode.OVERWRITE);

        return env.execute("PageView Experiment");
    }

    @Override
    public long getTotalEvents() {
        // PageView events + Get events + Update events
        return (conf.getTotalPageViews() * conf.getPageViewParallelism() +
                conf.getTotalPageViews() / 100 + conf.getTotalPageViews() / 1000) * conf.getTotalUsers();
    }

    @Override
    public long getOptimalThroughput() {
        return (long) ((conf.getPageViewParallelism() + 0.011) * conf.getTotalUsers() * conf.getPageViewRate());
    }

}

public class PageViewProcessSequential extends CoProcessFunction<GetOrUpdateOrHeartbeat, PageViewOrHeartbeat, Update> {

    private final List<Instant> pageViewTimestamps = new ArrayList<>();
    private final List<Integer> zipCodes = new ArrayList<>();
    private final Queue<Update> updates = new ArrayDeque<>();
    private final PriorityQueue<PageView> pageViews = new PriorityQueue<>(new TimestampComparator());
    private Instant updateTimestamp = Instant.MIN;

    public PageViewProcessSequential(final int totalUsers, final int pageViewParallelism) {
        pageViewTimestamps.addAll(Collections.nCopies(pageViewParallelism, Instant.MIN));
        zipCodes.addAll(Collections.nCopies(totalUsers, 10_000));
    }

    @Override
    public void processElement1(final GetOrUpdateOrHeartbeat getOrUpdateOrHeartbeat,
                                final Context ctx,
                                final Collector<Update> out) {
        updates.addAll(getOrUpdateOrHeartbeat.match(
                gou -> gou.match(g -> Collections.emptyList(), List::of),
                hb -> Collections.emptyList()));
        updateTimestamp = getOrUpdateOrHeartbeat.getPhysicalTimestamp();
        makeProgress(out);
    }

    @Override
    public void processElement2(final PageViewOrHeartbeat pageViewOrHeartbeat,
                                final Context ctx,
                                final Collector<Update> out) {
        pageViews.addAll(pageViewOrHeartbeat.match(List::of, hb -> Collections.emptyList()));
        pageViewTimestamps.set(pageViewOrHeartbeat.getSourceIndex(), pageViewOrHeartbeat.getPhysicalTimestamp());
        makeProgress(out);
    }

    private Instant getCurrentTimestamp() {
        final var pageViewTimestamp = pageViewTimestamps.stream().min(Instant::compareTo).get();
        return min(pageViewTimestamp, updateTimestamp);
    }

    private void makeProgress(final Collector<Update> out) {
        final var currentTimestamp = getCurrentTimestamp();
        while (!updates.isEmpty() &&
                updates.element().getPhysicalTimestamp().compareTo(currentTimestamp) <= 0) {
            final var update = updates.remove();
            while (!pageViews.isEmpty() &&
                    pageViews.element().getPhysicalTimestamp().isBefore(update.getPhysicalTimestamp())) {
                update(pageViews.remove(), out);
            }
            update(update, out);
        }
        while (!pageViews.isEmpty() &&
                pageViews.element().getPhysicalTimestamp().compareTo(currentTimestamp) <= 0) {
            update(pageViews.remove(), out);
        }
    }

    private void update(final Update update, final Collector<Update> out) {
        zipCodes.set(update.getUserId(), update.zipCode);
        out.collect(update);
    }

    private void update(final PageView pageView, final Collector<Update> out) {
        // This update is a no-op
    }

}
\end{lstlisting}

\subsubsection{Parallel implementation}
\label{sssec:page-view-parallel}

In the parallel implementation, we first partition the streams by user ID, and
then perform the join. Thus, the achieved parallelism corresponds to the total
number of users, provided that the events are uniformly distributed over all
users, which is the case in our experiment.
The logic of the join is implemented in the class \texttt{PageViewProcessParallel}.
Here we have multiple subtasks, each independently processing events related to
a single user.

In order for the events to be routed to a correct \texttt{PageViewProcessParallel} subtask,
we had to resort to a couple of low-level tricks. First, in order to partition
the update stream, we also need to partition the heartbeats occurring in that
stream. But since the heartbeats are originally not tied to a particular user,
we need to duplicate them for each user ID. Second, the hash function used by
Flink for key-based partitioning is not injective when restricted to the domain
of our user IDs. Thus, page-view events related to multiple users may end up on
the same \texttt{PageViewProcessParallel} subtask. To prevent this, we would like
to override the hash function, but since Flink does not allow this directly, we
do it by inverting the function and using as keys the inverse of user IDs. The
achieved effect is that the function is composed with its inverse and thus
canceled out.

\begin{lstlisting}[language=Java,basicstyle=\tiny\ttfamily,commentstyle=\tiny\ttfamily,morekeywords={var}]
public class PageViewExperiment implements Experiment {

    private final PageViewConfig conf;

    public PageViewExperiment(final PageViewConfig conf) {
        this.conf = conf;
    }

    @Override
    public JobExecutionResult run(final StreamExecutionEnvironment env, final Instant startTime) throws Exception {
        env.setParallelism(1);

        final var getOrUpdateSource = new GetOrUpdateOrHeartbeatSource(conf.getTotalPageViews(),
                conf.getTotalUsers(), conf.getPageViewRate(), startTime);
        final var getOrUpdateStream = env.addSource(getOrUpdateSource)
                .slotSharingGroup("getOrUpdate");
        final var pageViewSource = new PageViewOrHeartbeatSource(conf.getTotalPageViews(),
                conf.getTotalUsers(), conf.getPageViewRate(), startTime);
        final var pageViewStream = env.addSource(pageViewSource)
                .setParallelism(conf.getPageViewParallelism());

        // We need to duplicate the GetOrUpdate heartbeats, so that each processing instance gets a copy
        final var totalUsers = conf.getTotalUsers();
        final var gouWithDupHeartbeats = getOrUpdateStream
                .flatMap(new FlatMapFunction<GetOrUpdateOrHeartbeat, GetOrUpdateOrHeartbeat>() {

                    @Override
                    public void flatMap(final GetOrUpdateOrHeartbeat getOrUpdateOrHeartbeat,
                                        final Collector<GetOrUpdateOrHeartbeat> out) {
                        final List<GetOrUpdateOrHeartbeat> toCollect = getOrUpdateOrHeartbeat.match(
                                gou -> gou.match(List::of, List::of),
                                hb -> IntStream.range(0, totalUsers)
                                        .mapToObj(userId -> new GetOrUpdateHeartbeat(hb, userId))
                                        .collect(Collectors.toList())
                        );
                        toCollect.forEach(out::collect);
                    }
                });

        // Normal low-level join
        // We invert the key so that each event is routed to a correct parallel processing instance
        final var invertedUserIds = FlinkHashInverter.getMapping(conf.getTotalUsers());
        gouWithDupHeartbeats.keyBy(gou -> invertedUserIds.get(gou.getUserId()))
                .connect(pageViewStream.keyBy(pv -> invertedUserIds.get(pv.getUserId())))
                .process(new PageViewProcessParallel(conf.getPageViewParallelism()))
                .setParallelism(conf.getTotalUsers())
                .map(new TimestampMapper())
                .setParallelism(conf.getTotalUsers())
                .writeAsText(conf.getOutFile(), FileSystem.WriteMode.OVERWRITE);

        return env.execute("PageView Experiment");
    }

    @Override
    public long getTotalEvents() {
        // PageView events + Get events + Update events
        return (conf.getTotalPageViews() * conf.getPageViewParallelism() +
                conf.getTotalPageViews() / 100 + conf.getTotalPageViews() / 1000) * conf.getTotalUsers();
    }

    @Override
    public long getOptimalThroughput() {
        return (long) ((conf.getPageViewParallelism() + 0.011) * conf.getTotalUsers() * conf.getPageViewRate());
    }

}

public class PageViewProcessParallel extends CoProcessFunction<GetOrUpdateOrHeartbeat, PageViewOrHeartbeat, Update> {

    private final List<Instant> pageViewTimestamps = new ArrayList<>();
    private final Queue<Update> updates = new ArrayDeque<>();
    private final PriorityQueue<PageView> pageViews = new PriorityQueue<>(new TimestampComparator());
    private Instant updateTimestamp = Instant.MIN;

    private int zipCode = 10_000;

    public PageViewProcessParallel(final int pageViewParallelism) {
        pageViewTimestamps.addAll(Collections.nCopies(pageViewParallelism, Instant.MIN));
    }

    @Override
    public void processElement1(final GetOrUpdateOrHeartbeat getOrUpdateOrHeartbeat,
                                final Context ctx,
                                final Collector<Update> out) {
        updates.addAll(getOrUpdateOrHeartbeat.match(
                gou -> gou.match(g -> Collections.emptyList(), List::of),
                hb -> Collections.emptyList()));
        updateTimestamp = getOrUpdateOrHeartbeat.getPhysicalTimestamp();
        makeProgress(out);
    }

    @Override
    public void processElement2(final PageViewOrHeartbeat pageViewOrHeartbeat,
                                final Context ctx,
                                final Collector<Update> out) {
        pageViews.addAll(pageViewOrHeartbeat.match(List::of, hb -> Collections.emptyList()));
        pageViewTimestamps.set(pageViewOrHeartbeat.getSourceIndex(), pageViewOrHeartbeat.getPhysicalTimestamp());
        makeProgress(out);
    }

    private Instant getCurrentTimestamp() {
        final var pageViewTimestamp = pageViewTimestamps.stream().min(Instant::compareTo).get();
        return min(pageViewTimestamp, updateTimestamp);
    }

    private void makeProgress(final Collector<Update> out) {
        final var currentTimestamp = getCurrentTimestamp();
        while (!updates.isEmpty() &&
                updates.element().getPhysicalTimestamp().compareTo(currentTimestamp) <= 0) {
            final var update = updates.remove();
            while (!pageViews.isEmpty() &&
                    pageViews.element().getPhysicalTimestamp().isBefore(update.getPhysicalTimestamp())) {
                update(pageViews.remove(), out);
            }
            update(update, out);
        }
        while (!pageViews.isEmpty() &&
                pageViews.element().getPhysicalTimestamp().compareTo(currentTimestamp) <= 0) {
            update(pageViews.remove(), out);
        }
    }

    private void update(final Update update, final Collector<Update> out) {
        zipCode = update.zipCode;
        out.collect(update);
    }

    private void update(final PageView pageView, final Collector<Update> out) {
        // This update is a no-op
    }

}
\end{lstlisting}

\subsubsection{Manual implementation}

In the manual implementation we broadcast the updates over $p$ parallel page-view
substreams ($p$ is \texttt{pageViewParallelism} in the code below),
which are additionally partitioned by (inverted) user IDs. The resulting
stream is processed by \texttt{PageViewProcessManual}, which can now have $p\cdot n$
parallel subtasks for $n$ users. The update events are also partitioned by
(inverted) user IDs and processed by $n$ parallel subtasks of
\texttt{UpdateProcessManual}. To process updates, we ``pretend'' that an
\texttt{UpdateProcessManual} subtask needs to join with $p$
\texttt{PageViewProcessManual} subtasks that process page views related to the same
user ID. The fork-join requests and responses are implemented as in the previous
example using an external service (called \texttt{PageViewService} in the code below).

\begin{lstlisting}[language=Java,basicstyle=\tiny\ttfamily,commentstyle=\tiny\ttfamily,morekeywords={var}]
public class PageViewManualExperiment implements Experiment {

    private final PageViewConfig conf;

    public PageViewManualExperiment(final PageViewConfig conf) {
        this.conf = conf;
    }

    @Override
    public JobExecutionResult run(final StreamExecutionEnvironment env, final Instant startTime) throws Exception {
        env.setParallelism(1);

        final int pageViewParallelism = conf.getPageViewParallelism();

        // Set up the remote ForkJoin service
        final var pageViewService = new PageViewService(conf.getTotalUsers(), pageViewParallelism);
        @SuppressWarnings("unchecked") final var pageViewServiceStub =
                (ForkJoinService<Tuple2<Long, Long>, Tuple2<Long, Long>>)
                UnicastRemoteObject.exportObject(pageViewService, 0);
        final var pageViewServiceName = UUID.randomUUID().toString();
        final var registry = LocateRegistry.getRegistry(conf.getRmiHost());
        registry.rebind(pageViewServiceName, pageViewServiceStub);

        final var getOrUpdateSource = new GetOrUpdateOrHeartbeatSource(conf.getTotalPageViews(),
                conf.getTotalUsers(), conf.getPageViewRate(), startTime);
        final var getOrUpdateStream = env.addSource(getOrUpdateSource)
                .slotSharingGroup("getOrUpdate");
        final var pageViewSource = new PageViewOrHeartbeatSource(conf.getTotalPageViews(),
                conf.getTotalUsers(), conf.getPageViewRate(), startTime);
        final var pageViewStream = env.addSource(pageViewSource)
                .setParallelism(pageViewParallelism);

        final var broadcastStateDescriptor = new MapStateDescriptor<>("Dummy", Void.class, Void.class);
        final var broadcastGetOrUpdateStream =
                getOrUpdateStream.broadcast(broadcastStateDescriptor);

        // We use the key inverter to correctly distribute events over parallel instances. There are
        // totalUsers * pageViewParallelism processing subtasks, and we want a pageView with given userId
        // and subtaskIndex (determined by the pageViewStream instance) to be processed by subtask with
        // with index userId * pageViewParallelism + subtaskIndex. The inverter inverts this index to a
        // key that, once hashed, will map back to the index.
        final var keyInverter =
                FlinkHashInverter.getMapping(conf.getTotalUsers() * pageViewParallelism);
        pageViewStream.keyBy(poh -> keyInverter.get(poh.getUserId() * pageViewParallelism + poh.getSourceIndex()))
                .connect(broadcastGetOrUpdateStream)
                .process(new PageViewProcessManual(conf.getRmiHost(), pageViewServiceName, pageViewParallelism))
                .setParallelism(conf.getTotalUsers() * pageViewParallelism);

        // We again use an inverter to invert user IDs
        final var invertedUserIds = FlinkHashInverter.getMapping(conf.getTotalUsers());
        getOrUpdateStream
                .flatMap(new FlatMapFunction<GetOrUpdateOrHeartbeat, Update>() {
                    @Override
                    public void flatMap(final GetOrUpdateOrHeartbeat getOrUpdateOrHeartbeat,
                                        final Collector<Update> out) {
                        getOrUpdateOrHeartbeat.match(
                                getOrUpdate -> getOrUpdate.match(
                                        get -> null,
                                        update -> {
                                            out.collect(update);
                                            return null;
                                        }
                                ),
                                heartbeat -> null
                        );
                    }
                })
                .keyBy(update -> invertedUserIds.get(update.getUserId()))
                .process(new UpdateProcessManual(conf.getRmiHost(), pageViewServiceName))
                .setParallelism(conf.getTotalUsers())
                .map(new TimestampMapper())
                .setParallelism(conf.getTotalUsers())
                .writeAsText(conf.getOutFile(), FileSystem.WriteMode.OVERWRITE);

        try {
            return env.execute("PageView Experiment");
        } finally {
            UnicastRemoteObject.unexportObject(pageViewService, true);
            try {
                registry.unbind(pageViewServiceName);
            } catch (final NotBoundException ignored) {

            }
        }
    }

    @Override
    public long getTotalEvents() {
        // PageView events + Get events + Update events
        return (conf.getTotalPageViews() * conf.getPageViewParallelism() +
                conf.getTotalPageViews() / 100 + conf.getTotalPageViews() / 1000) * conf.getTotalUsers();
    }

    @Override
    public long getOptimalThroughput() {
        return (long) ((conf.getPageViewParallelism() + 0.011) * conf.getTotalUsers() * conf.getPageViewRate());
    }

}

public class PageViewProcessManual extends
        KeyedBroadcastProcessFunction<Integer, PageViewOrHeartbeat, GetOrUpdateOrHeartbeat, Void> {

    private Instant pageViewPhysicalTimestamp = Instant.MIN;
    private Instant getOrUpdatePhysicalTimestamp = Instant.MIN;
    private int zipCode = 10_000;

    private final Queue<PageView> pageViews = new ArrayDeque<>();
    private final Queue<Update> updates = new ArrayDeque<>();

    private final String rmiHost;
    private final String pageViewServiceName;
    private final int pageViewParallelism;
    private int userId;
    private transient ForkJoinService<Integer, Integer> pageViewService;

    public PageViewProcessManual(final String rmiHost,
                                 final String pageViewServiceName,
                                 final int pageViewParallelism) {
        this.rmiHost = rmiHost;
        this.pageViewServiceName = pageViewServiceName;
        this.pageViewParallelism = pageViewParallelism;
    }

    @Override
    @SuppressWarnings("unchecked")
    public void open(final Configuration parameters) throws RemoteException, NotBoundException {
        final var registry = LocateRegistry.getRegistry(rmiHost);
        pageViewService = (ForkJoinService<Integer, Integer>) registry.lookup(pageViewServiceName);
        userId = getRuntimeContext().getIndexOfThisSubtask() / pageViewParallelism;
    }

    @Override
    public void processElement(final PageViewOrHeartbeat pageViewOrHeartbeat,
                               final ReadOnlyContext ctx,
                               final Collector<Void> out) throws RemoteException {
        pageViews.addAll(pageViewOrHeartbeat.match(List::of, hb -> Collections.emptyList()));
        pageViewPhysicalTimestamp = pageViewOrHeartbeat.getPhysicalTimestamp();
        makeProgress();
    }

    @Override
    public void processBroadcastElement(final GetOrUpdateOrHeartbeat getOrUpdateOrHeartbeat,
                                        final Context ctx,
                                        final Collector<Void> out) throws RemoteException {
        // We are getting all GetOrUpdate events, even if they have the wrong userId.
        // We filter out the wrong events.
        if (getOrUpdateOrHeartbeat.match(gou -> gou.getUserId() != userId, hb -> false)) {
            return;
        }
        updates.addAll(getOrUpdateOrHeartbeat.match(
                gou -> gou.match(g -> Collections.emptyList(), List::of),
                hb -> Collections.emptyList()));
        getOrUpdatePhysicalTimestamp = getOrUpdateOrHeartbeat.getPhysicalTimestamp();
        makeProgress();
    }

    private void makeProgress() throws RemoteException {
        final var currentTime = min(pageViewPhysicalTimestamp, getOrUpdatePhysicalTimestamp);
        while (!updates.isEmpty() &&
                updates.element().getPhysicalTimestamp().compareTo(currentTime) <= 0) {
            final var update = updates.remove();
            while (!pageViews.isEmpty() &&
                    pageViews.element().getPhysicalTimestamp().isBefore(update.getPhysicalTimestamp())) {
                process(pageViews.remove());
            }
            join(update);
        }
        while (!pageViews.isEmpty() &&
                pageViews.element().getPhysicalTimestamp().compareTo(currentTime) <= 0) {
            process(pageViews.remove());
        }
    }

    private void process(final PageView pageView) {
        // This update is a no-op
    }

    private void join(final Update update) throws RemoteException {
        zipCode = pageViewService.joinChild(getRuntimeContext().getIndexOfThisSubtask(), zipCode);
    }

}

public class UpdateProcessManual extends KeyedProcessFunction<Integer, Update, Update> {

    private final String rmiHost;
    private final String pageViewServiceName;
    private transient ForkJoinService<Integer, Integer> pageViewService;
    private int zipCode;

    public UpdateProcessManual(final String rmiHost, final String pageViewServiceName) {
        this.rmiHost = rmiHost;
        this.pageViewServiceName = pageViewServiceName;
        this.zipCode = 10_000;
    }

    @Override
    @SuppressWarnings("unchecked")
    public void open(final Configuration parameters) throws RemoteException, NotBoundException {
        final var registry = LocateRegistry.getRegistry(rmiHost);
        pageViewService = (ForkJoinService<Integer, Integer>) registry.lookup(pageViewServiceName);
    }

    @Override
    public void processElement(final Update update,
                               final Context ctx,
                               final Collector<Update> out) throws RemoteException {
        zipCode = pageViewService.joinParent(getRuntimeContext().getIndexOfThisSubtask(), update.zipCode);
        out.collect(update);
    }

}

public class PageViewService implements ForkJoinService<Integer, Integer> {

    private final int pageViewParallelism;
    private final List<List<Semaphore>> forkSemaphores;
    private final List<List<Semaphore>> joinSemaphores;
    private final List<Integer> zipCode;

    public PageViewService(final int totalUsers, final int pageViewParallelism) {
        this.pageViewParallelism = pageViewParallelism;
        forkSemaphores = new ArrayList<>(totalUsers);
        joinSemaphores = new ArrayList<>(totalUsers);
        zipCode = new ArrayList<>(totalUsers);
        for (int i = 0; i < totalUsers; ++i) {
            final var forkSems = new ArrayList<Semaphore>(pageViewParallelism);
            final var joinSems = new ArrayList<Semaphore>(pageViewParallelism);
            for (int j = 0; j < pageViewParallelism; ++j) {
                forkSems.add(new Semaphore(0));
                joinSems.add(new Semaphore(0));
            }
            forkSemaphores.add(forkSems);
            joinSemaphores.add(joinSems);
            zipCode.add(10_000);
        }
    }

    @Override
    public Integer joinChild(final int subtaskIndex, final Integer state) {
        final int parentId = subtaskIndex / pageViewParallelism;
        final int childId = subtaskIndex %
        joinSemaphores.get(parentId).get(childId).release();
        forkSemaphores.get(parentId).get(childId).acquireUninterruptibly();
        return zipCode.get(parentId);
    }

    @Override
    public Integer joinParent(final int subtaskIndex, final Integer state) {
        joinSemaphores.get(subtaskIndex).forEach(Semaphore::acquireUninterruptibly);
        zipCode.set(subtaskIndex, state);
        // We iterate over forkSemaphores backwards, to simulate children organized in a chain
        for (int i = forkSemaphores.get(subtaskIndex).size() - 1; i >= 0; --i) {
            forkSemaphores.get(subtaskIndex).get(i).release();
        }
        return state;
    }

}
\end{lstlisting}

\subsection{Fraud detection}

In this example, the input is completely analogous as in the Event-Based Windowing
example, with transactions playing the role of values, and rules playing the role
of barriers. The difference is in the computation: here the computation in one
window depends on the result of the previous window. Because of this, the parallel
implementation for Event-Based Windowing no longer works as there is no way to
propagate the necessary information from one window to another.

\subsubsection{Sequential implementation} The sequential implementation is
analogous to the sequential implementation of Event-Based Windowing.

\begin{lstlisting}[language=Java,basicstyle=\tiny\ttfamily,commentstyle=\tiny\ttfamily,morekeywords={var}]
public class FraudDetectionSequentialExperiment implements Experiment {

    private final FraudDetectionConfig conf;

    public FraudDetectionSequentialExperiment(final FraudDetectionConfig conf) {
        this.conf = conf;
    }

    @Override
    public JobExecutionResult run(final StreamExecutionEnvironment env, final Instant startTime) throws Exception {
        env.setParallelism(1);

        final var transactionSource = new TransactionOrHeartbeatSource(
                conf.getTotalValues(), conf.getValueRate(), startTime);
        final var transactionStream = env.addSource(transactionSource)
                .setParallelism(conf.getValueNodes())
                .slotSharingGroup("transactions");
        final var ruleSource = new RuleOrHeartbeatSource(
                conf.getTotalValues(), conf.getValueRate(), conf.getValueBarrierRatio(),
                conf.getHeartbeatRatio(), startTime);
        final var ruleStream = env.addSource(ruleSource)
                .slotSharingGroup("rules");

        ruleStream.connect(transactionStream)
                .process(new FraudProcessFunction(conf.getValueNodes()))
                .slotSharingGroup("rules")
                .map(new TimestampMapper())
                .writeAsText(conf.getOutFile(), FileSystem.WriteMode.OVERWRITE);

        return env.execute("FraudDetection Experiment");
    }

    @Override
    public long getTotalEvents() {
        return conf.getValueNodes() * conf.getTotalValues() + conf.getTotalValues() / conf.getValueBarrierRatio();
    }

    @Override
    public long getOptimalThroughput() {
        return (long) (conf.getValueRate() * conf.getValueNodes() + conf.getValueRate() / conf.getValueBarrierRatio());
    }

}

public class FraudProcessFunction extends
        CoProcessFunction<RuleOrHeartbeat, TransactionOrHeartbeat, Tuple3<String, Long, Instant>> {

    private final List<Instant> transactionTimestamps = new ArrayList<>();
    private final Tuple2<Long, Long> previousAndCurrentSum = Tuple2.of(0L, 0L);
    private final PriorityQueue<Transaction> transactions = new PriorityQueue<>(new TimestampComparator());
    private final Queue<Rule> rules = new ArrayDeque<>();
    private Instant ruleTimestamp = Instant.MIN;

    public FraudProcessFunction(final int transactionParallelism) {
        transactionTimestamps.addAll(Collections.nCopies(transactionParallelism, Instant.MIN));
    }

    @Override
    public void processElement1(final RuleOrHeartbeat ruleOrHeartbeat,
                                final Context ctx,
                                final Collector<Tuple3<String, Long, Instant>> out) {
        rules.addAll(ruleOrHeartbeat.match(List::of, hb -> Collections.emptyList()));
        ruleTimestamp = ruleOrHeartbeat.getPhysicalTimestamp();
        makeProgress(out);
    }

    @Override
    public void processElement2(final TransactionOrHeartbeat transactionOrHeartbeat,
                                final Context ctx,
                                final Collector<Tuple3<String, Long, Instant>> out) {
        transactions.addAll(transactionOrHeartbeat.match(List::of, hb -> Collections.emptyList()));
        transactionTimestamps.set(transactionOrHeartbeat.getSourceIndex(),
                transactionOrHeartbeat.getPhysicalTimestamp());
        makeProgress(out);
    }

    private Instant getCurrentTimestamp() {
        final var transactionTimestamp = transactionTimestamps.stream().min(Instant::compareTo).get();
        return min(transactionTimestamp, ruleTimestamp);
    }

    private void makeProgress(final Collector<Tuple3<String, Long, Instant>> out) {
        final var currentTimestamp = getCurrentTimestamp();
        while (!rules.isEmpty() &&
                rules.element().getPhysicalTimestamp().compareTo(currentTimestamp) <= 0) {
            final var rule = rules.remove();
            while (!transactions.isEmpty() &&
                    transactions.element().getPhysicalTimestamp().isBefore(rule.getPhysicalTimestamp())) {
                update(transactions.remove(), out);
            }
            update(rule, out);
        }
        while (!transactions.isEmpty() &&
                transactions.element().getPhysicalTimestamp().compareTo(currentTimestamp) <= 0) {
            update(transactions.remove(), out);
        }
    }

    private void update(final Rule rule,
                        final Collector<Tuple3<String, Long, Instant>> out) {
        out.collect(Tuple3.of("Rule", previousAndCurrentSum.f1, rule.getPhysicalTimestamp()));
        previousAndCurrentSum.f0 = previousAndCurrentSum.f1;
        previousAndCurrentSum.f1 = 0L;
    }

    private void update(final Transaction transaction,
                        final Collector<Tuple3<String, Long, Instant>> out) {
        if (previousAndCurrentSum.f0 %
            out.collect(Tuple3.of("Transaction", transaction.val, transaction.getPhysicalTimestamp()));
        }
        previousAndCurrentSum.f1 += transaction.val;
    }

}
\end{lstlisting}

\subsubsection{Manual implementation} Again, the manual implementation is
analogous to the manual implementation of Event-Based Windowing.

\begin{lstlisting}[language=Java,basicstyle=\tiny\ttfamily,commentstyle=\tiny\ttfamily,morekeywords={var}]
public class FraudDetectionManualExperiment implements Experiment {

    private final FraudDetectionConfig conf;

    public FraudDetectionManualExperiment(final FraudDetectionConfig conf) {
        this.conf = conf;
    }

    @Override
    public JobExecutionResult run(final StreamExecutionEnvironment env, final Instant startTime) throws Exception {
        env.setParallelism(1);

        // Set up the remote ForkJoin service
        final var fraudDetectionService = new FraudDetectionService(conf.getValueNodes());
        @SuppressWarnings("unchecked") final var fraudDetectionServiceStub =
                (ForkJoinService<Tuple2<Long, Long>, Tuple2<Long, Long>>)
                UnicastRemoteObject.exportObject(fraudDetectionService, 0);
        final var fraudDetectionServiceName = UUID.randomUUID().toString();
        final var registry = LocateRegistry.getRegistry(conf.getRmiHost());
        registry.rebind(fraudDetectionServiceName, fraudDetectionServiceStub);

        final var transactionSource = new TransactionOrHeartbeatSource(
                conf.getTotalValues(), conf.getValueRate(), startTime);
        final var transactionStream = env.addSource(transactionSource)
                .setParallelism(conf.getValueNodes())
                .slotSharingGroup("transactions");
        final var ruleSource = new RuleOrHeartbeatSource(
                conf.getTotalValues(), conf.getValueRate(), conf.getValueBarrierRatio(),
                conf.getHeartbeatRatio(), startTime);
        final var ruleStream = env.addSource(ruleSource)
                .slotSharingGroup("rules");

        final var broadcastStateDescriptor = new MapStateDescriptor<>("BroadcastState", Void.class, Void.class);
        final var broadcastRuleStream = ruleStream.broadcast(broadcastStateDescriptor);

        transactionStream.connect(broadcastRuleStream)
                .process(new TransactionProcessManual(conf.getRmiHost(), fraudDetectionServiceName))
                .setParallelism(conf.getValueNodes())
                .slotSharingGroup("transactions")
                .map(new TimestampMapper())
                .setParallelism(conf.getValueNodes())
                .writeAsText(conf.getTransOutFile(), FileSystem.WriteMode.OVERWRITE);

        ruleStream
                .flatMap(new FlatMapFunction<RuleOrHeartbeat, Rule>() {
                    @Override
                    public void flatMap(final RuleOrHeartbeat ruleOrHeartbeat, final Collector<Rule> out) {
                        ruleOrHeartbeat.match(
                                rule -> {
                                    out.collect(rule);
                                    return null;
                                },
                                heartbeat -> null
                        );
                    }
                })
                .process(new RuleProcessManual(conf.getRmiHost(), fraudDetectionServiceName))
                .map(new TimestampMapper())
                .writeAsText(conf.getOutFile(), FileSystem.WriteMode.OVERWRITE);

        try {
            return env.execute("FraudDetection Manual Experiment");
        } finally {
            UnicastRemoteObject.unexportObject(fraudDetectionService, true);
            try {
                registry.unbind(fraudDetectionServiceName);
            } catch (final NotBoundException ignored) {

            }
        }
    }

    @Override
    public long getTotalEvents() {
        return conf.getValueNodes() * conf.getTotalValues() + conf.getTotalValues() / conf.getValueBarrierRatio();
    }

    @Override
    public long getOptimalThroughput() {
        return (long) (conf.getValueRate() * conf.getValueNodes() + conf.getValueRate() / conf.getValueBarrierRatio());
    }

}

public class TransactionProcessManual extends BroadcastProcessFunction<TransactionOrHeartbeat, RuleOrHeartbeat,
        Tuple3<String, Long, Instant>> {

    private Instant transactionPhysicalTimestamp = Instant.MIN;
    private Instant rulePhysicalTimestamp = Instant.MIN;
    private Tuple2<Long, Long> previousAndCurrentSum = Tuple2.of(0L, 0L);

    private final Queue<Transaction> transactions = new ArrayDeque<>();
    private final Queue<Rule> rules = new ArrayDeque<>();

    private final String rmiHost;
    private final String fraudDetectionServiceName;
    private transient ForkJoinService<Tuple2<Long, Long>, Tuple2<Long, Long>> fraudDetectionService;

    public TransactionProcessManual(final String rmiHost, final String fraudDetectionServiceName) {
        this.rmiHost = rmiHost;
        this.fraudDetectionServiceName = fraudDetectionServiceName;
    }

    @Override
    @SuppressWarnings("unchecked")
    public void open(final Configuration parameters) throws RemoteException, NotBoundException {
        final var registry = LocateRegistry.getRegistry(rmiHost);
        fraudDetectionService =
                (ForkJoinService<Tuple2<Long, Long>, Tuple2<Long, Long>>) registry.lookup(fraudDetectionServiceName);
    }

    @Override
    public void processElement(final TransactionOrHeartbeat transactionOrHeartbeat,
                               final ReadOnlyContext ctx,
                               final Collector<Tuple3<String, Long, Instant>> out) throws RemoteException {
        transactions.addAll(transactionOrHeartbeat.match(List::of, hb -> Collections.emptyList()));
        transactionPhysicalTimestamp = transactionOrHeartbeat.getPhysicalTimestamp();
        makeProgress(out);
    }

    @Override
    public void processBroadcastElement(final RuleOrHeartbeat ruleOrHeartbeat,
                                        final Context ctx,
                                        final Collector<Tuple3<String, Long, Instant>> out) throws RemoteException {
        rules.addAll(ruleOrHeartbeat.match(List::of, hb -> Collections.emptyList()));
        rulePhysicalTimestamp = ruleOrHeartbeat.getPhysicalTimestamp();
        makeProgress(out);
    }

    private void makeProgress(final Collector<Tuple3<String, Long, Instant>> out) throws RemoteException {
        final var currentTime = min(transactionPhysicalTimestamp, rulePhysicalTimestamp);
        while (!transactions.isEmpty() &&
                transactions.element().getPhysicalTimestamp().compareTo(currentTime) <= 0) {
            final var transaction = transactions.remove();
            while (!rules.isEmpty() &&
                    rules.element().getPhysicalTimestamp().isBefore(transaction.getPhysicalTimestamp())) {
                join(rules.remove());
            }
            update(transaction, out);
        }
        while (!rules.isEmpty() &&
                rules.element().getPhysicalTimestamp().compareTo(currentTime) <= 0) {
            join(rules.remove());
        }
    }

    private void update(final Transaction transaction, final Collector<Tuple3<String, Long, Instant>> out) {
        if (previousAndCurrentSum.f0 %
            out.collect(Tuple3.of("Transaction", transaction.val, transaction.getPhysicalTimestamp()));
        }
        previousAndCurrentSum.f1 += transaction.val;
    }

    private void join(final Rule rule) throws RemoteException {
        previousAndCurrentSum =
                fraudDetectionService.joinChild(getRuntimeContext().getIndexOfThisSubtask(), previousAndCurrentSum);
    }

}

public class RuleProcessManual extends ProcessFunction<Rule, Tuple3<String, Long, Instant>> {

    private final String rmiHost;
    private final String fraudDetectionServiceName;
    private transient ForkJoinService<Tuple2<Long, Long>, Tuple2<Long, Long>> fraudDetectionService;

    private Tuple2<Long, Long> previousAndCurrentSum;

    public RuleProcessManual(final String rmiHost, final String fraudDetectionServiceName) {
        this.rmiHost = rmiHost;
        this.fraudDetectionServiceName = fraudDetectionServiceName;
        this.previousAndCurrentSum = Tuple2.of(0L, 0L);
    }

    @Override
    @SuppressWarnings("unchecked")
    public void open(final Configuration parameters) throws RemoteException, NotBoundException {
        final var registry = LocateRegistry.getRegistry(rmiHost);
        fraudDetectionService =
                (ForkJoinService<Tuple2<Long, Long>, Tuple2<Long, Long>>) registry.lookup(fraudDetectionServiceName);
    }

    @Override
    public void processElement(final Rule rule,
                               final Context ctx,
                               final Collector<Tuple3<String, Long, Instant>> out) throws RemoteException {
        previousAndCurrentSum = fraudDetectionService.joinParent(
                getRuntimeContext().getIndexOfThisSubtask(), previousAndCurrentSum);
        out.collect(Tuple3.of("Rule", previousAndCurrentSum.f1, rule.getPhysicalTimestamp()));
    }

}

public class FraudDetectionService implements ForkJoinService<Tuple2<Long, Long>, Tuple2<Long, Long>> {

    private final int transactionParallelism;
    private final List<Semaphore> joinSemaphores;
    private final List<Semaphore> forkSemaphores;
    private final List<Tuple2<Long, Long>> states;

    public FraudDetectionService(final int transactionParallelism) {
        this.transactionParallelism = transactionParallelism;
        joinSemaphores = IntStream.range(0, transactionParallelism)
                .mapToObj(x -> new Semaphore(0)).collect(Collectors.toList());
        forkSemaphores = IntStream.range(0, transactionParallelism)
                .mapToObj(x -> new Semaphore(0)).collect(Collectors.toList());
        states = IntStream.range(0, transactionParallelism)
                .mapToObj(x -> Tuple2.of(0L, 0L)).collect(Collectors.toList());
    }

    @Override
    public Tuple2<Long, Long> joinChild(final int subtaskIndex, final Tuple2<Long, Long> state) {
        states.set(subtaskIndex, state);
        joinSemaphores.get(subtaskIndex).release();
        forkSemaphores.get(subtaskIndex).acquireUninterruptibly();
        return states.get(subtaskIndex);
    }

    @Override
    public Tuple2<Long, Long> joinParent(final int subtaskIndex, final Tuple2<Long, Long> state) {
        long currentSum = 0L;
        for (int i = 0; i < transactionParallelism; ++i) {
            joinSemaphores.get(i).acquireUninterruptibly();
            currentSum += states.get(i).f1;
        }

        // We use the fact that all the children have the same previous sum, which otherwise we would
        // get from the state passed as an argument. As a consequence, we're not using state at all, and
        // the caller is free to make it null. Alternatively, we could have the children pass only their
        // current sums, and get the previous sum from state passed here as an argument.
        final long previousSum = states.get(0).f0;
        for (int i = transactionParallelism - 1; i >= 0; --i) {
            final var previousAndCurrentSum = states.get(i);
            previousAndCurrentSum.f0 = currentSum;
            previousAndCurrentSum.f1 = 0L;
            forkSemaphores.get(i).release();
        }
        return Tuple2.of(previousSum, currentSum);
    }

}
\end{lstlisting}

\end{document}